\documentclass[mnsc,non-blindrev]{informs3}
\OneAndAHalfSpacedXI 


\usepackage{endnotes}
\usepackage[ruled]{algorithm}
\usepackage{algorithmicx,algpseudocode,amsfonts,subfigure,optidef}
\usepackage{helvet}
\usepackage{graphicx,comment,appendix}
\usepackage{hyperref}
\hypersetup{%
    pdfborder = {0 0 0}
}
\usepackage[capitalize,noabbrev]{cleveref}
\usepackage{tikz}
\usepackage{multirow,makecell}
\usepackage{subfigure,caption,microtype,booktabs} 

\usepackage{amsthm,amsmath,amssymb,soul}
\usepackage{comment,enumitem,diagbox}
\usepackage{subcaption}
\setlist{leftmargin=*}
\usepackage{xurl}

\newtheorem{theorem}{Theorem}[section]
\newtheorem{corollary}[theorem]{Corollary}
\newtheorem{lemma}[theorem]{Lemma}
\newtheorem{proposition}[theorem]{Proposition}

\theoremstyle{definition}
\newtheorem{example}[theorem]{Example}

\usepackage{natbib}
\bibpunct[, ]{(}{)}{,}{a}{}{,}%
\usepackage{bm,comment}


\ECRepeatTheorems

\EquationsNumberedThrough    


\allowdisplaybreaks
\begin{document}


\RUNAUTHOR{}

\RUNTITLE{}

\TITLE{Optimal Selection with Balanced Market Share: Static and Dynamic Assortment Optimization}

\ARTICLEAUTHORS{Omar El Housni \quad \quad Qing Feng \quad \quad Huseyin Topaloglu \\ School of Operations Research and Information Engineering, Cornell Tech, Cornell University  \\ \EMAIL{\{oe46,qf48,ht88\}@cornell.edu} \URL{}}

\ABSTRACT{
Assortment optimization is a critical tool for online retailers aiming to maximize revenue. However, optimizing purely for revenue can lead to unbalanced sales across products, potentially causing a long tail of low-selling products and products with excessively large market shares, both of which could be harmful to the seller. To address these issues, we introduce a market share balancing constraint that limits the disparity in expected sales between any two offered products to a factor of a given parameter $\alpha$. We study both static and dynamic assortment optimization under the multinomial logit (MNL) model with this fairness constraint. In the static setting, the seller selects a distribution over assortments that satisfies the market share balancing constraint while maximizing expected revenue. We show that this problem can be solved in polynomial time, and we characterize the structure of the optimal solution: a product is included if and only if its revenue and preference weight exceed certain thresholds. We further extend our analysis to settings with additional feasibility constraints on the assortment and demonstrate that, given a $\beta$-approximation oracle for the constrained problem, we can construct a $\beta$-approximation algorithm under the fairness constraint.
In the dynamic setting, each product has a finite initial inventory, and the seller implements a dynamic policy to maximize total expected revenue while respecting both inventory limits and the market share balancing constraint in expectation. We design a policy that is asymptotically optimal, with its approximation ratio converging to one as inventories grow large.
}


\KEYWORDS{Static assortment optimization, Dynamic assortment optimization, Approximation algorithms} 

\maketitle

%

\section{Introduction}

Assortment optimization is one of the fundamental problems in the field of revenue management. In this problem, the decision maker, hereafter referred to as the seller, chooses an assortment, i.e., a subset of products, from the entire universe of products to offer to customers in order to maximize a context-specific objective, such as revenue, profit or total market share. Assortment optimization finds its place in many practical business settings. For example, online retailers decide which products to display to customers in order to maximize their expected revenue. Online advertisers select an effective combination of advertisements in order to maximize user engagement and click-through rates. Motivated by the crucial role of assortment optimization in today's retailing industry, there has been a large stream of work that focuses on developing efficient algorithms that find the optimal or approximately optimal solution to assortment optimization problems.

{  While optimal assortment decisions can significantly help sellers maximize revenue, focusing exclusively on revenue maximization often causes significant imbalance in sales among different products. This is because the optimal assortment often includes a few highly popular products alongside many other less popular ones. Since popular products capture most customer attention, less popular items are often overshadowed and have limited chances to generate sales; as a result, such assortments can lead to highly unbalanced sales across products. Under such unbalanced sales, the seller would be faced with a few highly popular products attaining a large market share, along with a long tail of low-selling products. However, as we will discuss next, both a long tail of low-selling products and a small number of products with excessively large market shares can each lead to undesirable consequences for the seller in various ways.

First, having a long tail of low-selling products can negatively affect the seller, both from the logistics and supplier-relations perspective and from the customer’s perspective. From the aspect of logistics and supplier relationship, maintaining a long tail adds operational burden, requiring the seller to manage more products, hold and replenish inventory, and sustain a broader set of supplier relationships, which in turn increases operational costs. As noted in \cite{byrne2015the}, ``\textit{\dots long tails \dots represent a major financial commitment and disproportional cost burden for make-to-stock manufacturers. Too often, these items provide negative returns, undermining the margins of the entire portfolio}". Low-selling products also tend to exhibit larger demand uncertainty, leading to larger safety stocks needed for the seller and less accurate demand predictions, further worsening the revenue margins of these products. As also noted in \cite{byrne2015the}, ``{\it A finance executive looking at the capital costs would find that the bottom 50 percent of items require 23 times as much inventory per dollar of revenue than the top 10 percent. By looking at net profits, one would find that poor forecast accuracy led to lost revenue from stock-outs, and higher operating costs from transshipments and expedites, both which erode margins}". Furthermore, even if the seller is willing to offer certain low-selling products, it could be challenging for the seller to keep corresponding suppliers engaged in the long run. As shown in \cite{de2001competition}, low-selling products may leave the market due to various reasons, including competition with other products and being unable to cover fixed costs of selling on the platform. All these factors potentially make it challenging for the seller to maintain long-term engagement of suppliers with low-selling products, even if the seller is willing to do so. Finally, from the customer perspective, offering a long tail of low-selling products can also degrade the platform’s customer experience, as customers may become confused or overwhelmed when presented with a large number of low-selling items. For instance, it has been documented in \cite{mims2022the} that Amazon customers are getting increasingly confused and frustrated simply because an excessive long tail of products are offered on the platform.

On the other hand, having a few products with excessively large market shares can also be harmful to the seller. When a small number of products account for a large share of sales, the seller’s revenue becomes overly dependent on those items, leading to the undesirable situation of "revenue concentration". \cite{dealhub2025revenue} discussed several negative impacts of revenue concentration on the seller. One potential impact is that it increases the seller’s risk exposure and limits the ability to expand into new markets. As mentioned in \cite{dealhub2025revenue}, ``{\it If one product makes up the bulk of your sales, your entire business hinges on its continued success. A competitor’s innovation, shifting customer preferences, or a change in technology can instantly erode your market share. This kind of dependence also limits growth. Expanding into new markets or upselling existing customers is harder when you have only one core product to offer.}" Another drawback of revenue concentration is that it puts suppliers of high–market-share products in a strong bargaining position. As a result, the seller’s negotiating power is weakened, as noted by \cite{dealhub2025revenue}, ``{\it partners who represent a large share of your revenue gain leverage over your pricing and terms}". Having one single product or supplier aquiring a large market share also risks violation of the antitrust regulations. For example, in 2004, an antitrust investigation was triggered against Dentsply, a monopolist manufacturer-supplier of dental supplies, which dominates the market with a 75\%-80\% market share\footnote{https://www.justice.gov/atr/case-document/us-v-dentsply-brief-united-states-redacted}.

To address these issues, one effective approach is to introduce additional constraints that enforce balanced sales across the products offered by the seller. While there are multiple potential approaches to enforce such balancing constraints, in this paper, we adopt a {\bf market share balancing constraint}, requiring that the market share of any two offered products may differ by at most a factor of $\alpha$. With a single balancing parameter $\alpha$, the seller is able to maintain a relative balance in sales among offered products, without needing to tune a large number of parameters or know the absolute values of target market shares of products. The seller can manage the tradeoff between revenue and balance in market share by controlling the balancing parameter $\alpha$, where a smaller $\alpha$ brings a higher revenue, while a larger $\alpha$ maintains a stronger balance in market shares. We further numerically study the tradeoff between revenue and balance in market shares in Section~\ref{sec:numerical_tradeoff}, where we show that in practice, the seller can attain a relatively balanced market share without suffering significant revenue loss.

}

\subsection{Main Contributions}

In this paper, we study the assortment optimization problem under the market share balancing constraint. This constraint requires that the market shares of any two products offered by the seller differ by at most a factor of $\alpha$, where $0 < \alpha \leq 1$ is an exogenously specified balancing parameter. We consider both static and dynamic versions of the problem under this constraint. In the static setting, the seller selects a distribution over assortments that satisfies the market share balancing constraint and maximizes expected revenue. In the dynamic setting, the seller starts with a finite initial inventory for each product (without replenishment), and customers arrive and make purchase decisions sequentially over time. The seller must choose a dynamic control policy that maximizes total expected revenue while ensuring that the expected total sales across products satisfy the market share balancing constraint. To the best of our knowledge, this work is the first to introduce fairness constraints into dynamic assortment optimization.
In what follows, we formally define the two problems and present our main algorithmic contributions for each setting.

\vspace{2mm}
\noindent\textbf{Static Assortment Optimization.} We begin with the static assortment optimization problem under balanced market shares. In this setting, customers make purchase decisions according to a multinomial logit (MNL) model, and the seller selects a distribution over assortments that satisfies the market share balancing constraint. Specifically, under the induced distribution, the purchase probabilities of any two offered products must differ by at most a factor of $\alpha$.
We allow for randomized assortments, as they provides the seller with more flexibility, thus help the seller achieve higher revenue while maintaining a balanced market share. Given our focus on randomized assortments, a natural question arises: how much additional revenue can be gained through randomization compared to deterministically offering a single assortment? 
To answer this, we compare the optimal expected revenue of the randomized (static) problem with that of a deterministic variant, in which the seller must offer a single assortment subject to the same market share balancing constraint. Letting $n$ denote the number of products, we show that the optimal revenue in the randomized setting can be up to a factor of $\min\{n, 1/(1 - \alpha)\}$ larger than that of the deterministic variant. This indicates that the benefit of randomization can grow linearly in $n$ as $\alpha$ approaches one.

\noindent\underline{\it Algorithm for the Static Problem.} We develop a polynomial-time algorithm that exactly solves the static problem. Our approach is based on uncovering a novel structure in the optimal solution, characterized by two thresholds: one on revenue and one on preference weight. Specifically, we prove that the set of offered products is nested with respect to both revenue and preference weight, and a product is offered if and only if its revenue and preference weight both exceed certain thresholds (see Theorem~\ref{thm:randomized}). This structural result contrasts sharply with the classical unconstrained MNL assortment optimization problem, where the optimal solution is determined solely by a threshold on revenue.
Leveraging this structure, we solve the problem by enumerating all $O(n^2)$ possible threshold pairs and constructing a candidate solution for each. We also perform a sensitivity analysis of the thresholds with respect to $\alpha$, showing that the revenue threshold increases as $\alpha$ decreases, while the preference weight threshold may either increase or decrease.

\noindent\underline{\it Algorithm for the Static Problem with Additional Constraints.} We extend our framework to handle additional constraints on the set of products that may be offered. In this setting, in additional to the market share balancing constraint, the set of offered products must also conform to other feasibility constraints (e.g., capacity, budget, or structural limitations). We prove that if the classical MNL assortment optimization problem under the same type of constraints admits a $\beta$-approximation oracle, then the constrained version of our problem also admits a $\beta$-approximation algorithm (see Theorem~\ref{thm:randomized_constrained}).
To establish this result, we reduce our constrained static problem to a standard MNL assortment optimization problem with a modified set of preference weights. Applying the $\beta$-approximation oracle to this transformed problem yields a $\beta$-approximate solution for our original constrained setting.

\vspace{2mm}
\noindent\textbf{Dynamic Assortment Optimization.} In the dynamic setting, each product $i$ has an initial (non-replenishable) inventory $c_i$, and customers arrive sequentially over a time horizon of $T$ periods. Each customer's choice behavior is modeled by an MNL model. The seller's decision is a dynamic control policy that maps the sales history to a distribution over assortments at each time period. We impose two constraints: (i) the realized sales of product $i$ must not exceed $c_i$ with probability one, and (ii) the total expected sales across products by the end of the horizon must satisfy the market share balancing constraint.
Due to the high dimensionality of the state space, computing an exact optimal dynamic policy is computationally intractable. As a first step, we introduce an upper bound problem that provides a provable bound on the optimal value of the dynamic problem. We show that this upper bound problem is NP-hard and develop a fully polynomial-time approximation scheme (FPTAS) to solve it approximately.

 \noindent\underline{\it Algorithm for the Dynamic Problem.} We provide an asymptotically optimal policy for the dynamic problem, whose approximation ratio converges to one as the initial inventories get large. We construct a policy in the following form: Each product $i$ is associated with a purchase probability $\hat{x}_i$. In each time period, the seller chooses a distribution over assortments such that the purchase probability of product $i$ is $\hat{x}_i$ if the product has remaining inventory, or zero if the product has no remaining inventory. We establish an algorithm that constructs $\hat{\bm{x}}$ based on an approximate solution of the upper bound problem such that the policy satisfies the market share balancing constraint in expectation. 
 We show that our policy achieves strong approximation guarantees (see Theorem~\ref{thm:inventory_constant_1}). Specifically, for an FPTAS with precision parameter $\epsilon>0$, the policy guarantees a $(1/2 - \epsilon)$-approximation to the dynamic problem, and the approximation ratio approaches $1 - \epsilon$ as the initial inventories get large. 
 
 {  In Appendix~\ref{sec:joint_inventory_assortment}, we further consider the joint inventory stocking and dynamic assortment optimization problem with balanced market share, where besides sequentially offering products, the initial inventory stocking $\bm{c}$ is also part of the decision. Specifically, the seller first chooses the initial inventory $\bm{c}$ such that the total initial inventory $\sum_{i\in \mathcal{N}}c_i$ does not exceed a maximum capacity $K$, then chooses a dynamic policy of offering assortments that satisfies the inventory constraint and the market share balancing constraint. To solve the joint inventory and dynamic assortment optimization problem, we first establish an upper bound problem that provides an upper bound on its optimal value. We establish an FPTAS for the upper bound problem. Based on an approximate solution of the upper bound problem, we establish a policy that attains a constant fraction of the optimal expected revenue. We also establish an alternative policy that is asymptotically optimal, whose approximation ratio converges to one as the maximum capacity $K$ gets large.}

\vspace{2mm}
\noindent\textbf{Numerical Experiments.} {  We conduct numerical experiments using MNL models calibrated from real data. We compare our market share balancing constraint and the absolute market share balancing constraint, which enforces absolute lower and upper bounds on the market share of every offered product, and study how effective our approach is in maintaining a tradeoff between revenue and balanced market shares compared to the alternative approach. The results show that if we measure balance in market share by ratio, then our approach always generates a higher revenue than the alternative approach with absolute market share balancing constraint, often by a significant margin, while maintaining similar level of balance among market shares. Furthermore, if we measure balance in market shares using absolute difference, then our approach is still able to attain a higher or comparable revenue compared to the alternative approach while maintaining a similar level of balance, even though our approach is not optimized for this metric of balance. We further use numerical experiments to study how large the balancing parameter $\alpha$ can get if the seller is willing to lose a certain fraction of revenue, and how the maximum and minimum market shares change accordingly. Our numerical results show that by enforcing the market share balancing constraint, the seller can attain a relatively balanced market share without incurring significant revenue loss. One can also observe that enforcing the market share balancing constraint can effectively increase the minimum nonzero market share and reduce the maximum market share.}

{  Furthermore, in Appendix~\ref{sec:numerical},} we also conduct numerical experiments to evaluate the empirical performance of our policy for the dynamic problem using synthetically generated test instances. We compare our policy against two resolving heuristics that also enforce balanced market shares. These heuristics track the purchase probabilities over all previous time periods and, at each resolution step, solve an optimization problem in which the cumulative purchase probabilities must satisfy the market share balancing constraint. Both heuristics are guaranteed to satisfy the constraint by design.
Our results show that the proposed policy consistently outperforms both resolving heuristics across most instances. Furthermore, we observe that the performance of our policy improves as the initial product inventories increase, which aligns with our theoretical result that the policy is optimal.

\subsection{Related Literature}

Our work is related to three streams of literature: static assortment optimization problem under MNL, dynamic assortment optimization problem with resource constraints, and fairness in assortment planning.

\noindent\textbf{Static Assortment Optimization under MNL:} Static assortment optimization under MNL has been studied extensively in the revenue management literature. \cite{talluri2004revenue} are among the first to consider assortment optimization problem under MNL, and they show that the unconstrained optimal assortment under MNL is revenue-ordered. In other words, the optimal assortment contains every product whose revenue is above a certain threshold. Thus, the unconstrained assortment optimization problem under MNL can be solved in polynomial time by optimizing only over revenue-ordered assortments. \cite{rusmevichientong2010dynamic} show that the cardinality constrained assortment optimization problem under MNL can be solved in polynomial time. As an extension of \cite{rusmevichientong2010dynamic}, \cite{sumida2021revenue} further consider the assortment optimization under MNL with totally unimodular constraints, and show that the problem can be solved in polynomial time via an equivalent linear program. \cite{desir2022capacitated} consider the assortment optimization problem under MNL with capacity constraints. They show that the capacitated assortment optimization under MNL is NP-hard, and they provide an FPTAS for the problem. Other research works that consider assortment optimization problems under MNL include \cite{gao2021assortment}, \cite{el2023joint}, \cite{housni2023maximum}.

\noindent\textbf{Dynamic Assortment Optimization with Resource Constraints:} There is significant work on dynamic assortment optimization with resource constraints. In this problem, the seller sells products over multiple time periods with limited capacities of resources, and selling each product consumes capacities of a combination of resources. The problem is first considered by \cite{gallego1994optimal}, where they consider a single product pricing setting and show that if the expected demand and the resource capacity scale with the same rate $\theta$, then there exists a policy with a $(1-O(1/\sqrt{\theta}))$-approximation ratio. \cite{gallego1997multiproduct} and \cite{talluri1998analysis} generalize the results to the setting of multiple resources and multiple products, and the seller makes an accept or reject decision for each product request. \cite{gallego2004managing} and \cite{liu2008choice} are among the first to incorporate choice models into dynamic assortment optimization. Letting $c_{\min}$ be the smallest resource capacity, \cite{rusmevichientong2020dynamic}~establish a policy that achieves an approximation ratio of $1-O(1/\sqrt[3]{c_{\min}})$. \cite{ma2021dynamic}, \cite{bai2022coordinated}, \cite{feng2024near}, \cite{aouad2023nonparametric}, \cite{bai2023fluid}, \cite{li2024revenue} provide policies with an approximation ratio of $1-O(1/\sqrt{c_{\min}})$. \cite{golrezaei2014real} also consider an online assortment optimization problem with limited inventory for each product, but their main focus is to establish an algorithm with guarantees in competitive ratio in the setting where customer arrivals are adversarially given. In this paper, we consider enforcing the market share balancing constraint in dynamic assortment optimization with limited inventory for each product, and we establish a policy that attains an approximation ratio of $1-O(1/\sqrt{c_{\min}})$.

\noindent\textbf{Fairness in Assortment Planning:} In recent years there has been growing interest in assortment optimization problems with fairness constraints. Some early works in this line of research include \cite{chen2022fair}, \cite{lu2023simple}, \cite{barre2023assortment}, \cite{housni2024assortment} and \cite{zhu2024unified}. These works introduce multiple approaches to define fairness. \cite{chen2022fair} consider a randomized assortment optimization problem in which the difference in certain outcome (e.g., market share, revenue, customer exposure) between products with similar quality should not exceed an exogenously given upper bound. \cite{lu2023simple} and \cite{barre2023assortment} define fairness constraints as minimum customer exposure (also referred to as visibility constraints in \cite{barre2023assortment}). \cite{lu2023simple} consider a randomized assortment optimization problem in which the probability of each product being included in the assortment has to exceed a certain threshold. \cite{barre2023assortment} consider an assortment customization problem over a finite number of customers, where each product needs to be visible to at least a certain number of customers. As an extension, \cite{housni2024assortment} generalize the results by enforcing minimum customer exposure of potentially overlapping product categories instead of individual products. {  All aforementioned works enforce fairness constraints over the entire universe of products. One potential drawback of enforcing fairness constraints over all products is that, it often requires the seller to offer a large number of products, many of which could have a negative impact on the seller's revenue. This could further result in significant revenue loss for the seller, making the fairness constraint undesirable for the seller.} 

{  To the best of our knowledge, \cite{zhu2024unified} is the only prior work that considers enforcing fairness constraints only over offered products rather than the entire universe of products. However, different from our paper, they introduce fairness constraints by requiring the total market share of each product class to be either zero or above a certain threshold. We refer to Section~\ref{sec:model_discussion} for a comparison between our market share balancing constraint and a generalized version of the constraint in \cite{zhu2024unified}, which enforces absolute lower bounds and upper bounds on the market share of each product offered with positive probability.}

In this paper, we define fairness constraint as the market share of any two offered products differing by at most a factor of $\alpha$. {  In our setting, the seller has the flexibility of deciding what products to offer, and balanced market shares are guaranteed only within the set of offered products. Such flexibility of product selection marks the key difference between our approach with existing approaches such as \cite{chen2022fair}, \cite{lu2023simple} and \cite{barre2023assortment}. It makes our approach less restrictive compared to existing approaches, and the seller is able to attain a higher revenue while still enjoying the benefit of balanced market shares.} Furthermore, in contrast to fairness constraints defined in most of the existing works, which usually rely on a large number of parameters, the market share balancing constraint considered in this paper only depends on one parameter $\alpha$, potentially making our model easier to implement in practice. Finally, all aforementioned works on fairness in assortment planning focus on static assortment optimization. To the best of our knowledge, our work is the first to introduce balancing constraints to dynamic assortment optimization settings.

\section{Static Assortment Optimization Problem}

In this section, we formally introduce the static assortment optimization problem with balanced market share. {  Then we provide some discussions on our problem formulation.} Next, we quantify the value of randomization by comparing the optimal value of the static problem~with that of a deterministic variant of the static problem.

\subsection{Problem Formulation}\label{sec:definition_static}

Let $\mathcal{N}=\{1,2,\dots,n\}$ be the universe of products. Each product $i\in\mathcal{N}$ has a revenue $r_i>0$. The option of leaving without purchasing any product is represented as product zero, and referred to as the no-purchase option.

\noindent\textbf{MNL Choice Model:} We assume that customers make choices based on an MNL model. Under this model, each product $i\in\mathcal{N}$ has a preference weight $v_i>0$ capturing the attractiveness of the product. Without loss of generality, we normalize the preference weight of the no-purchase option as $v_0=1$. Under MNL, if the customer is offered assortment $S\subseteq\mathcal{N}$, then for all $i\in S$, the purchase probability of product $i$ is given by
\begin{equation*}
\phi(i,S)=\dfrac{v_i}{1+\sum_{j\in S}v_j}.
\end{equation*}
The probability of no-purchase is given by $\phi(0,S)=1/(1+\sum_{j\in S}v_j)$. For all $i\notin S$ we define $\phi(i,S)=0$. The expected revenue gained from a customer who is offered assortment $S$ is given by
\begin{equation}
R(S)=\sum_{i\in S}r_i\phi(i,S)=\dfrac{\sum_{i\in S}r_iv_i}{1+\sum_{i\in S}v_i}.\label{eq:expected_revenue}
\end{equation}

\noindent\textbf{Problem Definition:} In the static problem, the decision of the seller is to choose a distribution $q(\cdot)$ over assortments in $\mathcal{N}$. Whenever a customer arrives, the seller randomly draws an assortment $S$ with probability $q(S)$ and the assortment $S$ is offered to the customer. Thus the purchase probability of product $i$ is $\sum_{S\subseteq\mathcal{N}}\phi(i,S)q(S)$. The seller is given a balancing parameter $0<\alpha\leq 1$, and we require that the purchase probabilities of two products offered with positive probability differ by at most a factor of $\alpha$. In other words, the seller has the flexibility to select a subset of products to offer and guarantees a balanced market share only within selected products. As long as a product is selected by the seller to offer to customers, the constraint guarantees a purchase probability of the product that is at least $\alpha$ times the maximum purchase probability. We refer to this type of constraint as the {\it market share balancing constraint}, and the problem is referred to as the static assortment optimization problem with balanced market share, or \ref{prob:randomized_0} in short. The problem is formally defined as
\begin{equation}
\label{prob:randomized_0}
\tag{\texttt{BMS}}
\begin{aligned}
&\max_{q(\cdot)} && \sum_{S\subseteq\mathcal{N}}R(S)q(S)\\
&\textup{s.t.} && \sum_{S\subseteq\mathcal{N}}q(S)=1,\\
&&&q(S)\geq 0,\ \forall S\subseteq\mathcal{N},\\
&&&\sum_{S\subseteq\mathcal{N}}\phi(i,S)q(S)\in\{0\}\cup\Big[ \alpha\cdot\max_{j\in\mathcal{N}}\sum_{S\subseteq\mathcal{N}}\phi(j,S)q(S),\infty\Big),\ \forall i\in\mathcal{N}.
\end{aligned}
\end{equation}
Here, $R(S)$ is the expected revenue by offering assortment $S$ given by~\eqref{eq:expected_revenue}, and $q(\cdot)$ is a probability mass function over assortments. The validity of $q(\cdot)$ is guaranteed by the first and second constraints of~\ref{prob:randomized_0}. The last set of constraints is the market share balancing constraint, requiring that the purchase probability of each product should be either zero or at least $\alpha$ of the maximum purchase probability over all products. In this way, we ensure that the market shares of any two products offered with positive probability differ by at most a factor of $\alpha$. 

If the seller deterministically offers any assortment $S\subseteq\mathcal{N}$, then for any $i,j\in S$, the ratio of purchase probability of product $i$ over that of product $j$ is equal to $\phi(i,S)/\phi(j,S)=v_i/v_j\geq v_{\min}/v_{\max}$, where $v_{\min}$ and $v_{\max}$ are the minimum and maximum preference weights respectively. Therefore when $\alpha\leq v_{\min}/v_{\max}$, deterministically offering any assortment is feasible to~\ref{prob:randomized_0}, thus in this case~\ref{prob:randomized_0}~is reduced to the unconstrained assortment optimization problem under MNL. As $\alpha$ increases, we gradually strengthen the market share balancing constraint, and when $\alpha=1$, we are enforcing the strongest market share balancing constraint, requiring all offered products to have exactly the same purchase probability.

We present an equivalent sales-based reformulation of~\ref{prob:randomized_0}. The reformulation is called sales-based because it uses purchase probabilities of different products as decision variables instead of offer probabilities of different assortments. For each $i\in\mathcal{N}$, we use $x_i$ to capture the purchase probability of product $i$, and we use $x_0$ to capture the probability of no-purchase. By definition the corresponding objective function is $\sum_{i\in\mathcal{N}}r_ix_i$, and the purchase probabilities $\bm{x}$ should satisfy $x_0+\sum_{i\in\mathcal{N}}x_i=1$. Furthermore, since $\phi(i,S)\leq v_i\phi(0,S)$ for any $i\in\mathcal{N}$ and any $S\subseteq\mathcal{N}$, under any distribution over assortments the purchase probability of product $i$ is at most $v_i$ times the probability of no-purchase. Therefore we enforce $0\leq x_i\leq v_ix_0$ for all $i\in\mathcal{N}$ to ensure that the purchase probabilities $\bm{x}$ are valid under MNL. Finally, by the market share balancing constraint, we further have that for all $i\in\mathcal{N}$, $x_i$ should be either zero or at least $\alpha\cdot\max_{j\in\mathcal{N}}x_j$. Combining the discussions above we obtain the sales-based reformulation of~\ref{prob:randomized_0}, which is formally defined in~\ref{prob:randomized}, given by
\begin{equation}\label{prob:randomized}
\tag{\texttt{BMS-Sales}}
\begin{aligned}
&\max_{\bm{x}} && \sum_{i\in\mathcal{N}}r_ix_i\\
&\textup{s.t.} && x_0+\sum_{i\in\mathcal{N}}x_i=1,\\
&&&0\leq x_i\leq v_ix_0,\ \forall i\in\mathcal{N},\\
&&&x_i\in\{0\}\cup\Big[\alpha\cdot\max_{j\in\mathcal{N}}x_j,\infty\Big),\ \forall i\in\mathcal{N}.
\end{aligned}
\end{equation}
{  We remark that similar sales-based formulation has also been considered in existing works such as \cite{topaloglu2013joint} and \cite{gallego2015general}.} For any feasible solution $q(\cdot)$ of~\ref{prob:randomized_0}, by setting $x_i=\sum_{S\subseteq\mathcal{N}}q(S)\phi(i,S)$ for all $i\in\mathcal{N}\cup\{0\}$, we obtain a feasible solution $\bm{x}$ to~\ref{prob:randomized} with the same objective value. Furthermore, given any feasible solution $\bm{x}$ of \ref{prob:randomized}, we are able to construct a distribution over assortments $q(\cdot)$ feasible to \ref{prob:randomized_0} with the same objective. Suppose without loss of generality we index the products as $x_1/v_1\geq x_2/v_2\geq\dots\geq x_n/v_n$. Using the assortments $S_i=\{1,2,\dots,i\}$ and $S_0=\varnothing$, we define a distribution over assortments $q(\cdot)$ based on $\bm{x}$ as follows. We set $q(S)=0$ for all $S\notin\{S_0,S_1,\dots,S_n\}$. Defining $v_0=1$ and $x_{n+1}=0$ for notational uniformity, for all $i\in\{0,1,\dots,n\}$, we define $q(S_i)$ as
\begin{equation}\label{eq:sales_to_distribution}
q(S_i)=\Big(\dfrac{x_i}{v_i}-\dfrac{x_{i+1}}{v_{i+1}}\Big)\Big(1+\sum_{j\in S_i}v_j\Big).
\end{equation}
As shown in Theorem 1 of \cite{topaloglu2013joint}, if the purchase probabilities $\bm{x}$ satisfy the first two constraints in \ref{prob:randomized}, then the distribution over assortments $q(\cdot)$ defined in \eqref{eq:sales_to_distribution} is able to attain the purchase probabilities $\bm{x}$. Therefore we conclude that~\ref{prob:randomized}~is equivalent to~\ref{prob:randomized_0}. Note that \ref{prob:randomized} is not a linear program because of the non-linearity in the last set of constraints. In the forthcoming Section \ref{sec:randomized}, we will introduce an algorithm that solves \ref{prob:randomized} in polynomial time.

{ 

\subsection{Model Discussion}\label{sec:model_discussion}

Besides our approach of enforcing market share balancing constraint, an alternative constraint is to impose an absolute upper bound and lower bound on the market share of each product offered by the seller. We refer to this constraint as the {\it absolute market share balancing constraint}. The absolute market share balancing constraint is an extension to the constraint considered in \cite{zhu2024unified}, which requires that the market share of any product offered by the seller should exceed a certain lower bound. While it is difficult to argue that one approach uniformly dominates the other, our approach has the following advantages:

\begin{enumerate}

\item {\bf Single parameter:} Our market share balancing constraint depends on one single parameter $\alpha$. The seller can easily control the tradeoff between revenue and balance in market shares by controlling the parameter $\alpha$. On the other hand, if the seller enforces both an absolute lower bound and upper bound, the constraint will depend on more parameters, making it harder for the seller to tune these parameters in order to maintain a tradeoff between revenue and balance in market shares.

\item {\bf No need for knowledge of demand:} Due to uncertainty in demand and market size, in practice it is hard to accurately estimate the sales volume of each product beforehand. If the seller would like to maintain balanced sales via enforcing absolute lower and upper bounds on market shares, it can be challenging to set appropriate absolute lower and upper bounds on market shares beforehand. Our market share balancing constraint, on the other hand, guarantees relative balance among market shares, which we believe to be easier to control via the parameter $\alpha$ and does not require the seller to know the appropriate market share lower and upper bounds.

\item {\bf Adaptivity to market changes:} In practice, the product universe available to the seller may change from time to time, with either new products entering the market or existing products leaving the market. Such market changes could further result in changes in market shares. Since our market share balancing constraint focuses on relative balance in market shares, even if such market changes occur, the seller can still maintain the same level of relative balance by using the same $\alpha$. On the other hand, if the seller enforces absolute lower bounds and upper bounds on market shares, then as new products enter the market or existing products leave, the market shares may change and previous lower bounds and upper bounds may not be valid, thus the seller may need to accordingly change the market share lower and upper bounds whenever market changes occur.

\item {\bf Tractability of the assortment problem:} As shown in Theorem \ref{thm:randomized}, our problem with market share balancing constraint is polynomial time solvable. Furthermore, as shown in Theorem \ref{thm:randomized_constrained}, even if we enforce additional constraints of the set of offered products, we can still achieve the same approximation ratio as the corresponding standard constrained assortment optimization problem. On the other hand, \cite{zhu2024unified} show that even if the seller only enforces a lower bound on the market share of each offered products, the problem is already NP-hard. If the seller enforces additional constraints on the set of offered products, the problem can be more challenging or even infeasible.

\end{enumerate}

We also refer to Section~\ref{sec:numerical_real} for numerical experiments comparing the two approaches using real-world data.

}

\subsection{Value of Randomization}\label{sec:value_randomization}

We study the value of randomization by comparing the optimal value of~\ref{prob:randomized_0}~with that of a deterministic variant of the static problem. In the deterministic variant, the seller deterministically offers a single assortment such that the market share balancing constraint is satisfied, without the flexibility of randomizing over assortments as in~\ref{prob:randomized_0}. In other words, we enforce $q(S)\in\{0,1\}$ for all $S\subseteq\mathcal{N}$ in the deterministic variant. We refer to the deterministic variant of the static problem as \texttt{BMS-Deterministic}.

We show that the gap between the optimal value of~\ref{prob:randomized_0}~and that of~\texttt{BMS-Deterministic}~is at the order of $\min\{1/(1-\alpha),n\}$.

\vspace{0.5em}

\begin{proposition}\label{thm:value_randomization}
The optimal value $R^*$ of~\ref{prob:randomized_0}~and the optimal value $R^*_{det}$ of~\texttt{BMS-Deterministic}~satisfy
\begin{equation}
R^*\leq \min\Big\{\dfrac{2}{1-\alpha},n\Big\}R^*_{det}.\label{eq:gap_upper_bound}
\end{equation}
Furthermore, for any $0<\alpha\leq 1$ and any $n\in\mathbb{Z}^+$, there exists an instance with $n$ products such that
\begin{equation}
R^*\geq \min\Big\{\dfrac{1}{2(1-\alpha)},\dfrac{n}{2}\Big\}R^*_{det}.\label{eq:gap_lower_bound}
\end{equation}
\end{proposition}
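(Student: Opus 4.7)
The proof proceeds in two parts matching the two inequalities of Proposition~\ref{thm:value_randomization}.

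\emph{Upper bound.} For the factor $n$, I would use the sales-based reformulation~\ref{prob:randomized}. For any feasible $\bm{x}^*$, the identity $x_i^*=\sum_{S\ni i}q(S)\phi(i,S)$ combined with $\phi(i,S)\leq v_i/(1+v_i)$ whenever $i\in S$ gives $x_i^*\leq v_i/(1+v_i)$. Multiplying by $r_i$, we obtain $r_ix_i^*\leq r_iv_i/(1+v_i)=R(\{i\})$, and since each singleton $\{i\}$ is trivially feasible for the deterministic problem, $R(\{i\})\leq R^*_{det}$. Summing over $i\in\mathcal{N}$ yields $R^*=\sum_ir_ix_i^*\leq nR^*_{det}$. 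For the factor $2/(1-\alpha)$, I would select $i^*\in\arg\max_{i\in I^*}r_i$, where $I^*=\{i:x_i^*>0\}$, use $R^*\leq r_{i^*}(1-x_0^*)$, and compare with $R^*_{det}\geq R(\{i^*\})=r_{i^*}v_{i^*}/(1+v_{i^*})$. The balance constraint $x_i^*\geq\alpha\max_jx_j^*$ and the MNL bound $x_i^*\leq v_ix_0^*$ link $v_{i^*}$ with $x_0^*$ and $1-x_0^*$, and a case split on whether $v_{i^*}$ lies above or below a threshold related to $\alpha$ yields the factor $2/(1-\alpha)$.

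\emph{Lower bound.} I would construct two explicit instance families, each realizing one term of the minimum. For the factor $1/(2(1-\alpha))$, take $v_i=\epsilon\alpha^{i-1}$ and $r_i=1/v_i$ with $\epsilon>0$ small. Because $v_j/v_i=\alpha^{j-i}<\alpha$ for $j-i\geq 2$, only singletons and consecutive pairs are deterministically feasible, giving $R^*_{det}\to R(\{i,i+1\})=2/(1+v_i+v_{i+1})\to 2$. On the randomized side, pushing the purchase probabilities to the boundary of the balance constraint (for instance $x_i=\epsilon\alpha^{n-2}x_0$ for $i<n$ and $x_n=\epsilon\alpha^{n-1}x_0$) gives $R^*\to(2-\alpha-\alpha^{n-1})/(1-\alpha)$, and the resulting ratio $(2-\alpha-\alpha^{n-1})/(2(1-\alpha))$ dominates $1/(2(1-\alpha))$ as soon as $\alpha^{n-1}\leq 1-\alpha$.

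For the factor $n/2$, which is the binding term of the minimum when $\alpha$ is close to $1$, I would take $v_i$ all distinct but clustered near a common small value $\epsilon$ and $r_i=1/v_i$. Under balance with $\alpha\to 1$, deterministic feasibility collapses to singletons, giving $R^*_{det}\to 1$. Randomizing with equal purchase probabilities $x_i=y$ constrained by $y\leq v_{\min}x_0$ yields $R^*=y\sum_i1/v_i\to n$, so the ratio approaches $n\geq n/2$. Carefully interpolating the cluster width $v_{\max}/v_{\min}$ with the balance threshold $1/\alpha$ allows one to extend this construction to intermediate $\alpha$ regimes.

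\emph{Main obstacle.} The hardest step is the $2/(1-\alpha)$ upper bound: the balance constraint and the MNL structure interact in a nontrivial way, and extracting the sharp constant $2/(1-\alpha)$ without an extraneous factor of $n$ requires a careful case analysis on $v_{i^*}$ (and possibly a more refined comparison assortment than $\{i^*\}$ in the regime where $v_{i^*}$ is small). The lower-bound constructions, by contrast, reduce to routine verifications once the right parameters are chosen.
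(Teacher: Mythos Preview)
Your argument for the factor $n$ is correct and essentially matches the paper's (the paper picks $k=\arg\max_i r_ix_i^*$ and bounds $R^*\le n\cdot r_kx_k^*\le n\cdot R(\{k\})$, which is the same idea).

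The genuine gap is in the $2/(1-\alpha)$ bound. Comparing $R^*$ against a singleton $\{i^*\}$ cannot yield this factor. Concretely, take the paper's own lower-bound instance for $\alpha\le 1-1/n$: $v_i=\alpha^i(1-\epsilon)^i\epsilon$, $r_i=1/v_i$. Here $i^*=\arg\max_{i\in I^*}r_i=n$, $v_{i^*}=v_n$ is tiny, and in the optimal randomized solution $1-x_0^*\approx nv_n$, so $r_{i^*}(1-x_0^*)/R(\{i^*\})\approx (1/v_n)\cdot nv_n/1=n$, not $2/(1-\alpha)$. Any ``case split on $v_{i^*}$'' still only sees one product and cannot recover the missing factor; you correctly flag this as the hard step, but the proposal does not supply the missing idea.

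What the paper does instead is use the structural characterization of the randomized optimum (Corollary~\ref{cor:optimal_x}): the offered products all satisfy $r_i\ge R^*$ and $v_i\ge\underline v$, with $w_i^*=\min\{v_i,\underline v/\alpha\}$. It then \emph{buckets} these products by preference weight into $A=\{i:\underline v\le v_i\le\underline v/\alpha,\ r_i\ge R^*\}$ and $B=\{i:\underline v/\alpha<v_i\le\underline v/\alpha^2,\ r_i\ge R^*\}$. Each of $A,B$ is deterministically feasible (weights within a factor $\alpha$), and a short calculation using a second feasible point of Problem~\eqref{prob:randomized_linearized} shows $\max\{\sum_{i\in A}(r_i-R^*)v_i,\sum_{i\in B}(r_i-R^*)v_i\}\ge(1-\alpha)R^*/2$, whence $R^*_{det}\ge\max\{R(A),R(B)\}\ge(1-\alpha)R^*/2$. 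The key ingredient you are missing is this two-bucket comparison against \emph{multi-product} deterministic assortments, which in turn rests on the threshold structure of the optimal $\bm w^*$.

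For the lower bound, your constructions are in the right spirit and close to the paper's, but your case split (``$\alpha^{n-1}\le 1-\alpha$'') does not cleanly align with which term of the $\min$ is binding. The paper splits at $\alpha>1-1/n$ versus $\alpha\le 1-1/n$, which directly determines whether $\min\{1/(2(1-\alpha)),n/2\}=n/2$ or $=1/(2(1-\alpha))$; in both cases it sets all $x_i$ equal to $v_n/(1+nv_n)$ rather than your staggered choice, which simplifies the arithmetic. Your ideas would likely go through with more care, but the upper-bound gap is the substantive issue.
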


\vspace{0.5em}

Proof of Proposition~\ref{thm:value_randomization} is provided in Appendix~\ref{sec:thm:value_randomization}. We remark that the upper bound provided in \eqref{eq:gap_upper_bound} matches the lower bound provided in \eqref{eq:gap_lower_bound} up to a constant, thus the the gap between optimal value of~\ref{prob:randomized_0}~and that of~\texttt{BMS-Deterministic}~is at the order of $\min\{1/(1-\alpha),n\}$. When $\alpha$ is close to one, the gap is at the order of $n$. As we relax the market share balancing constraint by decreasing $\alpha$, we will also observe smaller gaps between~\ref{prob:randomized_0}~and~\texttt{BMS-Deterministic}.

Another important aspect of randomized assortments is the number of assortments the solution randomizes over, as ideally the seller would like the randomized solution to only randomize over a small number of assortments for ease of implementation. While the optimal solution to~\ref{prob:randomized_0}~could theoretically randomize over an exponential number of assortments, there exists an optimal solution to~\ref{prob:randomized_0}~that randomizes over at most $n$ assortments. The reason for this is that given a set of purchase probabilities $\bm{x}$ that satisfies the first two constraints in \ref{prob:randomized}, a distribution over assortments given by~\eqref{eq:sales_to_distribution} attains purchase probabilities $\bm{x}$ and randomizes over a nested sequence of assortments. Therefore if we use the aforementioned approach to construct a distribution over assortments, the distribution randomizes over a nested sequence of assortments, which consists of at most $n$ assortments. We also provide an example to show that if we follow the aforementioned approach to construct distribution over assortments from purchase probabilities, then for any $\alpha>0$, there exists an instance such that the optimal solution to~\ref{prob:randomized_0}~randomizes over $n$ assortments. Details of the example is provided in Example~\ref{example:value_randomization}~of Appendix~\ref{sec:example:value_randomization}.

\section{Algorithms for the Static Problem}\label{sec:randomized}

In this section, we introduce algorithms for the static problem. As the main contribution of this section, we design a polynomial time algorithm for \ref{prob:randomized_0}. As an extension, we further consider the static problem with additional constraints on the set of offered products, where besides market share balancing constraints, the set of offered products should also satisfy certain additional constraints.

\subsection{Algorithm for~\ref{prob:randomized_0}}\label{sec:alg_randomized}

We prove that~\ref{prob:randomized_0}~can be solved exactly in polynomial time. Since we have shown that \ref{prob:randomized_0}~is equivalent to~\ref{prob:randomized}, we mainly focus on~\ref{prob:randomized}~in this section.

We use $R^*$ to denote the optimal value of~\ref{prob:randomized}. In what follows, we present the main result of this section. We prove that there exists an optimal solution to~\ref{prob:randomized}~such that the products selected to offer with positive probability are nested in revenue and preference weight.

\vspace{0.5em}

\begin{theorem}\label{thm:randomized}
For some $\underline{v}\in\{v_i:i\in\mathcal{N}\}$, there exists an optimal solution $\bm{x}^*$ to~\ref{prob:randomized} such that
\begin{equation*}
\{i\in\mathcal{N}:x_i^*>0\}=\{i\in\mathcal{N}:r_i\geq R^*,\,v_i\geq \underline{v}\}.
\end{equation*}
\end{theorem}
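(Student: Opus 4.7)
The plan is to fix any optimal solution $\bm{x}^*$ of~\ref{prob:randomized}, set $S^* = \{i \in \mathcal{N} : x_i^* > 0\}$ and $\underline{v} = \min_{i \in S^*} v_i$ (which automatically lies in $\{v_i : i \in \mathcal{N}\}$), and then establish both inclusions between $S^*$ and $\{i : r_i \geq R^*,\, v_i \geq \underline{v}\}$ via two simple scaling perturbations of $\bm{x}^*$. Since $v_i \geq \underline{v}$ for every $i \in S^*$ holds by construction, the substantive claims are (i) $r_i \geq R^*$ for every $i \in S^*$, and (ii) after possibly modifying $\bm{x}^*$ to an equally optimal solution, every $j$ with $r_j \geq R^*$ and $v_j \geq \underline{v}$ lies in the new support.

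For (i), given any $i \in S^*$ I would define $\bm{x}'$ by $x_i' = 0$ and $x_k' = x_k^*/(1 - x_i^*)$ for every $k \neq i$ (including $k = 0$). Feasibility is immediate: the sum constraint reduces to $(1-x_i^*)/(1-x_i^*) = 1$, the upper bounds $x_k' \leq v_k x_0'$ are preserved under common scaling, and $x_k' \geq \alpha \max_\ell x_\ell'$ is preserved because removing $i$ can only shrink the maximum while both sides scale by the same factor. A direct computation gives
\[
R(\bm{x}') - R^* \;=\; \frac{x_i^*\,(R^* - r_i)}{1 - x_i^*},
\]
so $r_i < R^*$ would contradict the optimality of $\bm{x}^*$. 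For (ii), set $M = \max_{k \in S^*} x_k^*$ and take any $j \notin S^*$ with $v_j \geq \underline{v}$; define $\bm{x}'$ by scaling $\bm{x}^*$ down by $\mu = 1/(1 + \alpha M)$ and setting $x_j' = \alpha M\mu$. The sum is $\mu(1 + \alpha M) = 1$; the bound $x_j' \leq v_j x_0'$ reduces to $v_j \geq \alpha M/x_0^*$, which holds because every $k \in S^*$ satisfies $v_k \geq x_k^*/x_0^* \geq \alpha M/x_0^*$ and hence $\underline{v} \geq \alpha M/x_0^*$; the remaining constraints are preserved by the uniform rescaling. The analogous computation yields
\[
R(\bm{x}') - R^* \;=\; \frac{\alpha M\,(r_j - R^*)}{1 + \alpha M},
\]
so $r_j > R^*$ contradicts optimality and $r_j = R^*$ is revenue-neutral. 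Applying this second perturbation simultaneously to every $j \notin S^*$ with $r_j = R^*$ and $v_j \geq \underline{v}$ (scale by $1/(1 + k\alpha M)$ and set each added coordinate to $\alpha M/(1 + k\alpha M)$, where $k$ counts such products) yields an optimal solution whose support is precisely $\{i : r_i \geq R^*,\, v_i \geq \underline{v}\}$.

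The main bookkeeping concern is verifying that the second perturbation remains feasible when applied to several products at once, since $\mu$, $M$, and $x_0$ all change. The cleanest way to handle this is to pass to the auxiliary variables $y_i = x_i/x_0$, in which \ref{prob:randomized} becomes $\max \sum_i r_i y_i/(1 + \sum_i y_i)$ subject to $y_i \in \{0\} \cup [\alpha M_y,\, \min(M_y, v_i)]$ with $M_y = \max_\ell y_\ell$. In this form the ``remove $i$'' step is simply $y_i \leftarrow 0$ and the ``add $j$'' step is $y_j \leftarrow \alpha M_y$; both leave every other $y_\ell$ and the value of $M_y$ unchanged, so iterated additions remain feasible and the condition $v_j \geq \underline{v} \geq \alpha M_y$ is preserved verbatim.
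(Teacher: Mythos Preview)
Your argument is correct. Both you and the paper pass to the normalized variables $y_i = x_i/x_0$ (the paper's $w_i$) and use the same removal step for (i): zeroing out a product with $r_i < R^*$ strictly raises the objective. The reverse inclusion is where the two diverge. You argue by \emph{insertion}: for $j \notin S^*$ with $v_j \geq \underline{v}$, setting $y_j = \alpha M_y$ shifts the objective by a positive multiple of $r_j - R^*$, so $r_j \leq R^*$; the knife-edge case $r_j = R^*$ is then absorbed by a revenue-neutral bulk insertion. The paper instead \emph{constructs} an explicit candidate: $w_i^* = \min\{v_i,\underline{v}/\alpha\}$ when $r_i \geq R^*$ and $v_i \geq \underline{v}$, and $w_i^* = 0$ otherwise. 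It checks feasibility and then shows $w_i^* \geq \hat{w}_i$ coordinatewise (using $\hat{w}_i \leq v_i$ and $\hat{w}_i \leq \max_j \hat{w}_j \leq \underline{v}/\alpha$), so $\bm{w}^*$ is optimal with exactly the claimed support. The paper's construction has the side benefit of delivering the closed-form optimum (Corollary~\ref{cor:optimal_x}), which is what drives the $O(n^2)$ enumeration algorithm; your perturbation route establishes the support structure just as cleanly but would need one more observation to pin down the actual values $x_i^*$.
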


\vspace{0.5em}

By Theorem~\ref{thm:randomized}, under the optimal solution to~\ref{prob:randomized}, a product is selected to offer with positive probability if both its revenue and preference weight are above certain thresholds. The result shows that there are at most $n^2$ candidates for the set of products offered with positive probability. Fixing the set of products that are offered with positive probability, \ref{prob:randomized}~is reduced to a linear program. Thus for each possible set of products offered with positive probability, we are able to construct a candidate solution by solving the corresponding linear program. By returning the candidate solution with maximum objective value, we obtain an optimal solution to~\ref{prob:randomized}. The structure of the optimal solution to~\ref{prob:randomized_0}, determined by two thresholds on revenue and preference weight respectively, is novel in the assortment optimization literature. The structure is in sharp contrast with that of the unconstrained optimal assortment under MNL, which is determined only by a threshold on revenue.

Intuitively, the threshold on revenue is to guarantee that no product offered with positive probability can harm the overall expected revenue. If $r_i<R^*$ for some product $i\in\mathcal{N}$, then product $i$ will not be offered with positive probability in the optimal solution. Otherwise, by setting $x_i=0$ and increasing all other components of $\bm{x}$ proportionally so that the sum of these components equals one, we obtain another feasible solution with strictly larger objective value. The threshold on preference weight is to make sure that each offered product is able to maintain a large enough purchase probability, and avoid posing a strong limit on the maximum purchase probability due to market share balancing constraint. If the preference weight $v_i$ of some product $i\in\mathcal{N}$ is very small, it is likely that the product will not be offered even if its revenue $r_i$ is large. The reason for this is that with $v_i$ being very small, the purchase probability of product $i$ will also be very small even if the product is offered with probability one. The small but nonzero purchase probability of product $i$ could further enforce all other products to have very small purchase probabilities, since under market share balancing constraint purchase probabilities of the offered products can differ by a factor of at most $\alpha$. Eventually this could lead to suboptimal solutions due to small purchase probabilities of all products.

To prove Theorem~\ref{thm:randomized}, we first establish an equivalent form of \ref{prob:randomized}. Recall that $R^*$ is the optimal value of \ref{prob:randomized}. We consider the problem

\begin{equation}\label{prob:randomized_linearized}
\begin{aligned}
&\max_{\bm{w}} && \sum_{i\in\mathcal{N}}(r_i-R^*)w_i\\
&\text{s.t.} && 0\leq w_i\leq v_i,\ \forall i\in\mathcal{N},\\
&&& w_i\in\{0\}\cup\Big[\alpha\cdot\max_{j\in\mathcal{N}}w_j,\infty\Big),\ \forall i\in\mathcal{N}.
\end{aligned}
\end{equation}

In Problem~\eqref{prob:randomized_linearized}, the decision variable $\bm{w}\in\mathbb{R}^n$ can be interpreted as $w_i=x_i/x_0$ for all $i\in\mathcal{N}$. The first set of constraints in Problem~\eqref{prob:randomized_linearized}~is equivalent to the second set of constraints in~\ref{prob:randomized}, and last set of constraints in Problem~\eqref{prob:randomized_linearized}~is equivalent to the market share balancing constraint, i.e., the last set of constraints in~\ref{prob:randomized}. Given an optimal solution $\bm{w}^*$ to Problem~\eqref{prob:randomized_linearized}, we reconstruct $\bm{x}^*$ as
\begin{equation}
x_i^*=\dfrac{w_i^*}{1+\sum_{i\in\mathcal{N}}w_i^*},\ \forall i\in\mathcal{N},\ x_0^*=\dfrac{1}{1+\sum_{i\in\mathcal{N}}w_i^*}.\label{eq:reconstruct_x}
\end{equation}

We first use the following lemma to verify that Problem~\eqref{prob:randomized_linearized}~is equivalent to~\ref{prob:randomized}.

\vspace{0.5em}

\begin{lemma}\label{lemma:randomized_linearized}
The optimal value of Problem~\eqref{prob:randomized_linearized}~is $R^*$. Furthermore, letting $\bm{w}^*$ be an optimal solution to Problem~\eqref{prob:randomized_linearized}, and defining $\bm{x}^*$ using Equation~\eqref{eq:reconstruct_x}, $\bm{x}^*$ is an optimal solution to~\ref{prob:randomized}.
\end{lemma}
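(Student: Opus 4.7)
The plan is to prove Lemma~\ref{lemma:randomized_linearized} via the Charnes--Cooper-style change of variables $w_i = x_i/x_0$ that was informally introduced just before the lemma statement, showing that the feasible sets of \ref{prob:randomized} and Problem~\eqref{prob:randomized_linearized} are in one-to-one correspondence and that under this correspondence the objective $R(\bm{x})$ and the linearized objective $\sum_{i\in\mathcal{N}}(r_i-R^*)w_i$ agree up to the affine normalization encoded by $R^*$.

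First I would verify that every feasible $\bm{x}$ of \ref{prob:randomized} satisfies $x_0>0$: the constraint $x_i \leq v_i x_0$ forces $x_i=0$ for every $i\in\mathcal{N}$ whenever $x_0=0$, contradicting $x_0+\sum_i x_i=1$. With $x_0>0$ in hand, setting $w_i=x_i/x_0$ turns the bounds $0\leq x_i\leq v_i x_0$ into $0\leq w_i\leq v_i$, and dividing the market share balancing constraint $x_i\in\{0\}\cup[\alpha\max_j x_j,\infty)$ through by $x_0>0$ yields $w_i\in\{0\}\cup[\alpha\max_j w_j,\infty)$. So $\bm{w}$ is feasible for Problem~\eqref{prob:randomized_linearized}. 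Conversely, if $\bm{w}$ is feasible for Problem~\eqref{prob:randomized_linearized}, then defining $\bm{x}$ by Equation~\eqref{eq:reconstruct_x} gives $x_0>0$, $x_0+\sum_i x_i=1$, $x_i=w_i x_0\in[0,v_ix_0]$, and the market share balancing constraint on $\bm{x}$ follows by multiplying the one on $\bm{w}$ by $x_0>0$.

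Next I would relate the two objective values under this correspondence. A direct computation using $x_i=w_i x_0$ and $x_0=1/(1+\sum_j w_j)$ gives $\sum_{i\in\mathcal{N}} r_i x_i = \frac{\sum_{i\in\mathcal{N}} r_i w_i}{1+\sum_{i\in\mathcal{N}} w_i}$. Rearranging shows that $R(\bm{x})\geq R^*$ if and only if $\sum_{i\in\mathcal{N}}(r_i-R^*)w_i\geq R^*$, with equality corresponding to equality. From this, two observations finish the proof. First, plugging any optimal $\bm{x}^*$ of \ref{prob:randomized} into the equivalence shows the corresponding $\bm{w}^*$ attains objective value exactly $R^*$ in Problem~\eqref{prob:randomized_linearized}, hence its optimal value is at least $R^*$. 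Second, if some feasible $\bm{w}$ of Problem~\eqref{prob:randomized_linearized} achieved an objective strictly greater than $R^*$, then the $\bm{x}$ reconstructed via Equation~\eqref{eq:reconstruct_x} would be feasible for \ref{prob:randomized} with $R(\bm{x})>R^*$, contradicting the definition of $R^*$. Combining these, the optimal value of Problem~\eqref{prob:randomized_linearized} equals $R^*$, and any optimal $\bm{w}^*$ produces via Equation~\eqref{eq:reconstruct_x} an $\bm{x}^*$ with $R(\bm{x}^*)=R^*$, proving the second claim.

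There is no real obstacle: the argument is essentially a routine algebraic transformation. The only points that require a little care are verifying that $x_0>0$ must hold at feasible points (to license the division defining $w_i$) and checking that the nonlinear market share balancing constraint is preserved in both directions of the change of variables, both of which reduce to multiplying or dividing a set-valued membership condition by a positive scalar.
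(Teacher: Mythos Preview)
Your proposal is correct and follows essentially the same approach as the paper's proof: both verify $x_0>0$ for feasible points, use the change of variables $w_i=x_i/x_0$ (with inverse given by Equation~\eqref{eq:reconstruct_x}) to establish a bijection between the feasible sets, and then exploit the algebraic identity $\sum_i r_i x_i=\frac{\sum_i r_i w_i}{1+\sum_i w_i}$ to conclude that the optimal values coincide and that optimizers correspond. Your presentation via the equivalence ``$R(\bm{x})\geq R^*$ iff $\sum_i(r_i-R^*)w_i\geq R^*$'' is slightly more compact than the paper's two-direction argument, but the substance is the same.
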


\vspace{0.5em}

Proof of Lemma~\ref{lemma:randomized_linearized}~is provided in Appendix~\ref{sec:lemma:randomized_linearized}. With the help of Lemma~\ref{lemma:randomized_linearized}, we are able to complete the proof of Theorem~\ref{thm:randomized}. We first prove that for some $\underline{v}\in\{v_i:i\in\mathcal{N}\}$, there exists an optimal solution $\bm{w}^*$ to Problem~\eqref{prob:randomized_linearized} such that
\begin{equation}\label{eq:optimal_w}
w_i^*=\begin{cases}
\min\{v_i,\underline{v}/\alpha\},\ &\text{if}\ r_i\geq R^*\ \text{and}\ v_i\geq \underline{v},\\
0,\ &\text{otherwise}.
\end{cases}
\end{equation}
Then by reconstructing $\bm{x}^*$ from $\bm{w}^*$ using~\eqref{eq:reconstruct_x}, we verify that the constructed $\bm{x}^*$ is optimal to~\ref{prob:randomized}~by Lemma~\ref{lemma:randomized_linearized}~and satisfies the conditions stated in the theorem.

\begin{proof}[Proof of Theorem~\ref{thm:randomized}]
Consider any optimal solution $\hat{\bm{w}}$ to~Problem~\eqref{prob:randomized_linearized}. Suppose there exists $i_0\in\mathcal{N}$ such that $r_{i_0}<R^*$ and $\hat{w}_{i_0}>0$, then we define $\bm{w}'\in\mathbb{R}^n$ as $w_{i_0}'=0$ and $w_i'=\hat{w}_i$ for all $i\in\mathcal{N}\backslash\{i_0\}$. Then we have $w_i'\leq \hat{w}_i\leq v_i$ for all $i\in\mathcal{N}$. Furthermore, for any $i\in\mathcal{N}$ such that $w_i'>0$, we have $$w_i'=\hat{w}_i\geq \alpha\cdot\max_{j\in\mathcal{N}}\hat{w}_j\geq \alpha\cdot\max_{j\in\mathcal{N}}w_j'.$$
Therefore $\bm{w}'$ is feasible to~Problem~\eqref{prob:randomized_linearized}. Since $w_{i_0}'=0$, we have
\begin{equation*}
\sum_{i\in\mathcal{N}}(r_i-R^*)w'_i=\sum_{i\neq i_0}(r_i-R^*)\hat{w}_i>\sum_{i\in\mathcal{N}}(r_i-R^*)\hat{w}_i,
\end{equation*}
where the last inequality is due to the assumption that $r_{i_0}<R^*$ and $\hat{w}_i>0$. This contradicts the optimality of $\hat{\bm{w}}$. Therefore $\hat{w}_i=0$ for all $i\in\mathcal{N}$ such that $r_i<R^*$.

We define $\underline{v}=\min\{v_i:\hat{w}_i>0\}$. By definition we have $\underline{v}\in\{v_i:i\in\mathcal{N}\}$. We further define $\bm{w}^*$ as~\eqref{eq:optimal_w}. By~\eqref{eq:optimal_w} we have $w_i^*\in\{0\}\cup[\underline{v},\underline{v}/\alpha]$ and $w_i^*\leq v_i$ for all $i\in\mathcal{N}$. Therefore for any $i\in\mathcal{N}$ such that $w_i^*>0$, we have
\begin{equation*}
w_i^*\geq \underline{v}=\alpha\cdot\underline{v}/\alpha\geq \alpha\cdot\max_{j\in\mathcal{N}}w_j^*,
\end{equation*}
and we conclude that $\bm{w}^*$ is feasible to~Problem~\eqref{prob:randomized_linearized}. 

We claim that $w_i^*\geq \hat{w}_i$ for all $i\in\mathcal{N}$. Consider any $i\in\mathcal{N}$. If $w_i^*=0$, then either $r_i<R^*$ or $v_i<\underline{v}$. In the former case we have proven that $\hat{w}_i=0$. In the latter case, by definition of $\underline{v}$ we have $v_i<\min\{v_j:\hat{w}_j>0\}$, which implies $i\notin \{j\in\mathcal{N}:\hat{w}_j>0\}$, or equivalently $\hat{w}_i=0$. Therefore $w_i^*=0$ implies $\hat{w}_i=0$. If $w_i^*>0$, since by definition $w_i^*=\min\{v_i,\underline{v}/\alpha\}$ in this case, it suffices to prove that $v_i\geq \hat{w}_i$ and $\underline{v}/\alpha\geq \hat{w}_i$. Since $\hat{w}_i$ is feasible to~Problem~\eqref{prob:randomized_linearized}, we have $\hat{w}_i\leq v_i$. We also have
\begin{equation*}
\hat{w}_i\leq \max_{j\in\mathcal{N}}\hat{w}_j\leq \min\{\hat{w}_i:\hat{w}_i>0\}/\alpha\leq \min\{v_i:\hat{w}_i>0\}/\alpha=\underline{v}/\alpha,
\end{equation*}
where the second inequality is due to the second set of constraints of~Problem~\eqref{prob:randomized_linearized}, and the third inequality is due to the first set of constraints of~Problem~\eqref{prob:randomized_linearized}. Therefore in this case we also have $w_i^*=\min\{v_i,\underline{v}/\alpha\}\geq \hat{w}_i$. Then we conclude that $w_i^*\geq \hat{w}_i$ for all $i\in\mathcal{N}$. Since $w_i^*=\hat{w}_i=0$ for all $i\in\mathcal{N}$ such that $r_i<R^*$, we get
\begin{equation*}
\sum_{i\in\mathcal{N}}(r_i-R^*)w_i^*=\sum_{i:\,r_i\geq R^*}(r_i-R^*)w_i^*
\geq\sum_{i:\,r_i\geq R^*}(r_i-R^*)\hat{w}_i=\sum_{i\in\mathcal{N}}(r_i-R^*)\hat{w}_i=R^*.
\end{equation*}
Therefore $\bm{w}^*$ is also an optimal solution to~Problem~\eqref{prob:randomized_linearized}. By Lemma~\ref{lemma:randomized_linearized}, an optimal solution $\bm{x}^*$ to~\ref{prob:randomized}~can be reconstructed by~\eqref{eq:reconstruct_x}, and $\bm{x}^*$ satisfies
\begin{equation*}
\{i\in\mathcal{N}:x_i^*>0\}=\{i\in\mathcal{N}:w_i^*>0\}=\{i\in\mathcal{N}:r_i\geq R^*,\,v_i\geq \underline{v}\}.\qedhere
\end{equation*}
\end{proof}

By Theorem~\ref{thm:randomized}, we have the following corollary.

\vspace{0.5em}

\begin{corollary}\label{cor:optimal_x}
For some $\underline{v}\in\{v_i:{  i}\in\mathcal{N}\}$, there exists an optimal solution to~\ref{prob:randomized}~such that $x_i^*=w_i^*/(1+\sum_{j\in\mathcal{N}}w_i^*)$ and $x_0^*=1/(1+\sum_{j\in\mathcal{N}}w_j^*)$, and $\bm{w}^*$ is given by Equation~\eqref{eq:optimal_w}.
\end{corollary}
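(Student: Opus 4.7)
The plan is to observe that Corollary~\ref{cor:optimal_x} is essentially a restatement of what was already constructed inside the proof of Theorem~\ref{thm:randomized}, so the proof amounts to assembling the pieces rather than proving anything new. First I would recall from the proof of Theorem~\ref{thm:randomized} that, starting from any optimal $\hat{\bm w}$ to Problem~\eqref{prob:randomized_linearized}, the author explicitly constructed a vector $\bm{w}^*$ via Equation~\eqref{eq:optimal_w} with $\underline{v}\in\{v_i:i\in\mathcal{N}\}$ and showed two things: (i) $\bm{w}^*$ is feasible to Problem~\eqref{prob:randomized_linearized}, and (ii) $\sum_{i\in\mathcal{N}}(r_i-R^*)w_i^*\geq \sum_{i\in\mathcal{N}}(r_i-R^*)\hat{w}_i$, so $\bm{w}^*$ is itself optimal to Problem~\eqref{prob:randomized_linearized}.

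Next I would invoke Lemma~\ref{lemma:randomized_linearized}, which guarantees that from any optimal $\bm{w}^*$ of Problem~\eqref{prob:randomized_linearized}, the vector $\bm{x}^*$ defined via Equation~\eqref{eq:reconstruct_x} is optimal to~\ref{prob:randomized}. Applying this to the specific $\bm{w}^*$ from Equation~\eqref{eq:optimal_w} yields exactly
\begin{equation*}
x_i^*=\dfrac{w_i^*}{1+\sum_{j\in\mathcal{N}}w_j^*},\ \forall i\in\mathcal{N},\qquad x_0^*=\dfrac{1}{1+\sum_{j\in\mathcal{N}}w_j^*},
\end{equation*}
which is precisely the expression in the statement of the corollary.

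Since the construction of $\bm{w}^*$ via~\eqref{eq:optimal_w} and the reconstruction of $\bm{x}^*$ via~\eqref{eq:reconstruct_x} were both already carried out in the proof of Theorem~\ref{thm:randomized}, no additional technical work is required. There is no real obstacle here: the corollary is a direct readout of the proof of Theorem~\ref{thm:randomized} together with Lemma~\ref{lemma:randomized_linearized}, and a two or three sentence proof citing these two results should suffice.
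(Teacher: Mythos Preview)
Your proposal is correct and matches the paper's own treatment: the paper simply states that the corollary follows from Theorem~\ref{thm:randomized}, and your elaboration (that $\bm{w}^*$ from~\eqref{eq:optimal_w} was already shown optimal to Problem~\eqref{prob:randomized_linearized} in that proof, and then Lemma~\ref{lemma:randomized_linearized} converts it via~\eqref{eq:reconstruct_x} into an optimal $\bm{x}^*$) is exactly the intended argument.
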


\vspace{0.5em}

Because $\bm{w}^*$ has a closed-form expression given by \eqref{eq:optimal_w}, Corollary~\ref{cor:optimal_x}~provides a closed-form expression for an optimal solution to~\ref{prob:randomized} depending on $R^*$ and $\underline{v}$. The definition of~\eqref{eq:optimal_w}~remains unchanged if we replace $R^*$ with \mbox{$\underline{r}=\min\{r_i:r_i\geq R^*\}$}. To identify the true $(\underline{r},\underline{v})$, we simply enumerate all $O(n^2)$ possible pairs of $(\hat{r},\hat{v})\in\{(r_i,v_j):i,j\in\mathcal{N}\}$, then for each candidate $(\hat{r},\hat{v})$, we construct a candidate solution $\bm{x}(\hat{r},\hat{v})$ following Corollary~\ref{cor:optimal_x}, and finally return the candidate solution that has the maximum expected revenue. Following the aforementioned steps we establish a polynomial-time algorithm for~\ref{prob:randomized}.

\noindent\textbf{Sensitivity Analysis with Respect to $\alpha$:} We conduct a sensitivity analysis on the thresholds for revenue and preference weights with respect to $\alpha$. By Theorem~\ref{thm:randomized}, we conclude that the threshold for revenue always increases as $\alpha$ decreases. This is because the threshold for revenue is the optimal value $R^*$ of \ref{prob:randomized}, and as $\alpha$ decreases the market share balancing constraint is relaxed, thus the optimal value $R^*$ increases. However, as $\alpha$ decreases the threshold on preference weight $\underline{v}$ can either increase or decrease. We show this by the following example.

\begin{example}
Suppose $\mathcal{N}=\{1,2,3\}$. We pick $\epsilon\in(0,1/270)$, and define
\begin{equation*}
v_1=\epsilon,\ v_2=1,\ v_3=3,\ r_1=\dfrac{7}{9\epsilon},\ r_2=r_3=1.
\end{equation*}
The unconstrained optimal assortment in this instance is $\{1,2,3\}$, and the unconstrained optimal expected revenue is
\begin{equation*}
\dfrac{\sum_{i=1}^3 r_iv_i}{1+\sum_{i=1}^3 v_i}=\dfrac{7/9+1+3}{1+\epsilon+1+3}<1.
\end{equation*}
Therefore under any $0<\alpha\leq 1$, the optimal value $R^*$ of the static problem is strictly smaller than one, and also strictly smaller than the revenue of any product. Then by Theorem~\ref{thm:randomized}, the set of products offered in the optimal solution is nested in preference weight. For any $v,\alpha$, we define $\tilde{R}(v,\alpha)$ as
\begin{equation*}
\tilde{R}(v,\alpha)=\dfrac{\sum_{i:v_i\geq v}r_i\min\{v_i,v/\alpha\}}{1+\sum_{i:v_i\geq v}\min\{v_i,v/\alpha\}}.
\end{equation*}
Since we have proven that $R^*<r_i$ for all $i\in\mathcal{N}$, by Corollary \ref{cor:optimal_x} the optimal value of the static problem under any $\alpha\in(0,1]$ is $\max_{i\in\mathcal{N}}\tilde{R}(v_i,\alpha)$, and the threshold on preference weight is equal to $\underline{v}=\arg\max_{v\in\{v_i:i\in\mathcal{N}\}}\tilde{R}(v,\alpha)$. If $\alpha=1$, we have
\begin{equation*}
\tilde{R}(v_1,1)=\dfrac{7/9+2\epsilon}{1+3\epsilon}>\dfrac{7/9}{1+3\epsilon}>\dfrac{3}{4},\ \tilde{R}(v_2,1)=\dfrac{1+1}{1+1+1}=\dfrac{2}{3},\ \tilde{R}(v_3,1)=\dfrac{3}{1+3}=\dfrac{3}{4}.
\end{equation*}
Therefore when $\alpha=1$, we have $\underline{v}=v_1=\epsilon$. If $\alpha=1/3$, we have
\begin{equation*}
\tilde{R}(v_1,1/3)=\dfrac{7/9+6\epsilon}{1+7\epsilon}<\dfrac{7}{9}+6\epsilon<\dfrac{4}{5},\ \tilde{R}(v_2,1/3)=\dfrac{1+3}{1+1+3}=\dfrac{4}{5},\ \tilde{R}(v_3,1/3)=\dfrac{3}{1+3}=\dfrac{3}{4}.
\end{equation*}
Therefore when $\alpha=1/3$, we have $\underline{v}=v_2=1$. If $\alpha=\epsilon/3$, then deterministically offering the unconstrained optimal assortment $\{1,2,3\}$ becomes feasible, thus in this case $\underline{v}=\epsilon$. Therefore when $\alpha$ decreases from one to $1/3$ then to $\epsilon/3$, the threshold on preference weight $\underline{v}$ changes from $\epsilon$ to one then to $\epsilon$ again. Thus we conclude that as $\alpha$ decreases, the threshold on preference weight $\underline{v}$ can either increase or decrease.
\end{example}

\vspace{0.5em}

Before we proceed, we would like to add a remark that the deterministic variant of the static problem~\texttt{BMS-Deterministic}~is much easier than \ref{prob:randomized_0} and can also be solved in polynomial time. Specifically, when the seller deterministically offers a single assortment $S$, the market share balancing constraint is equivalent to $\min_{i\in S}v_i / \max_{i\in S}v_i\geq \alpha$. Letting $S^*$ be an optimal solution to~\texttt{BMS-Deterministic}, the problem can be solved by enumerating all possible values of $\underline{v}=\min_{i\in S^*}v_i$. For each possible value of $\underline{v}$,  we solve an unconstrained assortment optimization problem where the universe of products is restricted to $\{i\in\mathcal{N}:\underline{v}\leq v_i\leq \underline{v}/\alpha\}$, and finally we return the assortment with the highest expected revenue over all enumerations. 

\subsection{Static Problem with Additional Constraints}\label{sec:randomized_constrained}

Besides market share balancing constraint, in practice sellers are often faced with additional constraints on the set of offered products. For example, the seller may have limited capacity that allows keeping only a certain number of products in stock. In this case the seller may require that at most $K$ products are offered with positive probability. {  Another example is that, due to considerations of product diversity, the seller may require that at least $K$ products should be offered to customers, and in this case the seller may require that at least $K$ products are offered with positive probability. More broadly, the seller may categorize products into multiple product categories $C_1,C_2,\dots,C_K$, and in order to in order to guarantee diversity in the product portfolio and better meet different customers' needs, the seller would like to guarantee a minimum number of products from each category to be offered to customers. In this case, the seller may require that for each product category $C_k\subseteq\mathcal{N}$, at least $\ell_k$ products from $C_k$ are offered with positive probability.}

Motivated by the examples above, in this section we consider the setting where in addition to the market share balancing constraint, we further require that the set of products offered with positive probability should belong to some $\mathcal{X}\subseteq 2^{\mathcal{N}}$. Here we would like to emphasize that different from most existing literature on constrained randomized assortment optimization, we impose constraint on products offered with positive probability rather than assortments randomized over. We refer to the problem as~\ref{prob:randomized_constrained_0} (the ``C" in the name stands for contraints), and its formal definition is given by
\begin{equation}
\label{prob:randomized_constrained_0}
\tag{\texttt{BMS-C}}
\begin{aligned}
&\max_{q(\cdot)} && \sum_{S\subseteq\mathcal{N}}R(S)q(S)\\
&\textup{s.t.} && \sum_{S\subseteq\mathcal{N}}q(S)=1,\\
&&&q(S)\geq 0,\ \forall S\subseteq\mathcal{N},\\
&&&\sum_{S\subseteq\mathcal{N}}\phi(i,S)q(S)\in\{0\}\cup\Big[\alpha\cdot\max_{j\in\mathcal{N}}\sum_{S\subseteq\mathcal{N}}\phi(j,S)q(S),\infty\Big),\ \forall i\in\mathcal{N},\\
&&&\Big\{i\in\mathcal{N}:\sum_{S\subseteq\mathcal{N}}\phi(i,S)q(S)>0\Big\}\in\mathcal{X}.
\end{aligned}
\end{equation}
Here, the first three constraints follow from~\ref{prob:randomized_0}. In the last constraint, a product is offered if and only if its purchase probability $\sum_{S\subseteq\mathcal{N}}\phi(i,S)q(S)$ is strictly positive, thus the last constraint enforces the set of offered products to belong to $\mathcal{X}$.

Similar to~\ref{prob:randomized}, by using the purchase probabilities $\bm{x}$ as the decision variables, we obtain the following equivalent sales based reformulation of~\ref{prob:randomized_constrained_0}. The problem is referred to as \ref{prob:randomized_constrained}, and its formal definition is given by

\begin{equation}\label{prob:randomized_constrained}
\tag{\texttt{BMS-C-Sales}}
\begin{aligned}
\begin{aligned}
&\max_{\bm{x}} && \sum_{i\in\mathcal{N}}r_ix_i\\
&\textup{s.t.} && x_0+\sum_{i\in\mathcal{N}}x_i=1,\\
&&&0\leq x_i\leq v_ix_0,\ \forall i\in\mathcal{N},\\
&&&x_i\in\{0\}\cup\Big[\alpha\cdot\max_{j\in\mathcal{N}}x_j,\infty\Big),\ \forall i\in\mathcal{N},\\
&&&\{i\in\mathcal{N}:x_i>0\}\in\mathcal{X}.
\end{aligned}
\end{aligned}
\end{equation}
Here, the first two constraints guarantees that the purchase probabilities $\bm{x}$ are valid under MNL. The third set of constraints is the market share balancing constraint. The last constraint is equivalent to the last constraint in~\ref{prob:randomized_constrained_0}, which requires that the set of offered products belongs to $\mathcal{X}$. Since~\ref{prob:randomized_constrained_0}~is equivalent to~\ref{prob:randomized_constrained}, for the rest of this section, we mainly focus on~\ref{prob:randomized_constrained}.

To solve \ref{prob:randomized_constrained}, we assume access to an approximation oracle that approximately solves the assortment optimization problem under MNL with constraint $\mathcal{X}$. Specifically, given any $A\subseteq\mathcal{N}$, any set of revenues $\{r_i\}_{i\in\mathcal{N}}$ and any set of preference weights $\{v_i\}_{i\in\mathcal{N}}$, we consider the problem

\begin{equation}\label{prob:constrained_mnl}
\tag{\texttt{MNL-}$\mathcal{X}$}
\begin{aligned}
&\max_{S\subseteq A}&&\dfrac{\sum_{i\in S}r_iv_i}{1+\sum_{i\in S}v_i} & \text{s.t.} && S\in\mathcal{X}.
\end{aligned}
\end{equation}
We assume that there exists an oracle that either returns a $\beta$-approximation to \ref{prob:constrained_mnl} or determines its infeasibility in polynomial time. We define $\beta=1$ if the oracle solves \ref{prob:constrained_mnl} exactly in polynomial time. The oracle exists for a wide variety of constraints considered in the literature such as cardinality constraints \citep{rusmevichientong2010dynamic}, totally unimodular constraints \citep{sumida2021revenue}, capacity constraints \citep{desir2022capacitated} and covering constraints \citep{housni2024assortment}.

We show that given a $\beta$-approximation oracle to \ref{prob:constrained_mnl}, there exists a polynomial-time $\beta$-approximation algorithm for~\ref{prob:randomized_constrained}.

\vspace{0.5em}

\begin{theorem}\label{thm:randomized_constrained}
Given a $\beta$-approximation oracle for \ref{prob:constrained_mnl}, there exists a polynomial-time algorithm that returns a $\beta$-approximation to~\ref{prob:randomized_constrained}.
\end{theorem}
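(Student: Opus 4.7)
The plan is to reduce \ref{prob:randomized_constrained} to $O(n^2)$ instances of \ref{prob:constrained_mnl} with judiciously modified preference weights, and then apply the $\beta$-approximation oracle on each. First I would extend Lemma~\ref{lemma:randomized_linearized} to the constrained setting: adding the constraint $\{i:w_i>0\}\in\mathcal{X}$ to~\eqref{prob:randomized_linearized}, the same manipulation used in the proof of Lemma~\ref{lemma:randomized_linearized} shows that this constrained linearized problem has optimal value equal to the optimal value $R_C^*$ of \ref{prob:randomized_constrained}, and that any optimal $\bm{w}^*$ yields an optimal $\bm{x}^*$ to \ref{prob:randomized_constrained} via~\eqref{eq:reconstruct_x}.

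The key structural step is to pin down the form of an optimal $\bm{w}^*$. Let $S^*=\{i:w_i^*>0\}\in\mathcal{X}$ and $W^*=\max_j w_j^*$. With $S^*$ and $W^*$ frozen, the problem becomes linear in each $w_i$ over $[\alpha W^*,\min\{v_i,W^*\}]$ with coefficient $r_i-R_C^*$, and maximizing coordinatewise yields
\[
w_i^*=\min\{v_i,W^*\}\ \text{for every}\ i\in S^*\ \text{with}\ r_i\ge R_C^*,\qquad w_i^*=\alpha W^*\ \text{for every}\ i\in S^*\ \text{with}\ r_i<R_C^*.
\]
Plugging these back in produces an objective that is piecewise linear in $W$, whose breakpoints lie in $\{v_i:i\in S^*,\ r_i\ge R_C^*\}$ and whose feasible upper endpoint is $(\min_{i\in S^*} v_i)/\alpha$. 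Since a piecewise-linear function attains its maximum at a breakpoint or endpoint, the optimal $W^*$ lies in $\{v_i:i\in\mathcal{N}\}\cup\{v_i/\alpha:i\in\mathcal{N}\}$.

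I would then enumerate all pairs $(\widehat W,\widehat R)\in(\{v_i\}_{i\in\mathcal{N}}\cup\{v_i/\alpha\}_{i\in\mathcal{N}})\times\{r_i\}_{i\in\mathcal{N}}$. For each such pair, define modified weights $\tilde v_i=\min\{v_i,\widehat W\}$ when $r_i\ge \widehat R$ and $\tilde v_i=\alpha\widehat W$ otherwise, restrict attention to $A=\{i\in\mathcal{N}:v_i\ge\alpha\widehat W\}$, and invoke the $\beta$-approximation oracle on the \ref{prob:constrained_mnl} instance with revenues $\{r_i\}_{i\in A}$, preference weights $\{\tilde v_i\}_{i\in A}$, and constraint $\mathcal{X}$. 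The oracle returns an assortment $S^{\widehat W,\widehat R}\in\mathcal{X}$, from which I construct the candidate $x_i=\tilde v_i/(1+\sum_{j\in S^{\widehat W,\widehat R}}\tilde v_j)$ for $i\in S^{\widehat W,\widehat R}$ (with $x_0$ chosen accordingly), and return the candidate with the largest revenue. This makes $O(n^2)$ oracle calls, so the algorithm runs in polynomial time.

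The correctness argument has two halves. First, since $\tilde v_i\in[\alpha\widehat W,\widehat W]$ and $\tilde v_i\le v_i$ for every $i\in A$, every candidate $\bm{x}^{\widehat W,\widehat R}$ is feasible for \ref{prob:randomized_constrained} and therefore has objective at most $R_C^*$. Second, for the pair $(W^*,\widehat R)$ with $\widehat R\in\{r_i\}$ chosen so that $\{i:r_i\ge\widehat R\}=\{i:r_i\ge R_C^*\}$, the structural characterization forces $\tilde v_i=w_i^*$ on $S^*$, so the modified-MNL objective at $S^*$ equals exactly $R_C^*$; the true MNL optimum for that instance is therefore at least $R_C^*$, and the oracle returns a solution with value at least $\beta R_C^*$, giving the claimed $\beta$-approximation. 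The main obstacle is the structural step: unlike Theorem~\ref{thm:randomized}, the constraint $\mathcal{X}$ may force $S^*$ to include ``bad'' products with $r_i<R_C^*$ (because $S^*\setminus\{i\}$ need not lie in $\mathcal{X}$), and it is exactly this possibility that dictates the two-parameter enumeration over $(\widehat W,\widehat R)$ instead of the single preference-weight threshold $\underline v$ used in Corollary~\ref{cor:optimal_x}.
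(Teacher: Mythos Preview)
Your proposal is correct and follows essentially the same approach as the paper. You parameterize by the maximum weight $W^*=\max_j w_j^*$ whereas the paper parameterizes by the minimum $\underline{v}=\alpha W^*$, but this is just a change of variable: your enumeration over $\widehat W\in\{v_i\}\cup\{v_i/\alpha\}$ corresponds exactly to the paper's enumeration over $\hat v\in\{v_i\}\cup\{\alpha v_i\}$, and your modified weights $\tilde v_i$ coincide with the paper's $v_i'$. Your piecewise-linear concavity argument for locating $W^*$ is a slightly cleaner alternative to the paper's basic-feasible-solution argument (Lemma~\ref{lemma:w_structure}), but both yield the same $O(n^2)$ candidate set and the same reduction to \ref{prob:constrained_mnl}.
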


\vspace{0.5em}

Theorem~\ref{thm:randomized_constrained}~states that if the assortment optimization problem under MNL with constraint $\mathcal{X}$ has a polynomial-time $\beta$-approximation oracle, then there is a polynomial-time algorithm that achieves the same approximation ratio for~\ref{prob:randomized_constrained}. In particular, if under constraint $\mathcal{X}$, the constrained assortment optimization problem under MNL is polynomial-time solvable, then~\ref{prob:randomized_constrained}~is also polynomial-time solvable.

We use $R^*_{\mathcal{X}}$ to denote the optimal value of~\ref{prob:randomized_constrained}. Similar to Problem~\eqref{prob:randomized_linearized}, to prove Theorem~\ref{thm:randomized_constrained}, we first establish an equivalent form of \ref{prob:randomized_constrained}. We consider the following problem,

\begin{equation}\label{prob:randomized_constrained_linearized}
\begin{aligned}
&\max_{\bm{w}}&&\sum_{i\in\mathcal{N}}(r_i-R^*_{\mathcal{X}})w_i\\
&\text{s.t.}&&0\leq w_i\leq v_i,\ \forall i\in\mathcal{N},\\
&&&w_i\in\{0\}\cup\Big[\alpha\cdot\max_{j\in\mathcal{N}}w_j,\infty\Big),\ \forall i\in\mathcal{N},\\
&&&\{i\in\mathcal{N}:w_i>0\}\in\mathcal{X}.
\end{aligned}
\end{equation}

We first use the following lemma to prove that Problem~\eqref{prob:randomized_constrained_linearized}~is equivalent to~\ref{prob:randomized_constrained}.

\vspace{0.5em}

\begin{lemma}
The optimal value to Problem~\eqref{prob:randomized_constrained_linearized}~is $R_{\mathcal{X}}^*$. Furthermore, letting $\bm{w}^*$ be an optimal solution to Problem~\eqref{prob:randomized_constrained_linearized} and defining $\bm{x}^*$ using Equation~\eqref{eq:reconstruct_x}, $\bm{x}^*$ is an optimal solution to~\ref{prob:randomized_constrained}.\label{lemma:randomized_constrained_linearized}
\end{lemma}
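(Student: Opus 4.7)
The plan is to mirror the proof of Lemma~\ref{lemma:randomized_linearized}, showing that Problem~\eqref{prob:randomized_constrained_linearized} arises from \ref{prob:randomized_constrained} via the Charnes--Cooper-style change of variables $w_i = x_i/x_0$, and that this substitution is a bijection preserving feasibility (including the set-membership constraint $\{i: \cdot > 0\} \in \mathcal{X}$) together with the appropriate alignment of objectives. Throughout, I first note that in any feasible $\bm{x}$ for \ref{prob:randomized_constrained} we must have $x_0 > 0$, since $x_i \leq v_i x_0$ combined with $x_0 + \sum_i x_i = 1$ forces $x_0 = 0 \Rightarrow \bm{x} = \bm{0}$, contradicting $\sum_i x_i + x_0 = 1$. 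This lets us divide by $x_0$ freely.

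For the first direction, I would take any feasible $\bm{x}$ of \ref{prob:randomized_constrained} and define $w_i = x_i/x_0$. The bound $w_i \in [0, v_i]$ follows from the second constraint of \ref{prob:randomized_constrained}; dividing the fairness constraint by $x_0 > 0$ yields $w_i \in \{0\} \cup [\alpha \max_j w_j, \infty)$; and $\{i : w_i > 0\} = \{i : x_i > 0\} \in \mathcal{X}$ transfers verbatim. Using $\sum_i x_i = 1 - x_0$, a direct calculation gives
\begin{equation*}
\sum_{i\in\mathcal{N}}(r_i - R^*_{\mathcal{X}}) w_i \;=\; \frac{1}{x_0}\Big(\sum_{i\in\mathcal{N}} r_i x_i - R^*_{\mathcal{X}} (1 - x_0)\Big).
\end{equation*}
Taking $\bm{x}$ to be optimal for \ref{prob:randomized_constrained} so that $\sum_i r_i x_i = R^*_{\mathcal{X}}$, the right-hand side collapses to $R^*_{\mathcal{X}}$, which shows the linearized optimum is at least $R^*_{\mathcal{X}}$.

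For the reverse direction, I would take any feasible $\bm{w}$ for Problem~\eqref{prob:randomized_constrained_linearized} and define $\bm{x}^*$ via~\eqref{eq:reconstruct_x}. The normalization is immediate; $x_i^* = w_i x_0^* \leq v_i x_0^*$ uses $w_i \leq v_i$; and since $x_i^*$ and $w_i$ have the same support, both the fairness constraint and the constraint $\{i : x_i^* > 0\} \in \mathcal{X}$ transfer from $\bm{w}$ to $\bm{x}^*$. Plugging in~\eqref{eq:reconstruct_x} yields
\begin{equation*}
\sum_{i\in\mathcal{N}} r_i x_i^* \;=\; \frac{\sum_i r_i w_i}{1 + \sum_j w_j} \;=\; \frac{\sum_i (r_i - R^*_{\mathcal{X}}) w_i + R^*_{\mathcal{X}} \sum_j w_j}{1 + \sum_j w_j}.
\end{equation*}
If the linearized optimum were strictly greater than $R^*_{\mathcal{X}}$, then this expression would exceed $R^*_{\mathcal{X}}$, contradicting the definition of $R^*_{\mathcal{X}}$ as the optimum of \ref{prob:randomized_constrained}. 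Hence the linearized optimum equals $R^*_{\mathcal{X}}$ exactly, and when $\bm{w}^*$ is optimal, the reconstructed $\bm{x}^*$ satisfies $\sum_i r_i x_i^* = R^*_{\mathcal{X}}$, proving it is optimal for \ref{prob:randomized_constrained}.

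The argument is essentially algebraic and the main subtlety — the only place the added constraint enters — is verifying that the support set $\{i : x_i > 0\}$ is preserved in both directions of the change of variables. This is immediate because $w_i = x_i / x_0$ and $x_i^* = w_i \cdot x_0^*$ both preserve sign, so $\{i : x_i > 0\} \in \mathcal{X}$ in \ref{prob:randomized_constrained} is equivalent to $\{i : w_i > 0\} \in \mathcal{X}$ in Problem~\eqref{prob:randomized_constrained_linearized}. Consequently, the proof is a straightforward adaptation of the unconstrained version and the presence of $\mathcal{X}$ introduces no additional difficulty beyond this observation.
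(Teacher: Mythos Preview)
Your proposal is correct and follows essentially the same approach as the paper's proof: the Charnes--Cooper substitution $w_i = x_i/x_0$ in both directions, with the only addition over Lemma~\ref{lemma:randomized_linearized} being the observation that the support $\{i: x_i > 0\} = \{i: w_i > 0\}$ is preserved so the $\mathcal{X}$-constraint transfers trivially. The paper's proof is structured identically and explicitly defers to the argument of Lemma~\ref{lemma:randomized_linearized} for the shared steps.
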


\vspace{0.5em}

Proof of Lemma~\ref{lemma:randomized_constrained_linearized}~is similar to that of Lemma~\ref{lemma:randomized_linearized}, and we leave the proof to Appendix~\ref{sec:lemma:randomized_constrained_linearized}. Next, we use the following lemma to reveal the structure of the optimal solution to Problem~\eqref{prob:randomized_constrained_linearized}.

\vspace{0.5em}

\begin{lemma}\label{lemma:w_structure}
For some $\underline{v}\in\{v_i:i\in\mathcal{N}\}\cup\{\alpha v_i:i\in\mathcal{N}\}$ and $S^*\in\mathcal{X}$ such that $v_i\geq \underline{v}$ for all $i\in S^*$, an optimal solution $\bm{w}^*$ to Problem~\eqref{prob:randomized_constrained_linearized} is given by
\begin{equation}
w_i^*=\begin{cases}
\min\{v_i,\underline{v}/\alpha\},\ &\text{if}\ i\in S^*\ \text{and}\ r_i\geq R_{\mathcal{X}}^*,\\
\underline{v},\ &\text{if}\ i\in S^*\ \text{and}\ r_i<R_{\mathcal{X}}^*,\\
0,\ &\text{if}\ i\notin S^*.
\end{cases}\label{eq:w_structure}
\end{equation}
\end{lemma}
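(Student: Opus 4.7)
The plan is to mirror the argument used for Theorem~\ref{thm:randomized}, adapted to the combinatorial constraint $\{i:w_i>0\}\in\mathcal{X}$. The key difference is that, unlike in the unconstrained case, we may be forced to retain some products with $r_i<R^*_{\mathcal{X}}$ in the support; the right response is to push their $w_i$ down to the smallest level the market share balancing constraint allows. Concretely, I would start from an arbitrary optimal $\hat{\bm{w}}$ of Problem~\eqref{prob:randomized_constrained_linearized}, freeze its support $S^*=\{i:\hat{w}_i>0\}\in\mathcal{X}$, and partition $S^*$ into a high-revenue set $H=\{i\in S^*:r_i\geq R^*_{\mathcal{X}}\}$ and a low-revenue set $L=S^*\setminus H$. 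The coefficient $(r_i-R^*_{\mathcal{X}})$ is nonnegative on $H$ and strictly negative on $L$, so the rearrangement should push $H$-components up and $L$-components down.

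The second step is to reparametrize feasible solutions supported on $S^*$ by the maximum value $M=\max_{i\in S^*}w_i$. The constraints collapse to $\alpha M\leq w_i\leq\min\{v_i,M\}$ for every $i\in S^*$, together with the requirement that some $w_i$ attain $M$. For fixed $M$ the induced LP is trivial: by the sign pattern of the coefficients, an optimizer sets $w_i=\min\{v_i,M\}$ on $H$ and $w_i=\alpha M$ on $L$. Writing $\underline{v}=\alpha M$, this is precisely the form~\eqref{eq:w_structure}, and the requirement $v_i\geq\underline{v}$ for $i\in S^*$ reduces to the feasibility bound $\alpha M\leq v_i$.

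The third step is to optimize $M$. Substituting the threshold assignments gives
\begin{equation*}
f(M)\;=\;\sum_{i\in H}(r_i-R^*_{\mathcal{X}})\min\{v_i,M\}\;+\;\alpha M\sum_{i\in L}(r_i-R^*_{\mathcal{X}}),
\end{equation*}
which is concave and piecewise linear in $M$. Its feasible range is $[0,M_{\max}]$ with $M_{\max}=\min\bigl\{\max_{i\in H}v_i,\,\min_{i\in S^*}v_i/\alpha\bigr\}$: the first bound ensures some product in $H$ can realize the maximum, and the second ensures the lower bound $\alpha M$ is compatible with $w_i\leq v_i$. Since a concave piecewise linear function on a compact interval attains its maximum at a breakpoint, an optimal $M^*$ lies in $\{v_i:i\in H\}\cup\{v_i/\alpha:i\in S^*\}$, and hence $\underline{v}=\alpha M^*\in\{\alpha v_i:i\in\mathcal{N}\}\cup\{v_i:i\in\mathcal{N}\}$ as claimed. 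Because the resulting $\bm{w}^*$ maximizes $f$ over the restricted feasible set and $\hat{\bm{w}}$ lives in that set, $\bm{w}^*$ is also optimal for Problem~\eqref{prob:randomized_constrained_linearized}.

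The main obstacle will be the bookkeeping around the requirement that some $w_i$ attain $M^*$. In particular, I need to argue that in any optimal $\hat{\bm{w}}$ the maximum is necessarily realized by a product in $H$---a short exchange argument decreases $\hat{w}_{i_0}$ for an $i_0\in L$ attaining the maximum and shows the objective strictly improves, contradicting optimality whenever $H$ is nonempty---and to rule out the pathological case $H=\emptyset$ by noting that $R^*_{\mathcal{X}}>0$ forces $H\neq\emptyset$ at optimality (and $R^*_{\mathcal{X}}=0$ collapses to the trivial $S^*=\emptyset$ solution, which also fits~\eqref{eq:w_structure}). With these routine verifications in place, the piecewise-linear concavity of $f$ delivers the structural result.
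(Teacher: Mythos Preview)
Your argument is correct and reaches the same conclusion, but it differs from the paper's route. After fixing the support $S^*$, the paper introduces the auxiliary level variable (your $\underline{v}$, their $v$) and treats the inner problem as a linear program in $(\bm w,v)$; they then take a basic feasible optimum and count tight constraints: with $K=|S^*|$ variables in $\bm w$ plus one more in $v$, at least $K+1$ constraints must bind, and a pigeonhole on the sets $\{i:\hat w_i=v_i\}$ and $\{i:\hat w_i\in\{\underline v,\underline v/\alpha\}\}$ forces some $v_k$ to equal $\underline v$ or $\underline v/\alpha$. Your route instead solves the inner problem in closed form for each fixed $M$, reduces to the univariate concave piecewise-linear function $f(M)$, and reads off that the optimizer sits at a kink $M^*\in\{v_i:i\in H\}$ or at the endpoint $\min_{i\in S^*}v_i/\alpha$. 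Both arguments deliver $\underline v\in\{v_i\}\cup\{\alpha v_i\}$; yours is more elementary (no LP geometry), while the paper's BFS argument is terser and avoids the sign-by-sign bookkeeping.

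One simplification you may want: the ``some $w_i$ attains $M$'' constraint and the companion bound $M\leq\max_{i\in H}v_i$ are not needed. Dropping the attainment requirement---equivalently, treating $M$ only as an upper bound on $\max_i w_i$---yields an equivalent relaxation (any feasible $(\bm w,M)$ can be replaced by $(\bm w,\max_i w_i)$ with the same objective), and then the feasible range for $M$ is simply $[0,\min_{i\in S^*}v_i/\alpha]$. Your exchange argument showing the maximum lies in $H$ is fine but unnecessary: $f$ is constant-plus-linear-with-negative-slope once $M>\max_{i\in H}v_i$, so the optimal $M^*$ automatically respects that bound. This shortcut also sidesteps the edge case $\alpha=1$, where your decrement move would stall.
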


\vspace{0.5em}

In what follows, we provide a proof sketch of Lemma \ref{lemma:w_structure}. Consider any optimal solution $\tilde{\bm{w}}$ and we set $S^*=\{i\in\mathcal{N}:\tilde{w}_i>0\}$. By the last set of constraints in Problem~\eqref{prob:randomized_constrained_linearized}~we get $S^*\in\mathcal{X}$. If we fix the set of offered products as $S^*$, then Problem \eqref{prob:randomized_constrained_linearized} is reduced to a linear program, and let $\hat{\bm{w}}$ be an optimal basic feasible solution. We further define $\underline{v}=\min\{\hat{w}_i:\hat{w}_i>0\}$, and we prove that $\underline{v}$ can be limited to a set of at most $2n$ candidate values. Finally, Fixing the set of offered products as $S^*$ and enforcing $\underline{v}\leq w_i\leq \underline{v}/\alpha$, we obtain an optimal solution $\bm{w}^*$. By the first and second sets of constraints in Problem~\eqref{prob:randomized_constrained_linearized}~we have that $\underline{v}\leq w_i^*\leq \min\{v_i,\underline{v}/\alpha\}$ for all $i\in S^*$. Then we show that an optimal $\bm{w}^*$ follows the structure stated in~\eqref{eq:w_structure}. The reason for this is that if $r_i\geq R^*_{\mathcal{X}}$, the coefficient for $w_i$ is nonnegative, and in this case $w_i$ is optimal when setting it as the upper bound $\min\{v_i,\underline{v}/\alpha\}$; otherwise if $r_i<R^*_{\mathcal{X}}$, then the coefficient for $w_i$ is negative, and in this case $w_i$ is optimal when setting it as the lower bound $\underline{v}$. Proof of Lemma~\ref{lemma:w_structure}~is provided in Appendix~\ref{sec:lemma:w_structure}.

Suppose $\underline{v}$ and $R^*_{\mathcal{X}}$ are known, and we define $S^+(\underline{v})=\{i\in\mathcal{N}:v_i\geq \underline{v}\}$. Since by Lemma~\ref{lemma:w_structure}, we have $v_i\geq \underline{v}$ for all $i\in S^*$ and $w_i^*=0$ for all $i\notin S^*$, we get $w_i^*=0$ for all $i\notin S^+(\underline{v})$. Thus we can restrict the universe of products to $S^+(\underline{v})$. We define a new set of preference weights $\bm{v}'$ as follows: for each $i\in S^+(\underline{v})$, we define
\begin{equation*}
v_i'=\begin{cases}
\underline{v},\ &\text{if}\ r_i<R_{\mathcal{X}}^*,\\
\min\{v_i,\underline{v}/\alpha\},\ &\text{if}\ r_i\geq R_{\mathcal{X}}^*.
\end{cases}
\end{equation*}
By Lemma~\ref{lemma:w_structure}, there exists an optimal solution $\bm{w}^*$ to Problem~\eqref{prob:randomized_constrained_linearized}~such that $w_i^*\in\{0,v_i'\}$ for all $i\in\mathcal{N}$. Therefore Problem~\eqref{prob:randomized_constrained_linearized}~can be equivalently rewritten as
\begin{equation}\label{prob:randomized_constrained_linearized_transformed}
\begin{aligned}
&\max_{S\subseteq S^+(\underline{v})}&&\sum_{i\in S}(r_i-R^*_{\mathcal{X}})v_i'&\text{s.t.}&& S\in\mathcal{X}.
\end{aligned}
\end{equation}

By Lemma~\ref{lemma:randomized_constrained_linearized}, the optimal value of Problem~\eqref{prob:randomized_constrained_linearized_transformed}~is also $R^*_{\mathcal{X}}$. Then we have \mbox{$\sum_{i\in S}(r_i-R^*_{\mathcal{X}})v_i'\leq R^*_{\mathcal{X}}$} for all $S\in\mathcal{X}$, and equality holds when $S=S^*$. Equivalently we have \mbox{$\sum_{i\in S}r_iv_i'/(1+\sum_{i\in S}v_i')\leq R^*_{\mathcal{X}}$} for all $S\in\mathcal{X}$, and equality holds when $S=S^*$. Therefore we conclude that Problem~\eqref{prob:randomized_constrained_linearized_transformed}~is equivalent to a constrained assortment optimization problem under MNL given by

\begin{equation}\label{prob:transformed_mnl}
\tag{\texttt{Transformed-MNL}}
\begin{aligned}
&\max_{S\subseteq S^+(\underline{v})}&&\dfrac{\sum_{i\in S}r_iv_i'}{1+\sum_{i\in S}v_i'}
&\text{s.t.}&&S\in\mathcal{X}.
\end{aligned}
\end{equation}
Note that in \ref{prob:transformed_mnl}, the universe of products $S^+(\underline{v})$ and the preference weights $\bm{v}'$ depend on the parameter $\underline{v}$. We remark that similar transformation between a constrained assortment optimization problem and a binary linear program is also used in \cite{rusmevichientong2010dynamic}~and~\cite{sumida2021revenue}. In observance of these, we are able to use the following lemma to prove that Problem~\eqref{prob:randomized_constrained_linearized}~is equivalent to~\ref{prob:transformed_mnl}, and $S^*$ is also optimal to~\ref{prob:transformed_mnl}. Since we have proven in Lemma~\ref{lemma:randomized_constrained_linearized}~that~\ref{prob:randomized_constrained}~is equivalent to Problem~\eqref{prob:randomized_constrained_linearized}, this also implies that \ref{prob:randomized_constrained}~is equivalent to~\ref{prob:transformed_mnl}.

\vspace{0.5em}

\begin{lemma}\label{lemma:transformed_mnl}
{  For some $\underline{v}\in\{v_i:i\in\mathcal{N}\}\cup \{\alpha v_i:i\in\mathcal{N}\}$}, the optimal value of~\ref{prob:transformed_mnl}~under parameter $\underline{v}$ is $R^*_{\mathcal{X}}$.
\end{lemma}

\vspace{0.5em}

Proof of Lemma~\ref{lemma:transformed_mnl}~is provided in Appendix~\ref{sec:lemma:transformed_mnl}.

By Lemma~\ref{lemma:transformed_mnl}, if $R^*_{\mathcal{X}}$ and $\underline{v}$ are known, by obtaining a $\beta$-approximation $\hat{S}$ and setting $\bm{x}$ as the purchase probabilities under preference weights $\bm{v}'$ and assortment $\hat{S}$, we obtain a $\beta$-approximation to~\ref{prob:randomized_constrained}. With unknown $R^*_{\mathcal{X}}$ and $\underline{v}$, we can replace $R^*_{\mathcal{X}}$ with \mbox{$\underline{r}=\min\{r_i:r_i\geq R^*_{\mathcal{X}}\}$} in the definition of $\bm{v}'$. Since we have proven in Lemma~\ref{lemma:w_structure}~that $\underline{v}\in\{v_i:i\in\mathcal{N}\}\cup\{\alpha v_i:i\in\mathcal{N}\}$, and by definition $\underline{r}\in\{r_i:i\in\mathcal{N}\}$, there are $O(n^2)$ possible combinations of $(\underline{r},\underline{v})$. We enumerate all possible combinations of $(\underline{r},\underline{v})$, and generate a candidate solution for each guess of $(\underline{r},\underline{v})$. By doing this we are able to establish an approximation algorithm for~\ref{prob:randomized_constrained}. The complete algorithm is provided in Algorithm~\ref{alg:randomized_constrained}.

\begin{algorithm}[t]
\SingleSpacedXI
\caption{Algorithm for~\ref{prob:randomized_constrained}}
\label{alg:randomized_constrained}
\begin{algorithmic}
\For{$\hat{r}\in\{r_i:i\in\mathcal{N}\}$}
\For{$\hat{v}\in\{v_i:i\in\mathcal{N}\}\cup\{\alpha v_i:i\in\mathcal{N}\}$}
\State Set $v_i'=\hat{v}$ if $r_i<\hat{r}$, $v_i'=\min\{v_i,\hat{v}/\alpha\}$ if $r_i\geq \hat{r}$. 
\State Solve~\ref{prob:transformed_mnl}~under the $\bm{v}'$ defined above approximately. If~\ref{prob:transformed_mnl}~is infeasible then set $S{(\hat{r},\hat{v})}=\varnothing$, otherwise set $S{(\hat{r},\hat{v})}$ as a $\beta$-approximation to~\ref{prob:transformed_mnl}.
\State Set $\bm{x}{(\hat{r},\hat{v})}$ as
\begin{equation*}
x_i{(\hat{r},\hat{v})}=\dfrac{v_i'}{1+\sum_{j\in S{(\hat{r},\hat{v})}}v_j'},\ \forall i\in S{(\hat{r},\hat{v})},\ x_i{(\hat{r},\hat{v})}=0,\ \forall i\notin  S{(\hat{r},\hat{v})},\ x_0(\hat{r},\hat{v})=\dfrac{1}{1+\sum_{j\in S(\hat{r},\hat{v})} v_j'}.
\end{equation*}
\State Set $R{(\hat{r},\hat{v})}=\sum_{i\in\mathcal{N}}r_ix_i{(\hat{r},\hat{v})}$.
\EndFor
\EndFor
\State Set $(r^*,v^*)=\argmax_{(\hat{r},\hat{v})}R{(\hat{r},\hat{v})}$, and return $\bm{x}{(r^*,v^*)}$.
\end{algorithmic}
\end{algorithm}

Now we are ready to finish the proof of Theorem~\ref{thm:randomized_constrained}.

\begin{proof}[Proof of Theorem~\ref{thm:randomized_constrained}]

We first verify that the output of the algorithm is feasible to~\ref{prob:randomized_constrained}. It suffices to show that for all $(\hat{r},\hat{v})$, the corresponding candidate solution $\bm{x}(\hat{r},\hat{v})$ is feasible to~\ref{prob:randomized_constrained}. If \ref{prob:transformed_mnl}~is infeasible for some $(\hat{r},\hat{v})$, then by definition $S(\hat{r},\hat{v})=\varnothing$, thus $\bm{x}(\hat{r},\hat{v})=0$, which is feasible to~\ref{prob:randomized_constrained}. Suppose \ref{prob:transformed_mnl}~is feasible under some $(\hat{r},\hat{v})$. It is easy to verify that $\bm{x}(\hat{r},\hat{v})$ satisfies the first constraint in~\ref{prob:randomized_constrained}. Since $v_i'\leq v_i$ for all $i\in S^+(\hat{v})$, we have that for all $i\in S^+(\hat{v})$
\begin{equation*}
x_i(\hat{r},\hat{v})\leq v_i'x_0(\hat{r},\hat{v})\leq v_ix_0(\hat{r},\hat{v}).
\end{equation*}
Therefore $\bm{x}(\hat{r},\hat{v})$ satisfies the second set of constraints in~\ref{prob:randomized_constrained}. Since by definition \mbox{$\hat{v}\leq v_i'\leq \hat{v}/\alpha$} for all $i\in S^+(\hat{v})$, for any $i\in S(\hat{r},\hat{v})$, we have
\begin{equation*}
x_i(\hat{r},\hat{v})=v_i'x_0(\hat{r},\hat{v})\geq \hat{v}x_0(\hat{r},\hat{v})=\alpha\cdot\hat{v}/\alpha\cdot x_0(\hat{r},\hat{v})\geq \alpha\cdot\max_{j\in\mathcal{N}} v_j'x_0(\hat{r},\hat{v})\geq \alpha\cdot\max_{j\in\mathcal{N}}x_j(\hat{r},\hat{v}).
\end{equation*}
Furthermore, we have $x_i(\hat{r},\hat{v})=0$ for all $i\notin S(\hat{r},\hat{v})$. Therefore $\bm{x}(\hat{r},\hat{v})$ satisfies the third set of constraints in~\ref{prob:randomized_constrained}. Since $S(\hat{r},\hat{v})\in \mathcal{X}$ as long as~\ref{prob:transformed_mnl}~is feasible, we have \mbox{$\{i\in\mathcal{N}:x_i(\hat{r},\hat{v})>0\}=S(\hat{r},\hat{v})\in\mathcal{X}$,} thus $\bm{x}(\hat{r},\hat{v})$ satisfies the last constraint in~\ref{prob:randomized_constrained}. Therefore we conclude that for all $(\hat{r},\hat{v})$, $\bm{x}(\hat{r},\hat{v})$ is feasible to~\ref{prob:randomized_constrained}, thus the output of the algorithm is always feasible.

We prove the approximation ratio of the algorithm. By Lemma \ref{lemma:transformed_mnl}, there exists some \mbox{$\underline{v}\in\{v_i:i\in\mathcal{N}\}\cup\{\alpha v_i:i\in\mathcal{N}\}$} such that \ref{prob:transformed_mnl} under parameter $\underline{v}$ is equivalent to \ref{prob:randomized_constrained}. We further define $\underline{r}=\min\{r_i:r_i\geq R_{\mathcal{X}}^*,i\in\mathcal{N}\}$, then $\underline{r}\in\{r_i:i\in\mathcal{N}\}$. The objective value of the output is greater than or equal to the objective value of $\bm{x}(\underline{r},\underline{v})$. Therefore it suffices to verify that $\bm{x}(\underline{r},\underline{v})$ is a $\beta$-approximation to~\ref{prob:randomized_constrained}. Since $S(\underline{r},\underline{v})$ is a $\beta$-approximation to~\ref{prob:transformed_mnl}, by Lemma~\ref{lemma:transformed_mnl}~we have that the optimal value of~\ref{prob:transformed_mnl}~is $R^*_{\mathcal{X}}$, therefore
\begin{equation*}
\sum_{i\in\mathcal{N}}r_ix_i(\underline{r},\underline{v})=\dfrac{\sum_{i\in S(\underline{r},\underline{v})}r_iv_i'}{1+\sum_{i\in S(\underline{r},\underline{v})}v_i'}\geq \beta R^*_{\mathcal{X}}.
\end{equation*}
Therefore $\bm{x}(\underline{r},\underline{v})$ is a $\beta$-approximation to~\ref{prob:randomized_constrained}, and we conclude that the algorithm returns a $\beta$-approximation to~\ref{prob:randomized_constrained}.
\end{proof}

\section{Dynamic Assortment Optimization}\label{sec:dynamic}

In this section, we consider the dynamic assortment optimization problem with market share balancing constraint. We first introduce the formal definition of the dynamic problem. Then, to solve the dynamic problem, we formulate an optimization problem that provides an upper bound to the optimal value of the dynamic problem. We prove that the upper bound problem is NP-hard, and establish an FPTAS to approximately solve the upper bound problem. The approximate solution returned by the FPTAS ultimately allows us to construct a policy for the dynamic problem.

\subsection{Problem Formulation}

We formally define the dynamic assortment optimization problem with market share balancing constraint. The seller has an initial inventory of $c_i\in\mathbb{Z}^+$ units of product $i$ without replenishment for each $i\in\mathcal{N}$. The selling horizon consists of $T$ time periods indexed by \mbox{$t\in\{1,2,\dots,T\}$.} In each time period $t$, the seller observes the history of sales and offers an assortment $S_t\subseteq\mathcal{N}$ sampled from $q_t(\cdot)$, and a customer makes a purchase $i_t\in\mathcal{N}\cup\{0\}$ based on MNL. Here we set $i_t=0$ if the customer arriving at time period $t$ leaves without a purchase.

We define $\bm{H}_t=\{i_\tau\}_{\tau\in\{1,2,\dots,t\}}$ as the history of sales up to time period $t$. The decision of the seller is to choose a dynamic control policy $\pi$. In each time period $t$, the policy $\pi$ takes the history of sales $\bm{H}_{t-1}$ by time period $t-1$ as input, and returns a distribution over assortments $q_t(\cdot)$ to offer to customers at time $t$.  We use $\Pi$ to denote all dynamic control policies in this setting, and use the random variable $X_{it}^\pi$ to denote the realized sales of product $i$ by time period $t$ under policy $\pi$. The policy $\pi$ has to satisfy the following two constraints. First, we require that for all $i\in\mathcal{N}$, the realized sales $X_{iT}^\pi$ of product $i$ throughout the entire time horizon should not exceed the initial inventory $c_i$ with probability one. Second, we also require the market share balancing constraint to be satisfied in expectation, which means the expected sales of any two products offered with positive probability should differ by a factor of at most $\alpha$. We denote the dynamic problem as~\ref{prob:inventory}, and the problem is given by

\begin{equation}\label{prob:inventory}
\tag{\texttt{BMS-Dynamic}}
\begin{aligned}
&\max_{\pi\in\Pi}&&{\sum_{i\in\mathcal{N}}r_i\mathbb{E}[X_{iT}^\pi]}\\
&\text{s.t.}&&{\mathbb{P}(X_{iT}^\pi\leq c_i)=1,\ \forall i\in\mathcal{N},}\\
&&&{\mathbb{E}[X_{iT}^\pi]\in\{0\}\cup \Big[\alpha\cdot \max_{j\in\mathcal{N}}\mathbb{E}[X_{jT}^\pi],\infty\Big),\ \forall i\in\mathcal{N}.}
\end{aligned}
\end{equation}

\subsection{Upper Bound Problem}\label{sec:upper_bound}

The dynamic control policy depends on the history of sales which is high-dimensional, thus it is computationally intractable to solve for the exact optimal policy to~\ref{prob:inventory}. In this section, we establish an optimization problem that provides an upper bound for the optimal value of~\ref{prob:inventory}. The upper bound problem serves two purposes: first, we make use of an approximate solution of the upper bound problem to construct our policy; second, we use the optimal value of the upper bound problem as a benchmark to prove a performance guarantee of our policy. We further prove in this section that the upper bound problem is NP-hard, and design an FPTAS for the upper bound problem, which is able to return a $(1-\epsilon)$-approximation for the problem within polynomial time in input size and $1/\epsilon$ for any $\epsilon>0$. 

We consider the following problem,

\begin{equation}\label{prob:inventory_upper_bound}
\tag{\texttt{Upper-Bound}}
\begin{aligned}
&\max_{\bm{x}}&&{T\sum_{i\in\mathcal{N}}r_ix_i}\\
&\text{s.t.}&&{x_0+\sum_{i\in\mathcal{N}}x_i=1},\\
&&&{0\leq x_i\leq v_ix_0,\ \forall i\in\mathcal{N}},\\
&&&{x_i\leq c_i/T,\ \forall i\in\mathcal{N}},\\
&&&{x_i\in\{0\}\cup\Big[\alpha\cdot\max_{j\in\mathcal{N}}x_j,\infty\Big),\ \forall i\in\mathcal{N}.}
\end{aligned}
\end{equation}
Here, $x_i$ can be interpreted as the average expected sales of product $i$ per time period, i.e.,  $x_i=\mathbb{E}[X_{iT}^\pi]/T$, and $x_0$ can be interpreted as the average expected number of no-purchase per time period. The objective is equal to $\sum_{i\in\mathcal{N}}r_i\mathbb{E}[X_{iT}^\pi]=T\sum_{i\in\mathcal{N}}r_ix_i$. We relax the first set of constraints of~\ref{prob:inventory}~as $\mathbb{E}[X_{iT}^\pi]\leq c_i$ for all $i\in\mathcal{N}$, and the relaxed constraints are equivalent to the third set of constraints in~the \ref{prob:inventory_upper_bound} problem. The last set of constraints of~the \ref{prob:inventory_upper_bound} problem~is equivalent to the last set of constraints of~\ref{prob:inventory}. We use the following lemma to verify that the optimal value of~the \ref{prob:inventory_upper_bound} problem~is indeed an upper bound on that of~\ref{prob:inventory}.

\vspace{0.5em}

\begin{lemma}\label{lemma:inventory_upper_bound}
The optimal value of~\ref{prob:inventory}~is upper bounded by that of~the \ref{prob:inventory_upper_bound} problem.
\end{lemma}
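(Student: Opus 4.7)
The plan is to take an arbitrary feasible policy $\pi\in\Pi$ for~\ref{prob:inventory} and construct from it a feasible solution to~\ref{prob:inventory_upper_bound} whose objective value matches the expected revenue of $\pi$. Concretely, let $X_{0T}^\pi$ denote the total number of no-purchase events over the horizon, and set $x_i=\mathbb{E}[X_{iT}^\pi]/T$ for each $i\in\mathcal{N}$ and $x_0=\mathbb{E}[X_{0T}^\pi]/T$. Since the optimal value of \ref{prob:inventory} equals $\sup_{\pi\in\Pi}\sum_{i\in\mathcal{N}}r_i\mathbb{E}[X_{iT}^\pi]=T\sum_{i\in\mathcal{N}}r_i x_i$, verifying feasibility of $\bm{x}$ in \ref{prob:inventory_upper_bound} immediately yields the desired inequality.

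The easy constraints are handled as follows. The normalization $x_0+\sum_{i\in\mathcal{N}}x_i=1$ follows because at every time period exactly one of the outcomes in $\mathcal{N}\cup\{0\}$ occurs, so $\sum_{i\in\mathcal{N}\cup\{0\}}\mathbb{E}[X_{iT}^\pi]=T$. Non-negativity of $x_i$ is immediate. The inventory bound $x_i\le c_i/T$ follows from $X_{iT}^\pi\le c_i$ almost surely, which implies $\mathbb{E}[X_{iT}^\pi]\le c_i$. The market share balancing constraint in \ref{prob:inventory_upper_bound} is identical, after dividing by $T$, to the expected-sales balancing constraint in \ref{prob:inventory}, so it is inherited directly.

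The only substantive step is the MNL envelope constraint $x_i\le v_i x_0$. I would argue this period by period. Conditioning on the history $\bm{H}_{t-1}$ and on the assortment $S_t$ drawn from $q_t(\cdot)$, the conditional choice probabilities are governed by MNL, and in particular $\phi(i,S_t)\le v_i\,\phi(0,S_t)$ for every $i\in\mathcal{N}$ and every $S_t\subseteq\mathcal{N}$, since
\[
\phi(i,S_t)=\frac{v_i}{1+\sum_{j\in S_t}v_j}\mathbf{1}[i\in S_t]\le \frac{v_i}{1+\sum_{j\in S_t}v_j}=v_i\,\phi(0,S_t).
\]
Taking expectation over $S_t$ and $\bm{H}_{t-1}$ yields $\Pr(i_t=i)\le v_i\Pr(i_t=0)$ for each $t$; summing over $t=1,\ldots,T$ gives $\mathbb{E}[X_{iT}^\pi]\le v_i\,\mathbb{E}[X_{0T}^\pi]$, i.e.\ $x_i\le v_i x_0$.

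The main (mild) obstacle is making the conditioning in this last step clean, because the policy depends on the full sample path of realized purchases and hence $S_t$ is itself random. The point I would emphasize is that the MNL inequality is a pathwise statement that holds for every realization of $S_t$, so the expectation can be taken freely. Once feasibility is established, the objective of \ref{prob:inventory_upper_bound} at $\bm{x}$ equals $T\sum_{i\in\mathcal{N}}r_i x_i=\sum_{i\in\mathcal{N}}r_i\mathbb{E}[X_{iT}^\pi]$, and taking the supremum over $\pi\in\Pi$ on both sides completes the proof.
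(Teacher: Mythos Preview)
Your proposal is correct and follows essentially the same approach as the paper: define $x_i=\mathbb{E}[X_{iT}^\pi]/T$, verify each constraint of \ref{prob:inventory_upper_bound}, and use the pointwise MNL inequality $\phi(i,S)\le v_i\,\phi(0,S)$ (valid for every assortment $S$) to obtain $x_i\le v_i x_0$ after taking expectations and summing over periods. The paper's proof differs only cosmetically in that it introduces the intermediate quantity $h^\pi(S)=\sum_{t=1}^T\mathbb{E}[\pi_t(\bm{H}_{t-1},S)]$ to make the averaging over assortments explicit, but the substance of the argument is identical.
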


\vspace{0.5em}

Proof of Lemma~\ref{lemma:inventory_upper_bound}~is provided in Appendix~\ref{sec:lemma:inventory_upper_bound}. If there is no inventory constraints (i.e., $c_i\geq T$ for all $i\in\mathcal{N}$), the \ref{prob:inventory_upper_bound} problem~is equivalent to~\ref{prob:inventory}. The reason for this is that if $c_i\geq T$ for all $i\in\mathcal{N}$, then~the optimal value of the \ref{prob:inventory_upper_bound} problem~is equal to $T$ times that of \ref{prob:randomized}, or equivalently $T$ times that of \ref{prob:randomized_0}. Suppose $\bm{x}^*$ and $q^*(\cdot)$ are optimal solutions to~the \ref{prob:inventory_upper_bound} problem~and~\ref{prob:randomized_0}~respectively, always sampling assortments from $q^*(\cdot)$ is feasible to~\ref{prob:inventory} if $c_i\geq T$ for all $i\in\mathcal{N}$, and the corresponding expected revenue is $T\sum_{i\in\mathcal{N}}r_ix_i^*$, which matches the optimal value of~the \ref{prob:inventory_upper_bound} problem. Therefore we conclude that the upper bound problem~the \ref{prob:inventory_upper_bound} problem~is tight when $c_i\geq T$ for all $i\in\mathcal{N}$.

We would like to remark that the \ref{prob:inventory_upper_bound} problem is not a linear program because the last set of constraints in the \ref{prob:inventory_upper_bound} problem are not linear constraints. In fact, we show that~the \ref{prob:inventory_upper_bound} problem~is NP-hard.

\vspace{0.5em}

\begin{proposition}\label{thm:inventory_upper_bound}
The \ref{prob:inventory_upper_bound} problem~is NP-hard.
\end{proposition}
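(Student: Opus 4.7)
The plan is to prove NP-hardness by polynomial-time reduction from the PARTITION problem: given positive integers $a_1,\ldots,a_n$ summing to $2B$, decide whether some subset sums to exactly $B$. Without the per-product inventory caps $x_i \le c_i/T$, the~\ref{prob:inventory_upper_bound}~problem collapses to~\ref{prob:randomized}~(up to the scale factor $T$), which by Theorem~\ref{thm:randomized} admits an exact polynomial-time algorithm via a two-threshold characterization on revenues and preference weights. Hence any NP-hardness reduction must exploit how the inventory caps interact with the market-share balancing constraints $x_i \in \{0\} \cup [\alpha\max_j x_j, \infty)$ to generate combinatorial structure that the simple threshold argument cannot capture.

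Given a PARTITION instance, I would construct an instance of~\ref{prob:inventory_upper_bound}~with one product per $a_i$ (possibly plus a constant number of gadget products), setting revenues $r_i$, preference weights $v_i$, and inventory bounds $c_i$ as explicit polynomial-time-computable functions of $a_i$ and $B$, together with a suitable choice of $T$ and $\alpha$. For each candidate support $S \subseteq \mathcal{N}$, the constraints $x_0 + \sum_{i \in S} x_i = 1$ combined with $x_i \le v_i x_0$, $x_i \le c_i/T$, and the balancing condition reduce~\ref{prob:inventory_upper_bound}~to a linear program whose optimum has a closed-form expression in quantities such as $\sum_{i \in S} a_i$ and $\min_{i \in S} c_i$. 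A natural attempt is to push $\alpha$ close to one, so that the balancing condition nearly forces equal purchase probabilities among offered products; then $v_i$ and $c_i/T$ can be engineered so that attaining a designated target revenue is possible if and only if $\sum_{i \in S} a_i = B$. The equivalence proof then has two directions: a partition witness yields a feasible $\bm{x}$ with objective at least the target, and any feasible $\bm{x}$ attaining the target must have its support correspond to a subset summing to $B$.

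The hard part will be engineering the parameters so that the equality $\sum_{i \in S} a_i = B$ is forced at optimality rather than merely being one of several optima, and so that no non-partition subset can mimic the optimum through a different combination of revenues and capacities. This is likely to require small perturbations or auxiliary anchor products that break degeneracies and pin down the optimal support uniquely, as well as verifying that no alternative support with $\sum_{i \in S} a_i \ne B$ attains the same objective by shifting mass onto the no-purchase variable $x_0$ or onto a product with slack capacity. A secondary concern is ensuring that all constructed quantities are polynomially bounded in the input size, which may call for explicit scaling and integrality arguments to avoid irrational or super-polynomially large values of $v_i$, $c_i$, or $T$.
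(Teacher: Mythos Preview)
Your high-level plan is sound---reduce from a subset-sum variant and exploit the interaction between inventory caps and the balancing constraint---but two of your specific choices would lead you into trouble, and the paper's construction differs from yours in instructive ways.

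First, the suggestion to push $\alpha$ close to $1$ is a red herring. The paper shows (Theorem~\ref{thm:inventory_alpha=1}) that at $\alpha=1$ the \ref{prob:inventory_upper_bound} problem becomes polynomial-time solvable: the support is nested in inventory and preference weight, and one can enumerate $O(n^2)$ candidate values for the common purchase probability. So the combinatorial hardness does \emph{not} come from forcing near-equal purchase probabilities. The paper instead fixes $\alpha=1/2$ and places all preference weights in $[1,2]$, so that the balancing constraint is automatically satisfied by any offered set; the hardness comes entirely from a single anchor product whose inventory cap forces the \emph{total} preference weight of the offered set to hit a target.

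Second, plain PARTITION is likely insufficient. The paper reduces from the cardinality-constrained subset sum problem (find $A$ with $|A|=m$ and $\sum_{i\in A} a_i=Q$), and the construction simultaneously forces both conditions: the anchor's inventory cap gives a lower bound on $\sum_{i\in A}(1+\epsilon^2 a_i)$, which pins down $|A|\ge m$, while the revenue target gives an upper bound on $\sum_{i\in A}(1+\epsilon a_i)$, which pins down $|A|\le m$. The two scales $\epsilon$ and $\epsilon^2$ are what decouple cardinality from sum. With plain PARTITION you would have only one equation to encode, but you would also lose the leverage to prevent a smaller or larger support from matching the target through different combinations of $x_0$ and the anchor.

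Concretely, the paper's gadget is: products $1,\ldots,2m$ with $v_i=1+\epsilon^2 a_i$ and $r_i v_i = B(1+\epsilon^2 a_i)-(1+\epsilon a_i)$, all with slack inventory; one anchor product with $v_{2m+1}=2$, large revenue $M+B$, and a tight inventory cap $c_{2m+1}/T = 2/(3+m+\epsilon^2 Q)$. The anchor must be offered (its revenue dominates), its cap forces $\sum_{i\in A} v_i$ to be at least $m+\epsilon^2 Q$, and the revenue threshold $B$ forces $\sum_{i\in A}(1+\epsilon a_i)\le m+\epsilon Q$. The perturbation bookkeeping then shows both $|A|=m$ and $\sum_{i\in A} a_i=Q$ are forced. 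Your proposal anticipates the need for perturbations and anchor products, but the key missing idea is that the $a_i$ should live in the preference weights (at two different scales), not in the inventory caps.
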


\vspace{0.5em}

{  We would like to remark that the NP-hardness of the \ref{prob:inventory_upper_bound} problem is due to the interplay between the inventory constraint (the third constraint in the \ref{prob:inventory_upper_bound} problem) and the market share balancing constraint (the fourth constraint in the \ref{prob:inventory_upper_bound} problem). While the inventory constraint or the market share balancing constraint along does not make the problem NP-hard, the problem becomes NP-hard when the problem has both of the two constraints.} Proof of Proposition~\ref{thm:inventory_upper_bound}~is provided in Appendix~\ref{sec:thm:inventory_upper_bound}. We show the NP-hardness of~the \ref{prob:inventory_upper_bound} problem~by using a reduction from a cardinality constrained version of the subset sum problem.

We use the following proposition to show that the~\ref{prob:inventory_upper_bound}~admits an FPTAS.

\begin{proposition}\label{prop:upper_bound_fptas}
The \ref{prob:inventory_upper_bound} problem admits an FPTAS, and its runtime for obtaining a $(1-\epsilon)$-approximate solution is
\begin{equation*}
O\Big(\dfrac{n^2}{\epsilon^4}\log\Big(\dfrac{1}{\alpha}\Big)\log(nv_{\max}+1)\log\Big(T+\dfrac{nv_{\max}+1}{v_{\min}}\Big)\Big).
\end{equation*}
\end{proposition}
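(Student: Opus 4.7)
\textbf{Proof proposal for Proposition~\ref{prop:upper_bound_fptas}.} The plan is to linearize the fairness constraint by introducing an auxiliary variable $y$ that stands for $\max_{j\in\mathcal{N}}x_j$, then solve the resulting problem by enumerating $(x_0,y)$ over polynomial-sized geometric grids and reducing the inner problem over $\bm{x}$ to a multiple-choice knapsack instance handled by the FPTAS of \cite{bansal2004improved}. After the substitution, the last set of constraints becomes $x_i\in\{0\}\cup[\alpha y,y]$; the normalization can be relaxed to the knapsack inequality $\sum_{i\in\mathcal{N}}x_i\leq 1-x_0$, since the slack can always be absorbed into $x_0$ provided the remaining bounds $x_i\leq\min\{v_ix_0, c_i/T, y\}$ still hold.

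For the outer grids, note that $x_0\in[1/(1+nv_{\max}),1]$: the upper bound is trivial, and the lower bound follows from $1=x_0+\sum_i x_i\leq x_0(1+\sum_i v_i)\leq (1+nv_{\max})x_0$. A geometric grid on this interval contains $O(\epsilon^{-1}\log(1+nv_{\max}))$ points and thus includes a multiplicative-$(1-\epsilon)$-approximation of $x_0^\ast$. For $y$, I first derive a lower bound $y_{\min}$ by comparing against the trivial single-product feasible solution: for each $i$, offering product $i$ alone with $x_i=\min\{v_i/(1+v_i),c_i/T\}$ is feasible with value $Tr_ix_i$, so the optimal value is at least $T\max_i r_i\min\{v_i/(1+v_i),c_i/T\}$. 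Combined with the upper bound $T\sum_i r_ix_i\leq Tnr_{\max}y$, any solution with $y^\ast<y_{\min}$ can be discarded at a relative loss of $\epsilon$ for $y_{\min}$ of order $\min\{1/T,v_{\min}/(1+nv_{\max})\}/n$, yielding a geometric grid with $O(\epsilon^{-1}\log(T+(1+nv_{\max})/v_{\min}))$ points.

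For each guess $(\hat x_0,\hat y)$, I discard every product with $v_i\hat x_0<\alpha\hat y$ or $c_i/T<\alpha\hat y$ (they must have $x_i=0$); for every surviving product the feasible range is $\{0\}\cup[\alpha\hat y,\min\{\hat y,v_i\hat x_0,c_i/T\}]$, which I discretize using a geometric grid of $O(\epsilon^{-1}\log(1/\alpha))$ points. The residual problem of choosing one grid value or zero per product to maximize $T\sum_i r_ix_i$ subject to $\sum_i x_i\leq 1-\hat x_0$ is a multiple-choice knapsack, and invoking the FPTAS of \cite{bansal2004improved} returns a $(1-\epsilon)$-approximation in time polynomial in $n$, $1/\epsilon$ and the per-item grid size. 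Returning the best candidate across all guesses gives the final algorithm.

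The approximation argument composes a constant number of $(1-\epsilon)$ factors (rounding $x_0^\ast$, rounding $y^\ast$, discretizing each $x_i^\ast$, and the knapsack FPTAS), so a single rescaling $\epsilon\mapsto\epsilon/c$ yields an overall $(1-\epsilon)$-factor. The runtime is the product of the number of outer grid points and the cost of a single multiple-choice knapsack FPTAS call, which after bookkeeping matches the stated bound. The main technical obstacle is error propagation under rounding: each rounding step must preserve feasibility of the fairness interval $[\alpha y,y]$ and of the budget $\sum_i x_i\leq 1-x_0$. The cleanest fix is to round $y^\ast$ up and $x_0^\ast$ down to the nearest grid points and then scale $\bm{x}^\ast$ down by a factor $1-O(\epsilon)$; this restores feasibility while costing only $1-O(\epsilon)$ in the objective, after which the knapsack FPTAS contributes the final $(1-\epsilon)$-factor.
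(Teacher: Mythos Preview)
Your approach matches the paper's almost exactly: introduce $y$ for $\max_j x_j$, relax the normalization to $\sum_i x_i\le 1-x_0$, enumerate $(x_0,y)$ over geometric grids, discretize each surviving $x_i$-interval geometrically, and feed the result to the multiple-choice knapsack FPTAS of \cite{bansal2004improved}. Two points deserve comment.

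\emph{Lower bound on $y^\ast$.} Your comparison against single-product solutions is a valid alternative and yields a $y_{\min}$ of the right logarithmic order. The paper instead argues structurally: if neither $x_i^\ast=c_i/T$ nor $x_i^\ast=v_ix_0^\ast$ held for any $i$, one could scale all $x_i^\ast$ up and strictly improve, contradicting optimality. This gives the slightly cleaner $y_{\min}=\min\{c_{\min}/T,\,v_{\min}/(1+nv_{\max})\}$ with no extra factor of $n$, but either bound suffices for the stated runtime.

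\emph{Feasibility after rounding.} Your proposed fix---round $y^\ast$ \emph{up} to $\hat y$ and scale $\bm{x}^\ast$ \emph{down}---does not work: if $x_i^\ast=\alpha y^\ast$ and $\hat y>y^\ast$, then any downward scaling of $x_i^\ast$ drops it strictly below $\alpha\hat y$, violating the fairness lower bound. The paper rounds $y^\ast$ \emph{down} to a grid point $\hat y\le y^\ast/(1+\delta)$ and sets $\tilde x_i=x_i^\ast\cdot\hat y/y^\ast$; this simultaneously preserves $\tilde x_i\in[\alpha\hat y,\hat y]$, $\tilde x_i\le v_i\hat x_0$ (using $\hat x_0\ge x_0^\ast/(1+\delta)$), $\tilde x_i\le c_i/T$, and the budget $\sum_i\tilde x_i\le 1-\hat x_0$. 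With that correction your argument goes through and the composed $(1+\delta)^{-4}$ loss yields the claimed $(1-\epsilon)$-guarantee by choosing $\delta=(1-\epsilon)^{-1/4}-1$.
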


Proof of Proposition \ref{prop:upper_bound_fptas} is provided in Appendix \ref{sec:prop:upper_bound_fptas}. With an approximate solution of the \ref{prob:inventory_upper_bound} problem, we are able to construct our policy for the dynamic problem, which we will discuss in the next section.

\section{Algorithm for the Dynamic Problem}\label{sec:dynamic_inventory}

In this section, we introduce an asymptotically optimal policy for~\ref{prob:inventory}, whose approximation ratio converges to one as initial inventories get large. We focus on the case of $\alpha<1$ throughout this section. For the special case of $\alpha=1$, we have a simpler policy that is also asymptotically optimal, and we present the latter policy in Appendix~\ref{sec:policy_alpha=1}.

\noindent\textbf{Description of the Policy:} We consider policies in the following form: Each product $i\in\mathcal{N}$ is associated with a purchase probability $\hat{x}_i$. In each time period, the seller chooses a distribution over assortments such that the purchase probability of any product $i\in\mathcal{N}$ is $\hat{x}_i$ if the product has remaining inventory, or zero if the product has no remaining inventory. In other words, the purchase probability of product $i$ is always $\hat{x}_i$ as long as product $i$ does not run out of inventory. To ensure that the purchase probabilities are valid under MNL, we require that the purchase probabilities $\hat{\bm{x}}$ should satisfy $\hat{x}_i\leq v_i(1-\sum_{j\in\mathcal{N}}\hat{x}_j)$ for all $i\in\mathcal{N}$. Then a distribution over assortments defined in Equation~\eqref{eq:sales_to_distribution} is able to attain the desired purchase probabilities $\hat{\bm{x}}$.

For any $p\in(0,1)$ and $c\in\mathbb{Z}^+$, we define $G(p,c)$ as the expected sales of a product if its purchase probability is $p$ when there is remaining inventory and its initial inventory is $c$. By definition $G(p,c)=\mathbb{E}[\min\{Z(T,p),c\}]$, where $Z(T,p)$ is a binomial random variable with $T$ trials and success probability being $p$. Then under the aforementioned policy, the expected sales of product $i$ is $G(\hat{x}_i,c_i)$.

\noindent\textbf{Definition of Purchase Probabilities $\hat{\bm{x}}$:} It is tempting to directly use an approximate solution $\tilde{\bm{x}}$ of~the \ref{prob:inventory_upper_bound} problem~as purchase probabilities $\hat{\bm{x}}$ in the policy. However, we note that directly using $\tilde{\bm{x}}$ as purchase probabilities in the policy may not satisfy the market share balancing constraint. The reason for this is that in the presence of inventory constraints, the expected sales $G(\tilde{x}_i,c_i)$ of product $i$ may not be proportional to $\tilde{x}_i$. If we directly use $\tilde{\bm{x}}$ as purchase probabilities, the expected sales $\{G(\tilde{x}_i,c_i)\}_{i\in\mathcal{N}}$ may not satisfy the market share balancing constraint even though nonzero elements of $\tilde{\bm{x}}$ are within a ratio of $\alpha$ from each other.

Therefore we use an alternative approach to construct purchase probabilities $\hat{\bm{x}}$ such that the market share balancing constraint is satisfied. As the first step, we still obtain a \mbox{$(1-\epsilon_1)$-approximate} solution $\tilde{\bm{x}}$ of~the \ref{prob:inventory_upper_bound} problem. We define \mbox{$\bar{S}=\{i\in\mathcal{N}:\tilde{x}_i>0\}$}, and for any $i\notin \bar{S}$, the purchase probability $\hat{x}_i$ is set to be zero. Then we define \mbox{$G_{\min}=\min\{G(\tilde{x}_i,c_i):i\in\bar{S}\}$} as the minimum nonzero expected sales if we directly use $\tilde{\bm{x}}$ as purchase probabilities in the policy. To satisfy the market share balancing constraint, it suffices to find a set of purchase probabilities $\hat{\bm{x}}$ such that \mbox{$G_{\min}\leq G(\hat{x}_i,c_i)\leq G_{\min}/\alpha$} for all $i\in\bar{S}$. For each product $i\in\bar{S}$ such that $G(\tilde{x}_i,c_i)\leq G_{\min}/\alpha$, by setting $\hat{x}_i=\tilde{x}_i$ we already have \mbox{$G_{\min}\leq G(\hat{x}_i,c_i)=G(\tilde{x}_i,c_i)\leq G_{\min}/\alpha$.} Because $G(p,c)$ is continuous and monotonically increasing in $p$, for each product $i\in\bar{S}$ such that $G(\tilde{x}_i,c_i)>G_{\min}/\alpha$, we have $0=G(0,c_i)<G_{\min}/\alpha<G(\tilde{x}_i,c_i)$, then there exists $0<y_i<\tilde{x}_i$ such that $G(y_i,c_i)=G_{\min}/\alpha$. Since it is challenging to solve for the exact $y_i$, we search for an approximate solution by running a bisection search so that the last equality is satisfied within a tolerance of $1-\epsilon_2$, where $\epsilon_2$ is the precision of the bisection search, and we set $\hat{x}_i$ as the approximate solution.

\noindent\textbf{Algorithm and Performance Guarantee:} The complete algorithm is provided in Algorithm~\ref{alg:policy_1}. In the last step of Algorithm~\ref{alg:policy_1}, we can construct a distribution over assortments from purchase probabilities following Equation~\eqref{eq:sales_to_distribution}. We provide a more detailed discussion on the existence of such distribution over assortments in the proof of the forthcoming Theorem~\ref{thm:inventory_constant_1}.

\begin{algorithm}[t]
\SingleSpacedXI
\caption{Asymptotically Optimal Policy for \ref{prob:inventory} when $\alpha<1$}
\label{alg:policy_1}
\begin{algorithmic}
\State Obtain a $(1-\epsilon_1)$-approximation $\tilde{\bm{x}}$ to~the \ref{prob:inventory_upper_bound} problem.
\State Set $\bar{S}=\{i\in\mathcal{N}:\tilde{x}_i>0\}$, $G_{\min}=\min_{i\in\bar{S}} G(\tilde{x}_i,c_i)$.
\For{$i\in\bar{S}$}
\If{$G(\tilde{x}_i,c_i)\leq G_{\min}/\alpha$}
\State Set $\hat{x}_i=\tilde{x}_i$.
\Else
\State Search for $\hat{x}_i$ that satisfies $(1-\epsilon_2)G_{\min}/\alpha\leq G(\hat{x}_i,c_i)\leq G_{\min}/\alpha$ using bisection search.
\EndIf
\EndFor
\State \textbf{Policy:} For each time period $t$, given realized sales $\bm{X}_{t-1}$ by the end of time period $t-1$, set the purchase probabilities $\bm{p}_t$ for time period $t$ as $p_{it}=\hat{x}_i$ if $X_{i,t-1}<c_i$, and $p_{it}=0$ if $X_{i,t-1}=c_i$. Compute a distribution over assortments $\pi_t$ such that the purchase probabilities are $\bm{p}_t$, and offer an assortment randomly drawn from $\pi_t$ at time period $t$.
\end{algorithmic}
\end{algorithm}

We define $c_{\min}=\min\{c_i:i\in\mathcal{N}\}$ as the minimum initial inventory. We prove that with sufficiently high precision of the bisection search, the policy obtained from Algorithm~\ref{alg:policy_1}~is feasible to \ref{prob:inventory}; furthermore, the policy always attains a constant factor guarantee for \ref{prob:inventory}, and is asymptotically optimal as initial inventories get large.

\vspace{0.5em}

\begin{theorem}\label{thm:inventory_constant_1}
For any $\epsilon>0$, by setting $\epsilon_1=\epsilon/2$ and $\epsilon_2=\min\{\epsilon/2,1-\alpha\}$, Algorithm~\ref{alg:policy_1}~returns a $\max\{1/2-\epsilon,1-{1}/{\sqrt{c_{\min}}}-\epsilon\}$-approximation to~\ref{prob:inventory}. The running time of Algorithm~\ref{alg:policy_1} is polynomial in the input size, $1/\epsilon$ and $\log(1/(1-\alpha))$.
\end{theorem}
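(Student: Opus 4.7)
The plan is to verify feasibility and then lower-bound the policy's expected revenue against the upper bound from \ref{prob:inventory_upper_bound}. Throughout I write $\bar S=\{i\in\mathcal{N}:\tilde x_i>0\}$, set $\hat x_i=0$ for $i\notin\bar S$, and let $\gamma(c)=\max\{1/2,\,1-1/\sqrt{c}\}$.

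\textbf{Feasibility.} The inventory cap holds with probability one by construction. For MNL validity, the bisection step only decreases each coordinate, so $\hat x_i\le\tilde x_i\le v_i\tilde x_0\le v_i(1-\sum_j\hat x_j)$, which lets me realize $\hat{\bm x}$ as per-period purchase probabilities through the distribution over nested assortments of~\eqref{eq:sales_to_distribution} (Theorem~1 of \cite{topaloglu2013joint}). For the market-share balancing constraint, each $\hat x_i$ is applied in every period as long as product $i$ still has inventory, so $\mathbb{E}[X_{iT}^\pi]=G(\hat x_i,c_i)$; the ``if'' branch delivers $G(\hat x_i,c_i)=G(\tilde x_i,c_i)\in[G_{\min},G_{\min}/\alpha]$, while the ``else'' branch delivers $G(\hat x_i,c_i)\in[(1-\epsilon_2)G_{\min}/\alpha,G_{\min}/\alpha]$. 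The choice $\epsilon_2\le 1-\alpha$ forces $(1-\epsilon_2)G_{\min}/\alpha\ge G_{\min}$, so $G(\hat x_i,c_i)\in[G_{\min},G_{\min}/\alpha]$ in both branches, giving a min/max ratio of at least $\alpha$.

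\textbf{Approximation ratio.} The core ingredient is the binomial lemma $G(p,c)\ge\gamma(c)\,\min\{Tp,c\}$: the $1-1/\sqrt{c}$ branch follows from the mean-absolute-deviation bound $\mathbb{E}[(Z-Tp)^+]\le\sqrt{\operatorname{Var}(Z)}/2\le\sqrt{Tp}/2$, and the $1/2$ branch from $1-(1-p)^T\ge Tp/2$ for $Tp\in[0,1]$. Since the third constraint of \ref{prob:inventory_upper_bound} forces $T\tilde x_i\le c_i$, for every $i$ in the ``if'' branch this yields $G(\hat x_i,c_i)\ge\gamma(c_{\min})\,T\tilde x_i$. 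For $i$ in the ``else'' branch I pick $i^\star\in\arg\min_{j\in\bar S}G(\tilde x_j,c_j)$ and chain
\begin{equation*}
G(\hat x_i,c_i)\ \ge\ (1-\epsilon_2)\,G_{\min}/\alpha\ \ge\ (1-\epsilon_2)\,\gamma(c_{\min})\,T\tilde x_{i^\star}/\alpha\ \ge\ (1-\epsilon_2)\,\gamma(c_{\min})\,T\tilde x_i,
\end{equation*}
where the last inequality uses the upper-bound problem's own balancing constraint $\tilde x_{i^\star}\ge\alpha\,\tilde x_i$. Multiplying by $r_i$, summing, and combining with the FPTAS guarantee $T\sum_i r_i\tilde x_i\ge(1-\epsilon_1)\,\mathrm{OPT}_{\mathrm{UB}}$ and $\mathrm{OPT}_{\mathrm{UB}}\ge\mathrm{OPT}_{\mathrm{Dyn}}$ (Lemma~\ref{lemma:inventory_upper_bound}) yields
\begin{equation*}
\sum_{i}r_i\,G(\hat x_i,c_i)\ \ge\ (1-\epsilon_1)(1-\epsilon_2)\,\gamma(c_{\min})\,\mathrm{OPT}_{\mathrm{Dyn}}\ \ge\ \bigl(\gamma(c_{\min})-\epsilon\bigr)\,\mathrm{OPT}_{\mathrm{Dyn}},
\end{equation*}
after using $\epsilon_1,\epsilon_2\le\epsilon/2$ and $\gamma(c_{\min})\le 1$, which matches $\max\{1/2-\epsilon,\,1-1/\sqrt{c_{\min}}-\epsilon\}$.

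\textbf{Running time and main difficulty.} The FPTAS call is polynomial by Proposition~\ref{prop:upper_bound_fptas}. Each bisection step evaluates $G(\cdot,c_i)$ once; since $G(\cdot,c_i)$ is strictly increasing and continuous, reaching the window of relative width $\epsilon_2$ costs $O(\log(1/\epsilon_2))=O(\log(1/\epsilon)+\log(1/(1-\alpha)))$ steps, which is where $\log(1/(1-\alpha))$ enters. The main obstacle is the ``else'' branch of the approximation argument: because the bisection strictly shrinks $\hat x_i$ below $\tilde x_i$, a direct per-product comparison of $G(\hat x_i,c_i)$ with $T\tilde x_i$ fails, and I resolve this by routing the lower bound through the bottleneck product $i^\star$ and exploiting the $\alpha$-balance that $\tilde{\bm x}$ already satisfies inside the upper-bound problem.
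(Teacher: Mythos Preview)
Your proof is correct and follows the paper's approach essentially step for step: the feasibility check via $\hat x_i\le\tilde x_i$ and the choice $\epsilon_2\le 1-\alpha$, the performance bound via $G(\tilde x_i,c_i)\ge\gamma(c_{\min})T\tilde x_i$ together with the routing through the bottleneck product $i^\star$ using $\tilde x_{i^\star}\ge\alpha\tilde x_i$, and the runtime decomposition into the FPTAS call plus bisection are exactly what the paper does. One small caveat: your sketch of the binomial lemma is incomplete (the MAD bound $\mathbb{E}[(Z-Tp)^+]\le\sqrt{Tp}/2$ only yields the $1-1/\sqrt{c}$ factor when $Tp\ge c/4$, and the bisection count also picks up $\log T$ and $\log((1+nv_{\max})/v_{\min})$ terms), but the paper simply cites Lemma~E.1 of \cite{bai2022coordinated} and Lemma~\ref{lemma:binary_search} for these, so this does not affect the validity of your main argument.
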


\vspace{0.5em}

In Theorem~\ref{thm:inventory_constant_1}, we use higher precision for the bisection search as $\alpha$ approaches one. The reason for this is that as $\alpha$ gets close to one, a stronger market share balancing constraint is enforced, requiring that expected sales of offered products to be closer to each other. In order to satisfy the market share balancing constraint, we need higher precision for the bisection search. Since the precision of bisection search is upper bounded by $1-\alpha$, the number of iterations for the bisection search is polynomially dependent on $\log(1/(1-\alpha))$, thus the runtime of Algorithm~\ref{alg:policy_1} is also polynomially dependent on $\log(1/(1-\alpha))$.

Theorem~\ref{thm:inventory_constant_1}~states that Algorithm~\ref{alg:policy_1}~always returns a constant-factor approximation to~\ref{prob:inventory}. Furthermore, as the minimum initial inventory $c_{\min}$ increases, the approximation ratio converges to $1-\epsilon$, where $\epsilon>0$ is the precision of the FPTAS for~the \ref{prob:inventory_upper_bound} problem and the bisection search.

Before we proceed to the proof of Theorem~\ref{thm:inventory_constant_1}, we first provide an upper bound on the runtime for the bisection search.

\vspace{0.5em}

\begin{lemma}\label{lemma:binary_search}
If $G(\tilde{x}_i,c_i)>G_{\min}/\alpha$, the number of iterations of the bisection search for $\hat{x}_i$ is upper bounded by
\begin{equation*}
O\Big(\log(T)+\log\Big(\dfrac{1}{\epsilon_2}\Big)+\log\Big(\dfrac{1+nv_{\max}}{v_{\min}}\Big)+\log\Big(\dfrac{1}{\alpha}\Big)\Big).
\end{equation*}
\end{lemma}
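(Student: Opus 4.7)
The plan is to bound the iteration count by (i) expressing it as the logarithm of the ratio between the initial search interval and the width of the acceptable preimage of $G(\cdot, c_i)$, (ii) using Lipschitz continuity of $G(\cdot, c_i)$ in $p$ to lower bound this width, and (iii) lower bounding $G_{\min}$ in terms of the instance parameters.

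First, observe that the bisection operates on an initial interval contained in $[0, \tilde{x}_i] \subseteq [0, 1]$ and halves it at each step. Because $G(\cdot, c_i)$ is continuous and strictly increasing on $[0,1]$ and $G(\tilde{x}_i, c_i) > G_{\min}/\alpha$ by hypothesis, the acceptable preimage
\[
A_i := \{p \in [0, \tilde{x}_i] : (1-\epsilon_2) G_{\min}/\alpha \leq G(p, c_i) \leq G_{\min}/\alpha\}
\]
is a nondegenerate interval $[p^{**}, p^*]$ with $G(p^{**}, c_i) = (1-\epsilon_2)G_{\min}/\alpha$ and $G(p^*, c_i) = G_{\min}/\alpha$. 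Starting from $[0, \tilde{x}_i]$, a standard bisection terminates in at most $\lceil \log_2(\tilde{x}_i/(p^* - p^{**})) \rceil$ iterations, so it suffices to lower bound $p^* - p^{**}$.

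Next, I would establish that $G(\cdot, c)$ is $T$-Lipschitz in $p$ via a monotone coupling: writing $Z(T, p) = |\{t : U_t \leq p\}|$ for i.i.d.\ uniforms $U_1, \dots, U_T$ on $[0,1]$, a short case analysis on whether $Z(T,p)$ and $Z(T,p+h)$ straddle $c$ yields the pathwise bound $\min\{Z(T, p+h), c\} - \min\{Z(T, p), c\} \leq Z(T, p+h) - Z(T, p)$ for $h \geq 0$, and taking expectations gives $G(p+h, c) - G(p, c) \leq Th$. Consequently $p^* - p^{**} \geq (G(p^*, c_i) - G(p^{**}, c_i))/T = \epsilon_2 G_{\min}/(\alpha T)$, and the iteration count becomes
\[
O\!\left(\log T + \log(1/\epsilon_2) + \log(1/\alpha) + \log(1/G_{\min})\right).
\]

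Finally, I would lower bound $G_{\min}$. Since $c_i \in \mathbb{Z}^+$ and $T \geq 1$, the inequality $G(p, c_i) \geq \mathbb{P}(Z(T, p) \geq 1) = 1 - (1-p)^T \geq p$ yields $G_{\min} \geq \min_{j \in \bar{S}} \tilde{x}_j$. By the grid construction of the FPTAS in Section~\ref{sec:upper_bound}, any positive $\tilde{x}_j$ obeys $\tilde{x}_j \geq \alpha y_{\min}$, and the lower endpoint $y_{\min}$ can be taken so that $\log(1/y_{\min}) = O(\log T + \log((1+nv_{\max})/v_{\min}))$, consistent with the runtime in Proposition~\ref{prop:upper_bound_fptas}. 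Substituting gives $\log(1/G_{\min}) = O(\log(1/\alpha) + \log T + \log((1+nv_{\max})/v_{\min}))$, which combined with the preceding display yields the stated bound. The main obstacle is this last step: carefully tracing the FPTAS grid construction to justify the chosen lower bound on $y_{\min}$ so that the $\log((1+nv_{\max})/v_{\min})$ factor matches; the coupling-based Lipschitz estimate and the bisection iteration count are routine.
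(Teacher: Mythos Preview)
Your proposal is correct and follows essentially the same three-step structure as the paper's proof: establish that $G(\cdot,c)$ is $T$-Lipschitz via a monotone coupling, lower bound $G_{\min}$ by tracing the FPTAS grid to get $\tilde{x}_j \geq \Omega(\alpha\, y_{\min})$ for $j\in\bar S$, and convert these into a bisection iteration bound. The one substantive difference is your lower bound on $G_{\min}$: you use the elementary inequality $G(p,c)\geq \mathbb{P}(Z(T,p)\geq 1)=1-(1-p)^T\geq p$, whereas the paper invokes the estimate $G(p,c)\geq Tp/2$ (valid when $Tp\leq c$, from \cite{bai2022coordinated}) to obtain the sharper $G_{\min}\geq \alpha v_{\min}/(4(1+nv_{\max}))$ with no $T$ in the denominator. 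Your weaker bound contributes an extra $\log T$ inside $\log(1/G_{\min})$, but since a $\log T$ term already appears in the target expression, the final $O(\cdot)$ bound is identical; the paper's route simply buys a cleaner constant in $G_{\min}$ at the cost of citing an external lemma.
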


\vspace{0.5em}

To prove Lemma~\ref{lemma:binary_search}, We first prove that $G_{\min}\geq \alpha v_{\min}/4(1+nv_{\max})$. Furthermore, after $\lceil\log_2(T/\epsilon_2G_{\min})\rceil$ iterations of bisection search, if we set $\hat{x}_i$ as the left endpoint of the interval from the bisection search, the distance between the output $\hat{x}_i$ and the exact solution $y_i$ is at most \mbox{$2^{-M(\epsilon_2)}\leq \epsilon_2G_{\min}/T$.} We further prove that $G(p,c)$ is $T$-Lipschitz in $p$ for any $c\in\mathbb{Z}^+$, which further implies \mbox{$|G(\hat{x}_i,c_i)-G_{\min}/\alpha|\leq T|\hat{x}_i-y_i|\leq \epsilon_2G_{\min}$,} thus the inequality in the lemma is verified. Proof of Lemma~\ref{lemma:binary_search}~is provided in Appendix~\ref{sec:lemma:binary_search}.

Now we are able to finish the proof of Theorem~\ref{thm:inventory_constant_1}.

\begin{proof}[Proof of Theorem~\ref{thm:inventory_constant_1}]
To prove the theorem, we first verify the feasibility of the proposed policy, then analyze its runtime, and finally prove its performance guarantee. 

\noindent\underline{\bf Feasibility of the Policy:} 
We first verify that in the policy defined in Algorithm~\ref{alg:policy_1}, there always exists a distribution over assortments that attains the desired purchase probabilities. Theorem 1 of \cite{topaloglu2013joint} states that purchase probabilities $\bm{x}$ can be attained by a distribution over assortments if and only if $x_i\leq v_i(1-\sum_{j\in\mathcal{N}}x_j)$ for all $i\in\mathcal{N}$. Therefore it suffices to prove that $p_{it}\leq v_i(1-\sum_{j\in\mathcal{N}}p_{jt})$ for all $t$. By definition of $\bm{p}_t$ we have $p_{it}\leq \hat{x}_i$ for all $i\in\mathcal{N}$. We further prove that $\hat{x}_i\leq \tilde{x}_i$ for all $i\in\mathcal{N}$. Consider any $i\in\mathcal{N}$. If $i\notin \bar{S}$ then $\hat{x}_i=\tilde{x}_i=0$. If $i\in \bar{S}$ and $G(\tilde{x}_i,c_i)\leq G_{\min}/\alpha$ then $\hat{x}_i=\tilde{x}_i$. If $G(\tilde{x}_i,c_i)>G_{\min}/\alpha$, then by definition of $\hat{x}_i$ in this case we have $G(\hat{x}_i,c_i)\leq G_{\min}/\alpha<G(\tilde{x}_i,c_i)$. Since $G(p,c)$ is monotonically increasing in $p$, we have $\hat{x}_i\leq \tilde{x}_i$. Therefore $\hat{x}_i\leq \tilde{x}_i$ for all $i\in\mathcal{N}$, which implies $p_{it}\leq \tilde{x}_i$ for all $i\in\mathcal{N}$. By the first and second set of constraints in~the \ref{prob:inventory_upper_bound} problem~we get $\tilde{x}_i\leq v_i(1-\sum_{j\in\mathcal{N}}\tilde{x}_j)$ for all $i\in\mathcal{N}$. Therefore for any $i\in\mathcal{N}$ we have
\begin{equation*}
p_{it}\leq \tilde{x}_i\leq v_i\Big(1-\sum_{j\in\mathcal{N}}\tilde{x}_j\Big)\leq v_i\Big(1-\sum_{j\in\mathcal{N}}p_{jt}\Big).
\end{equation*}
Thus, there always exists a distribution over assortments that attains purchase probabilities $\bm{p}_t$.

We verify that the proposed policy satisfies the market share balancing constraint. Under the proposed policy, the purchase probability of any product $i\in\mathcal{N}$ is $\hat{x}_i$ as long as there is remaining inventory for the product. Therefore for any $i\in\mathcal{N}$, the distribution of realized sales of product $i$ is the same as that of $\min\{Z(\hat{x}_i,T),c_i\}$. Therefore for any $i\in\mathcal{N}$ we have
\begin{equation*}
\mathbb{E}[X_{iT}^\pi]=\mathbb{E}\Big[\min\Big\{Z(\hat{x}_i,T),c_i\Big\}\Big]=G(\hat{x}_i,c_i).
\end{equation*}
It is easy to verify that $\mathbb{E}[X_{iT}^\pi]=0$ for all $i\notin \bar{S}$. Therefore to verify that $\pi$ satisfies the market share balancing constraint, it suffices to prove that $G_{\min}\leq G(\hat{x}_i,c_i)\leq G_{\min}/\alpha$ for all $i\in\bar{S}$. Consider any $i\in\bar{S}$. If $G(\tilde{x}_i,c_i)\leq G_{\min}/\alpha$, then by definition $\hat{x}_i=\tilde{x}_i$ and $G_{\min}=\min_{j\in\bar{S}}G(\tilde{x}_j,c_j)$, thus \mbox{$G_{\min}\leq G(\hat{x}_i,c_i)=G(\tilde{x}_i,c_i)\leq G_{\min}/\alpha$.} If $G(\tilde{x}_i,c_i)>G_{\min}/\alpha$, then by definition of $\hat{x}_i$ we get \mbox{$(1-\epsilon_2)G_{\min}/\alpha\leq G(\hat{x}_i,c_i)\leq G_{\min}/\alpha$.} Since by definition $\epsilon_2\leq 1-\alpha$, we have $(1-\epsilon_2)G_{\min}/\alpha\geq G_{\min}$, thus $G_{\min}\leq G(\hat{x}_i,c_i)\leq G_{\min}/\alpha$. Therefore we conclude that $G_{\min}\leq G(\hat{x}_i,c_i)\leq G_{\min}/\alpha$ for all $i\in\bar{S}$, thus the policy satisfies the market share balancing constraint, and is a feasible policy to~\ref{prob:inventory}.

\noindent\underline{\bf Runtime:} In Proposition~\ref{thm:inventory_upper_bound}, we prove that there exists an FPTAS for~the \ref{prob:inventory_upper_bound} problem. Therefore the runtime of obtaining a $(1-\epsilon_1)$-approximation of~the \ref{prob:inventory_upper_bound} problem is polynomial in the input size and $1/\epsilon_1$. By definition $\epsilon_1=\epsilon/2$, thus the runtime of obtaining a $(1-\epsilon_1)$-approximate solution of the \ref{prob:inventory_upper_bound} problem is also polynomial in the input size and $1/\epsilon$. Furthermore, by Lemma~\ref{lemma:binary_search}, the maximum number of iterations is polynomial in the input size and $\log(1/\epsilon_2)$. Since we set $\epsilon_2=\min\{\epsilon/2,1-\alpha\}$, we get $\epsilon_2\geq (1-\alpha)\epsilon/2$, thus $\log(1/\epsilon_2)\leq \log(1/(1-\alpha))+\log(2/\epsilon)$. Therefore by Proposition~\ref{prop:upper_bound_fptas} and Lemma~\ref{lemma:binary_search}, the total runtime of Algorithm~\ref{alg:policy_1} is bounded by 
\begin{align*}
&O\Big(\dfrac{n^2}{\epsilon^4}\log\Big(\dfrac{1}{\alpha}\Big)\log(nv_{\max}+1)\log\Big(T+\dfrac{nv_{\max}+1}{v_{\min}}\Big)\Big)\\
&+n\cdot O\Big(\log(T)+\log\Big(\dfrac{2}{\epsilon}\Big)+\log\Big(\dfrac{1}{1-\alpha}\Big)+\log\Big(\dfrac{1+nv_{\max}}{v_{\min}}\Big)+\log\Big(\dfrac{1}{\alpha}\Big)\Big)\\
\leq&O\Big(\dfrac{n^2}{\epsilon^4}\log\Big(\dfrac{1}{\alpha}\Big)\log(nv_{\max}+1)\log\Big(T+\dfrac{nv_{\max}+1}{v_{\min}}\Big)+n\log\Big(\dfrac{1}{1-\alpha}\Big)\Big).
\end{align*}

\noindent\underline{\bf Performance Guarantee:} We define $\theta(c_{\min})=\max\{{1}/{2},1-{1}/{\sqrt{c_{\min}}}\}$. We prove that the expected revenue of the proposed policy is at least $\theta(c_{\min})-\epsilon$ fraction of the optimal value of~\ref{prob:inventory}. We first prove that $G(\hat{x}_i,c_i)\geq (1-\epsilon_2)\theta(c_{\min})T\tilde{x}_i$ for all $i\in\bar{S}$. Recall that we use $Z(T,p)$ to denote the binomial random variable with $T$ trials and success probability being $p$. Lemma E.1 of \cite{bai2022coordinated}~states that if $Tp\leq c$ for some $c\in\mathbb{Z}^+$, then we have
\begin{equation*}
\mathbb{E}[(Z(T,p)-c)^+]\leq \min\Big\{\dfrac{1}{2},\dfrac{1}{\sqrt{c}}\Big\}Tp.
\end{equation*}
Consider any $i\in\bar{S}$. By the third set of constraints in the \ref{prob:inventory_upper_bound} problem we have $T\tilde{x}_i\leq c_i$. Thus by Lemma E.1 of  \cite{bai2022coordinated} we get
\begin{equation*}
G(\tilde{x}_i,c_i)=\mathbb{E}[\min\{Z(T,\tilde{x}_i),c_i\}]=T\tilde{x}_i-\mathbb{E}[(Z(T,\tilde{x}_i)-c_i)^+]\geq \max\Big\{\dfrac{1}{2},1-\dfrac{1}{\sqrt{c_i}}\Big\}T\tilde{x}_i\geq \theta(c_{\min})T\tilde{x}_i.
\end{equation*}
We define $i_{\min}=\arg\min_{i\in\bar{S}}G(\tilde{x}_i,c_i)$. In particular, by the inequality above we have \mbox{$G(\tilde{x}_{i_{\min}},c_{i_{\min}})\geq \theta(c_{\min})T\tilde{x}_{i_{\min}}$}, which implies $G_{\min}\geq \theta(c_{\min})T\tilde{x}_{i_{\min}}$. Consider any $i\in\bar{S}$. If $G(\tilde{x}_i,c_i)\leq G_{\min}/\alpha$, then we get $\tilde{x}_i=\hat{x}_i$, and $G(\hat{x}_i,c_i)=G(\tilde{x}_i,c_i)\geq \theta(c_{\min})T\tilde{x}_i\geq (1-\epsilon_2)\theta(c_{\min})T\tilde{x}_i$. If $G(\tilde{x}_i,c_i)>G_{\min}/\alpha$, then by definition of $\hat{x}_i$ we get
\begin{equation*}
G(\hat{x}_i,c_i)\geq (1-\epsilon_2)G_{\min}/\alpha\geq (1-\epsilon_2)\theta(c_{\min})T\tilde{x}_{i_{\min}}/\alpha\geq (1-\epsilon_2)\theta(c_{\min})T\tilde{x}_i.
\end{equation*}
Here the last inequality holds because $\tilde{\bm{x}}$ is feasible to~the \ref{prob:inventory_upper_bound} problem, and the last set of constraints in~the \ref{prob:inventory_upper_bound} problem implies \mbox{$\tilde{x}_{i_{\min}}\geq \alpha\cdot\max_{j\in\mathcal{N}}\tilde{x}_j\geq \alpha \tilde{x}_i$.} Therefore we conclude that for all $i\in\bar{S}$, we have $G(\hat{x}_i,c_i)\geq (1-\epsilon_2)\theta(c_{\min})T\tilde{x}_i$. In this case, we have that the total expected revenue of the policy satisfies
\begin{equation*}
\sum_{i\in\bar{S}}r_iG(\hat{x}_i,c_i)\geq (1-\epsilon_2)\theta(c_{\min})T\sum_{i\in\bar{S}}r_i\tilde{x}_i=(1-\epsilon_2)\theta(c_{\min})\sum_{i\in\mathcal{N}}r_i\tilde{x}_i.
\end{equation*}

Since $\tilde{\bm{x}}$ is a $(1-\epsilon_1)$-approximation to~the \ref{prob:inventory_upper_bound} problem, the optimal value of~\ref{prob:inventory}~is upper bounded by $\sum_{i\in\mathcal{N}}r_i\tilde{x}_i/(1-\epsilon_1)$. Therefore the ratio between the expected revenue of the proposed policy over the optimal value of~\ref{prob:inventory} is at least
\begin{equation*}
(1-\epsilon_1)(1-\epsilon_2)\theta(c_{\min})\geq (1-\epsilon)\max\Big\{\dfrac{1}{2},1-\dfrac{1}{\sqrt{c_{\min}}}\Big\}\geq\max\Big\{\dfrac{1}{2}-\epsilon,1-\dfrac{1}{\sqrt{c_{\min}}}-\epsilon\Big\}.
\end{equation*}
Therefore the algorithm gives a $\max\{{1}/{2}-\epsilon,1-{1}/{\sqrt{c_{\min}}}-\epsilon\}$-approximation to~\ref{prob:inventory}.
\end{proof}

{  Finally, we would like to remark that Theorem \ref{thm:inventory_constant_1} can be generalized to any choice model that satisfies substitutability, i.e., $\phi(i,S)\leq \phi(i,S')$ for all $i\in S'$ and $S\supseteq S'$. Specifically, for a general choice model, using similar arguments as in Proposition~\ref{thm:inventory_upper_bound}~we have that the optimal value of the dynamic problem can be upper bounded by that of the following problem,
\begin{equation*}
\begin{aligned}
&\max_{h,\bm{x}}&&T\sum_{i\in\mathcal{N}}r_ix_i\\
&\text{s.t.}&&x_i=\sum_{S\subseteq\mathcal{N}}h(S)\phi(i,S),\ \forall i\in\mathcal{N}\\
&&&x_i\leq c_i/T,\ \forall i\in\mathcal{N},\\
&&&x_i\in\{0\}\cup\Big[\alpha\cdot \max_{j\in\mathcal{N}}x_j,\infty\Big),\ \forall i\in\mathcal{N},\\
&&&\sum_{S\subseteq\mathcal{N}}h(S)=1,\\
&&&h(S)\geq 0,\ \forall S\subseteq\mathcal{N}.
\end{aligned}
\end{equation*}
Assume that we can solve the upper bound problem optimally and let $(h^*,\bm{x}^*)$ be an optimal solution. Based on $\bm{x}^*$, we can obtain $\hat{\bm{x}}$ following Algorithm~\ref{alg:policy_1}. Then in each time period, we sample assortments from a distribution over assortments such that the purchase probability of each product $i$ with remaining inventory is equal to $\hat{x}_i$. Here, the distribution over assortments can be constructed using the sub-assortment sampling in \cite{feng2024near}, which we refer to Procedure 1 in \cite{feng2024near} for details. Following similar arguments in the proof of Theorem~\ref{thm:inventory_constant_1}, we can prove that the policy returned by the algorithm is a $\max\{1/2-\epsilon,1-1/\sqrt{c_{\min}}-\epsilon\}$-approximation to the dynamic problem, where $\epsilon$ is the precision of the bisection search when constructing $\hat{\bm{x}}$.
} 

{ 

\section{Numerical Experiments on Real-World Data}\label{sec:numerical_real}

In this section, we conduct numerical experiments using MNL models calibrated from real data. We study the following two questions: First, how effective is our approach in maintaining a tradeoff between revenue and balanced market shares compared to the alternative approach with absolute market share lower and upper bounds? Second, how large can the balancing parameter $\alpha$ get if the seller is willing to lose a certain fraction of revenue, and how will the maximum and minimum market shares among offered products change accordingly?

\subsection{Data}

We use the dataset \cite{ECommerceData} for our numerical experiments. The dataset contains purchase records from a large online store from April 2020 to November 2020. Each purchase record consists of the date and time of the purchase, the type of the purchased product (e.g., smartphones, tablet, notebook, printer) as well as its price. We conduct our numerical experiments on selected product types that appear in the dataset. 

For each of the selected product type, we calibrate a separate MNL model. The MNL model for each product type are calibrated as follows: We first divide the time horizon covered by the dataset into time intervals of two weeks, and assume that the offered assortment is fixed during each time interval. The assortment offered in each time interval is assumed to be the set of products that have been sold at least once during the period. Since the dataset does not record customers leaving without a purchase, which is crucial for calibrating the MNL model, we add artificial no-purchase records to supplement the purchase records. Specifically, the number of no-purchase records within each time interval is set to be 5 percent of the total number of purchases during the time interval. Finally, we calibrate an MNL model for the corresponding product type using the maximum likelihood estimation, based on the assortment offered in each time interval, the number of purchases of each offered product as well as the no-purchase option under the corresponding assortment.

\subsection{Comparison with the Absolute Market Share Balancing Constraint}

In this section, we conduct numerical experiments to compare the market share balancing constraint with the absolute market share balancing constraint introduced in the model discussion in Section \ref{sec:model_discussion}. Recall that the absolute market share balancing constraint requires that the market share of any product offered with positive probability should exceed a certain absolute lower bound and should not exceed a certain absolute upper bound. Specifically, we vary the balancing parameter $\alpha$ from $0$ to $0.95$ with a stepsize of $0.05$. For each value of $\alpha$, we compare the optimal expected revenue under our market share balancing constraint with the one under absolute market share balancing constraint, where the market share of any product offered by the seller should be within the range of $[\theta,\theta/\alpha]$ for some parameter $\theta$. In the case of $\alpha=0$, the absolute market share balancing constraint only enforces that the market share of each offered product should be at least $\theta$, and no upper bound is enforced. In this way, we use the absolute market share balancing constraint to guarantee that the market shares of any two products offered by the seller differ by a factor of at most $\alpha$. We vary $\theta\in\{0.002,0.004,0.008,0.016\}$.

The optimal expected revenues under our setting and under absolute market share balancing constraint with different values of $\theta$ and $\alpha$ are provided in Figure~\ref{fig:real_data_1}. Specifically, each sub-figure in Figure~\ref{fig:real_data_1} presents the result of a product type, and in each sub-figure, the $x$-axis is the parameter $\alpha$, and each plot shows how the optimal expected revenue change with $\alpha$ under either our setting or a value of $\theta$. One can observe from Figure~\ref{fig:real_data_1} that with each value of $\alpha$, the optimal expected revenue under our market share balancing constraint always dominates that under the absolute market share balancing constraint, which is not surprising because our market share balancing constraint is weaker than the absolute market share balancing constraint for any value of $\theta$. However, it is worth noting that the optimal expected revenue under our setting often exceeds that under the absolute market share balancing constraint by a significant margin. For example, for tablets, clocks, kettles and printers, the absolute market share balancing constraint with $\theta=0.002$ and $\theta=0.016$ only attains less than half of the optimal expected revenue under our setting when $\alpha$ exceeds $0.8$. Furthermore, for most product types, the maximum revenue gap compared to our setting when $\theta=0.004$ and $\theta=0.008$ across all values of $\alpha$ compared to our setting either exceeds or is close to $10\%$, with the only exception being $\theta=0.008$ for clocks. All the above suggests that if we use measure balance in market shares by ratio, then with the same level of balance, our approach is able to generate a higher expected revenue compared to using the absolute market share balancing constraint, and the margin of improvement is often significant. Another observation is that the optimal expected revenue under the absolute market share balancing constraint can be highly sensitive in the value of $\theta$, and small changes in $\theta$ can result in significant change in revenue. For instance, for tablets, refrigerators, clocks, kettles and TVs, changing $\theta$ from $0.004$ to $0.002$ results in a decrease in revenue by more than $20\%$ when $\alpha$ is above $0.8$. Similar phenomena can also be observed when changing $\theta$ from $0.008$ to $0.016$, which could cause a decrease in revenue by more than $30\%$ when $\alpha$ is above $0.8$ for many product types. Such sensitivity of revenue in $\theta$ also highlights the challenge of fine-tuning lower and upper bounds when enforcing the absolute market share balancing constraint.

\begin{figure}[t]
\centering
\subfigure[Tablet]{\includegraphics[height=4cm]{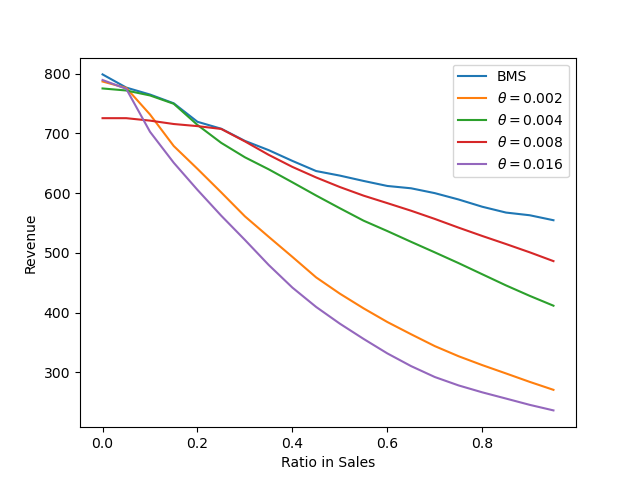}}
\subfigure[Mouse]{\includegraphics[height=4cm]{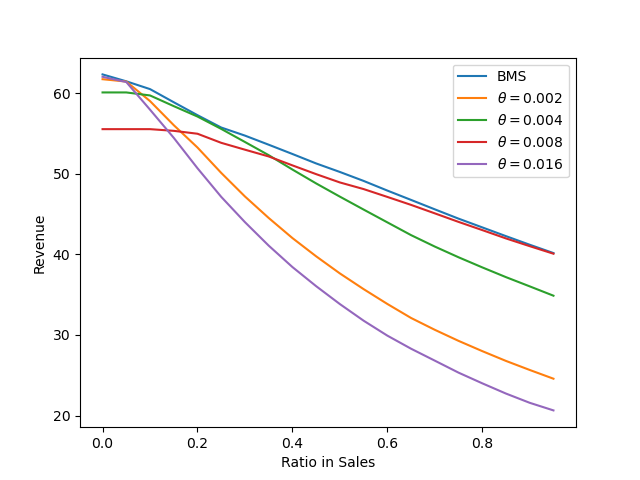}}
\subfigure[Notebook]{\includegraphics[height=4cm]{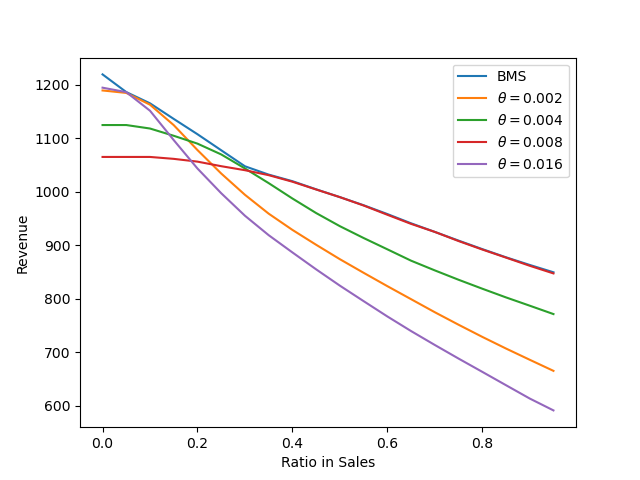}}
\subfigure[Refrigerator]{\includegraphics[height=4cm]{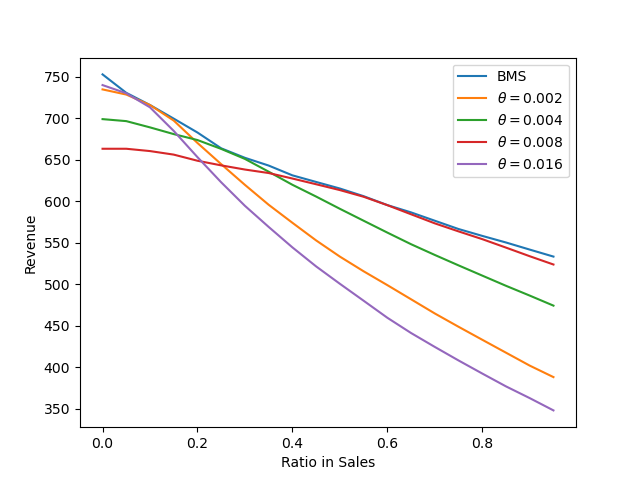}}
\subfigure[Clock]{\includegraphics[height=4cm]{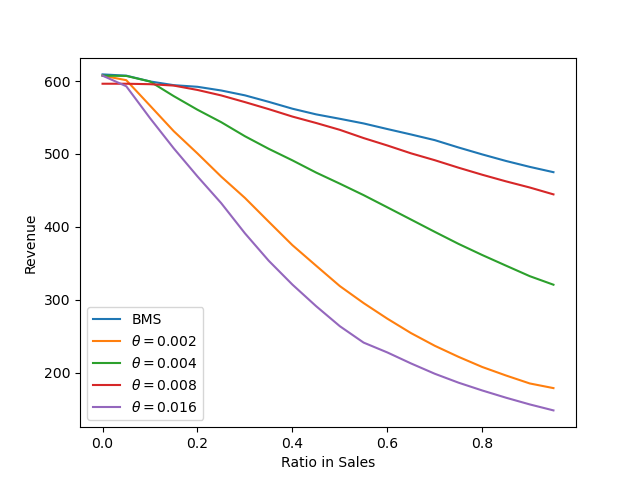}}
\subfigure[Kettle]{\includegraphics[height=4cm]{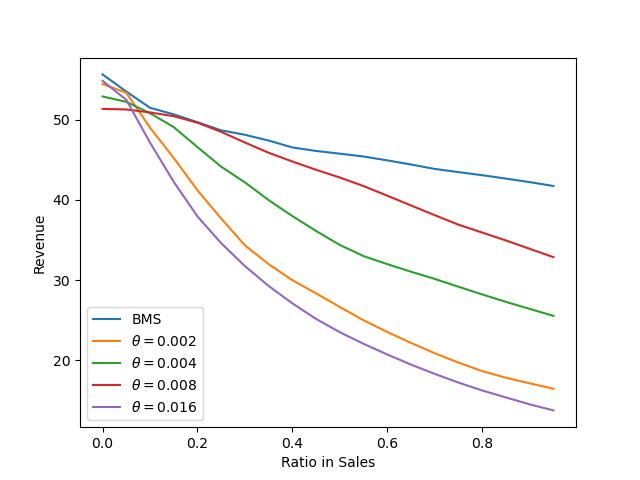}}
\subfigure[Printer]{\includegraphics[height=4cm]{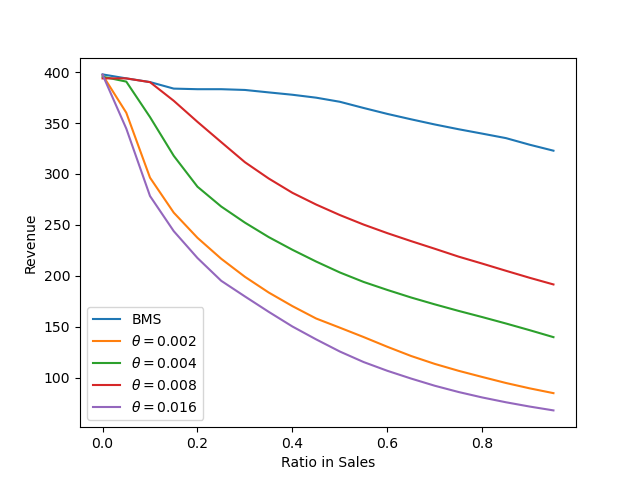}}
\subfigure[TV]{\includegraphics[height=4cm]{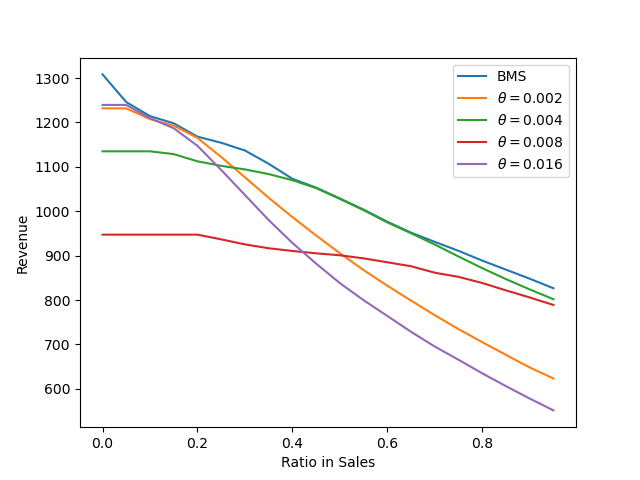}}
\caption{Optimal expected revenue with market share balancing constraint and absolute market share balancing constraint under different $\alpha$}
\label{fig:real_data_1}
\end{figure}

Besides ratio between market shares, another commonly used metric of balanced market shares is the absolute difference between market shares of offered products. Since our approach is not optimized over this metric of balance in market shares, it is unclear how our approach performs under this metric compared to the absolute market share balancing constraint. In what follows, we compare the performance of our market share balancing constraint with that under the absolute market share balancing constraint when balance in market shares is measured by absolute difference. Specifically, we still vary $\alpha$ from $0$ to $0.95$ with a stepsize of $0.05$. For each $\alpha$, we compute an optimal solution of our setting and under each value of $\theta$, and draw a point on the revenue-difference plot, whose $x$-coordinate is the difference between the maximum market share and the minimum nonzero market share, and $y$-coordinate is the expected revenue of the solution. Then for our setting and for each value of $\theta$, we connect the points corresponding to different values of $\alpha$. By connecting these dots and forming the revenue-difference plots, one can observe how the optimal expected revenue and absolute difference in market shares change as $\alpha$ changes. With the revenue-difference plots, one can also directly compare the optimal expected revenue of different approaches under similar absolute differences in market shares. The revenue-difference plot of different product types are provided in Figure~\ref{fig:real_data_2}.

\begin{figure}[t]
\centering
\subfigure[Tablet]{\includegraphics[height=4cm]{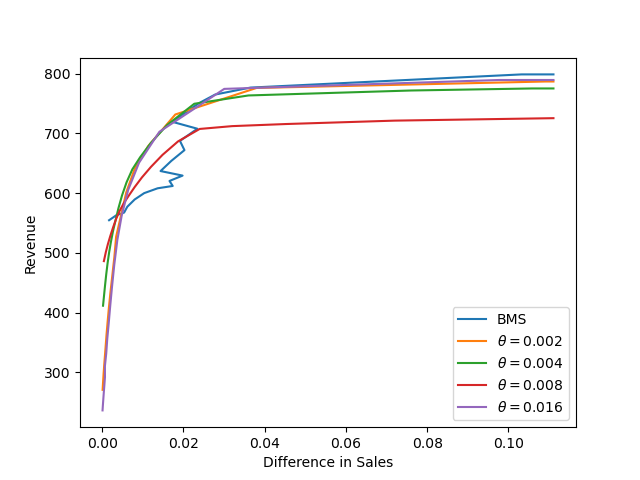}}
\subfigure[Mouse]{\includegraphics[height=4cm]{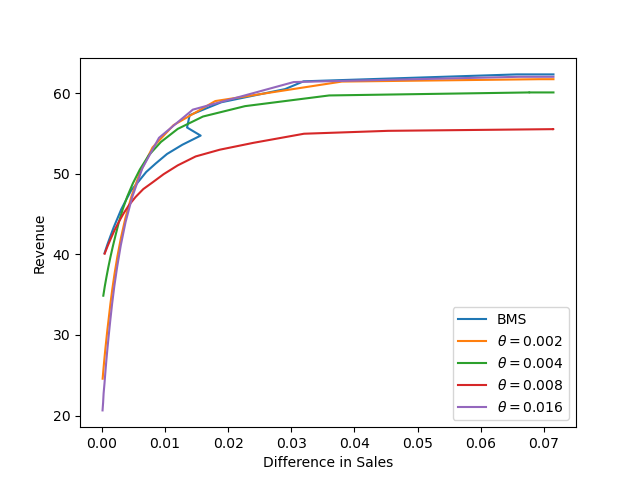}}
\subfigure[Notebook]{\includegraphics[height=4cm]{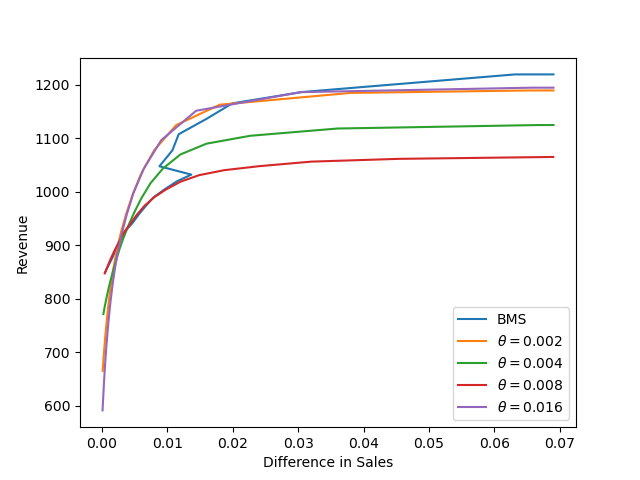}}
\subfigure[Refrigerator]{\includegraphics[height=4cm]{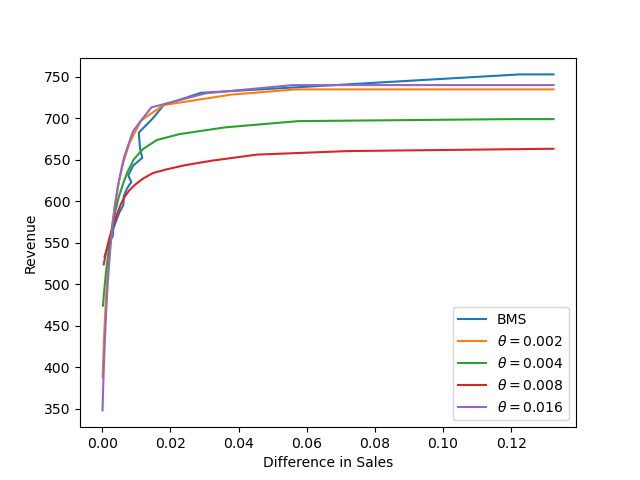}}
\subfigure[Clock]{\includegraphics[height=4cm]{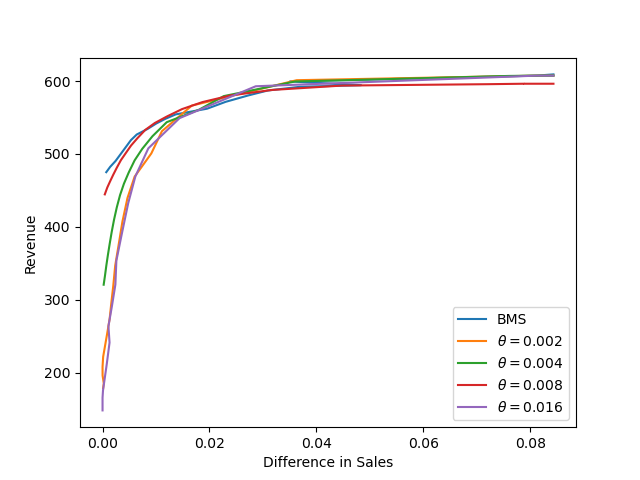}}
\subfigure[Kettle]{\includegraphics[height=4cm]{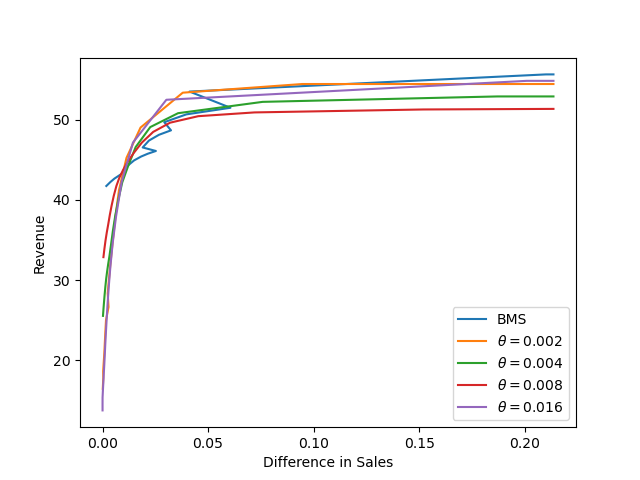}}
\subfigure[Printer]{\includegraphics[height=4cm]{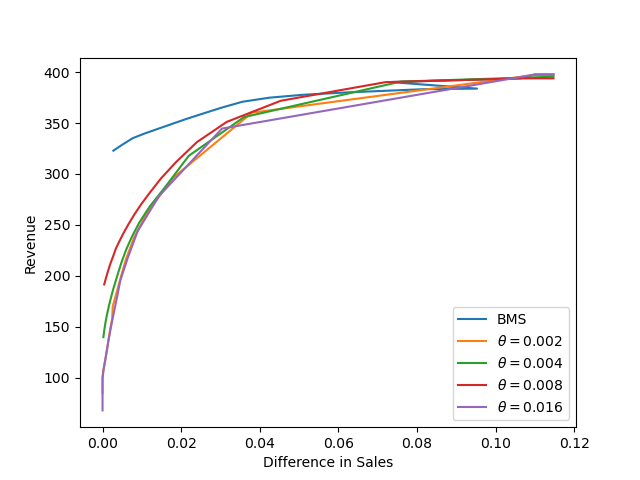}}
\subfigure[TV]{\includegraphics[height=4cm]{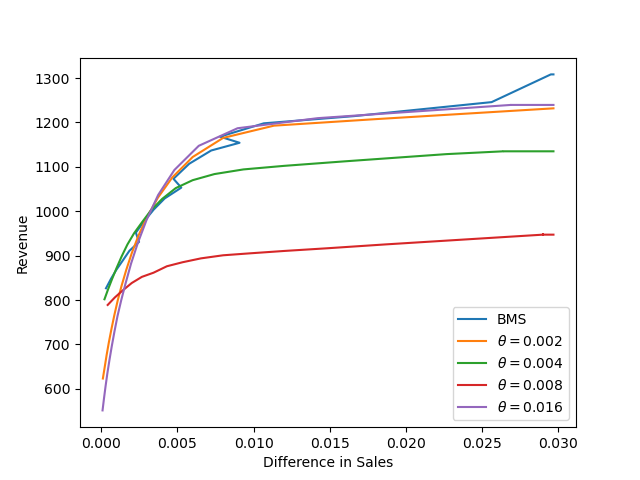}}
\caption{Revenue-difference plot under our setting and the absolute market share balancing constraint for different product types}
\label{fig:real_data_2}
\end{figure}

We would like to remark that here the revenue-difference plots of our setting follow a zigzagging pattern because in our setting, the absolute difference in market shares may not be monotonically decreasing as $\alpha$ increases. Nevertheless, one can observe from Figure~\ref{fig:real_data_2}~that with a similar absolute difference in market shares, our approach either generates an expected revenue either higher than all four values of $\theta$, or close to the best-performing value of $\theta$. This observation suggests that if balance in market shares is measured by absolute difference, our approach still is able to attain a higher or comparable revenue under the same level of balance. This observation further highlights the strength of our approach, as our market share balancing constraint can still perform well when balance in market shares is measured by absolute difference, even though this metric gives more advantage to the absolute market share balancing constraint.

\subsection{Tradeoff between Revenue and Balance in Market Shares}\label{sec:numerical_tradeoff}

In this section, we conduct numerical experiments to study the tradeoff between revenue and balanced market shares. Specifically, we would like to answer the following questions: First, how large can the balancing parameter $\alpha$ get if the seller is willing to lose a certain fraction of revenue? 
Second, how do the maximum and minimum market shares of the offered products change accordingly?

To answer these questions, we conduct numerical experiments on selected product types, and for each product type, we compute the largest $\alpha$ of the market share balancing constraint such that the optimal value of \ref{prob:randomized} is greater than or equal to $(1-\gamma)$ times the unconstrained optimal expected revenue, where $0<\gamma<1$ is the relative revenue loss compared to the unconstrained optimal revenue. Since the optimal expected revenue of \ref{prob:randomized} is monotonically decreasing in $\alpha$, we can use bisection search to find the value of $\alpha$ corresponding to different revenue loss $\gamma$. We vary $\gamma\in\{0,0.02,0.05,0.1,0.15,0.2\}$. For each value of $\gamma$, we also compute the maximum and minimum market shares among the offered products under the corresponding $\alpha$. The values of $\alpha$ corresponding to different relative revenue loss are provided in Table~\ref{table:real_data_1}. The maximum and minimum market shares under different relative revenue loss are provided in Tables~\ref{table:real_data_2} and~\ref{table:real_data_3}, respectively.

\begin{table}[t]
\centering
{\scriptsize
\begin{tabular}{|c|c c c c c c|}
\hline
$\gamma$&0&0.02&0.05&0.1&0.15&0.2\\
\hline
Tablet&0.0069&0.0298&0.0767&0.1479&0.2456&0.3755\\
Mouse&0.0066&0.0522&0.0864&0.1938&0.3013&0.3970\\
Notebook&0.0039&0.0171&0.0425&0.1040&0.2310&0.3550\\
Refrigerator&0.0011&0.0103&0.0396&0.1147&0.2173&0.3472\\
Clock&0.0162&0.1157&0.2896&0.4995&0.7056&0.8677\\
Kettle&0.0011&0.0142&0.0308&0.0874&0.1909&0.3306\\
Printer&0.0108&0.0444&0.0620&0.1323&0.2261&0.3218\\
TV&0.0114&0.0190&0.0425&0.1177&0.2456&0.4097\\
\hline
\end{tabular}
}
\caption{Balancing parameter $\alpha$ for different product types under different revenue loss $\gamma$}
\label{table:real_data_1}
\end{table}

\begin{table}[t]
\centering
{\scriptsize
\begin{tabular}{|c|c c c c c c|}
\hline
$\gamma$&0&0.02&0.05&0.1&0.15&0.2\\
\hline
Tablet&0.1040&0.0367&0.0247&0.0208&0.0152&0.0117\\
Mouse&0.0662&0.0305&0.0196&0.0150&0.0105&0.0091\\
Notebook&0.0634&0.0237&0.0142&0.0103&0.0080&0.0066\\
Refrigerator&0.1226&0.0308&0.0149&0.0098&0.0076&0.0063\\
Clock&0.0857&0.0429&0.0336&0.0235&0.0176&0.0161\\
Kettle&0.2103&0.0484&0.0307&0.0223&0.0220&0.0176\\
Printer&0.1112&0.0686&0.0540&0.0467&0.0351&0.0311\\
TV&0.0299&0.0181&0.0113&0.0088&0.0078&0.0063\\
\hline
\end{tabular}
}
\caption{Maximum market share for different product types under different revenue loss $\gamma$}
\label{table:real_data_2}
\end{table}

\begin{table}[t]
\centering
{\scriptsize
\begin{tabular}{|c|c c c c c c|}
\hline
$\gamma$&0&0.02&0.05&0.1&0.15&0.2\\
\hline
Tablet&0.0007&0.0011&0.0019&0.0031&0.0037&0.0044\\
Mouse&0.0004&0.0016&0.0017&0.0029&0.0032&0.0036\\
Notebook&0.0003&0.0004&0.0006&0.0011&0.0018&0.0024\\
Refrigerator&0.0001&0.0003&0.0006&0.0011&0.0017&0.0022\\
Clock&0.0014&0.0050&0.0097&0.0118&0.0124&0.0140\\
Kettle&0.0002&0.0007&0.0009&0.0020&0.0042&0.0058\\
Printer&0.0012&0.0030&0.0033&0.0062&0.0079&0.0100\\
TV&0.0003&0.0003&0.0006&0.0010&0.0019&0.0026\\
\hline
\end{tabular}
}
\caption{Minimum nonzero market share for different product types under different revenue loss $\gamma$}
\label{table:real_data_3}
\end{table}

In Table~\ref{table:real_data_1}, the first column corresponds to the ratio of the minimum nonzero market shares over maximum market shares under the unconstrained optimal assortment. One can see from the first column that if the seller does not enforce the market share balancing constraint, the market share are significantly unbalanced for all tested product types. Specifically, for all tested product types, the market share of the best selling product is more than 60 times that of the least selling product; for refrigerators and kettles, the ratio is even more than 900. However, if the seller enforces market share balancing constraint at the cost of a fraction of revenue, the balance in market shares among offered products can be significantly improved. As shown in Table~\ref{table:real_data_1}, if the seller is willing to lose 5\% of the unconstrained optimal expected revenue, then the balancing parameter $\alpha$ can exceed or get close to $0.05$ for most tested product types, meaning that the maximum market share would be around 20 times the minimum nonzero market share or even less. For clocks, with a 5\% revenue loss, the balancing parameter can even get as large as $0.28$, meaning that the maximum market share is less than four times the minimum nonzero market share. If the seller is willing to lose 20\% of the unconstrained optimal expected revenue, the balancing parameter can further exceed $1/3$ for most of the tested product types, meaning that the maximum market share is less than 3 times the minimum market share. With these observations, we conclude that by enforcing the market share balancing constraint, the seller can significantly improve the balance in market shares while only incurring moderate revenue loss. 

One can also see from Tables~\ref{table:real_data_2} and~\ref{table:real_data_3}~that as we enforce stronger market share balancing constraint, the maximum market share tends to decrease and the minimum nonzero market share tends to increase, although not being strictly monotone. The decrease in maximum market share and the increase in minimum nonzero market share are often significant when the revenue loss $\gamma$ exceeds $0.1$. For example, if the seller chooses $\alpha$ that corresponds to 15\% revenue loss, then for all tested product types, the maximum market share under the corresponding $\alpha$ is reduced to less than $1/3$ of the maximum market share in the unconstrained setting; for five out of the eight tested product types, the maximum market share is reduced to less than $1/6$ of that in the unconstrained setting. Similarly, for all tested product types, the minimum nonzero market share under the corresponding $\alpha$ increases to at least $5$ times that of the minimum nonzero market share in the unconstrained setting, and for four out of the eight tested product types, the minimum nonzero market share increases to more than $8$ times that in the unconstrained setting. The observations mentioned above suggest that enforcing the market share balancing constraint can effectively reduce the maximum market share and increase the minimum nonzero market share.

}

\section{Conclusion}

In this paper, we propose a novel single parameter approach to define fairness within selected products, which we refer to as the market share balancing constraint. We have considered both the static and dynamic assortment optimization with market share balancing constraint. For the static problem, we prove that the problem is polynomial-time solvable by proving that the set of products offered with positive probability in the optimal solution is nested in revenue and preference weight. We also extend our results to the setting with additional constraints on the set of offered products. For the dynamic problem, we first introduce an upper bound integer program, and establish an asymptotically optimal policy for the dynamic problem based on an approximate solution of upper bound integer program. We also use numerical experiments to test the performance of our policy for the dynamic problem.

For future studies, we suggest exploration in the following directions. First, all results in this paper rely on MNL. Future research could explore the application of market share balancing constraint to other choice models, such as the general attraction model, nested logit model and Markov chain choice model. Second, sellers in practice often faces additional constraints on the each assortment offered to customers such as cardinality constraints or capacity constraints. This requires the seller to randomize only over assortments that satisfy certain constraints, and one potential area for future research is to consider assortment optimization with market share balancing constraint where the assortments randomized over has to satisfy additional constraints. We would like to point out that here the constraints are different from the static problem with additional constraints we discuss in Section \ref{sec:randomized_constrained}, where constraints are imposed to the set of products offered with positive probability instead of assortments the seller randomizes over.



\vspace{-2mm}

{\SingleSpacedXI\bibliographystyle{ormsv080}
\bibliography{ref}
}

\newpage

\begin{appendices}{\Large\noindent{\bf Appendix}}

\section{Proofs for the Static Problem}

\subsection{Proof of Proposition~\ref{thm:value_randomization}}\label{sec:thm:value_randomization}
To prove the proposition, we first prove that Inequality \eqref{eq:gap_upper_bound}~holds for all instances. Then we construct an instance for which~Inequality \eqref{eq:gap_lower_bound}~holds.

\noindent\underline{\bf Proof of Inequality \eqref{eq:gap_upper_bound}:} We first prove that $R^*_{det}\geq R^*/n$. We define $\bm{x}^*$ as an optimal solution to~\ref{prob:randomized}. Let $k=\arg\max_{i\in\mathcal{N}}r_ix_i^*$. Since $\bm{x}^*$ is feasible to~\ref{prob:randomized}, we get $x_{k}^*+x_0^*\leq 1$ and $x_{k}^*\leq v_{k}x_0^*$. Combining the two inequalities we get $x_{k}^*\leq v_{k}/(1+v_{k})$. It is easy to verify that the assortment $\{k\}$ is feasible to~\texttt{BMS-Deterministic}, therefore
\begin{equation}
R^*_{det}\geq R(\{k\})=\dfrac{r_{k}v_{k}}{1+v_{k}}\geq r_{k}x_{k}^*\geq \dfrac{1}{n}\sum_{i\in\mathcal{N}}r_ix_i^*=\dfrac{1}{n}R^*.\label{eq:gap_upper_bound_1}
\end{equation}

Then we prove that $R^*_{det}\geq (1-\alpha)R^*/2$. By Corollary~\ref{cor:optimal_x}, for some $v\in\{v_i:i\in\mathcal{N}\}$, an optimal solution $\bm{x}^*$ to~\ref{prob:randomized}~is given by $x_i^*=w_i^*/(1+\sum_{j\in\mathcal{N}}w_i^*)$, where $\bm{w}^*$ is given by
\begin{equation*}
w_i^*=\begin{cases}
\min\{v_i,v/\alpha\},\ &\text{if}\ v_i\geq v\ \text{and}\ r_i\geq R^*,\\
0,\ &\text{otherwise}.
\end{cases}
\end{equation*}
We have $R^*=\sum_{i\in\mathcal{N}}r_ix_i^*=\sum_{i\in\mathcal{N}}r_iw_i^*/(1+\sum_{i\in\mathcal{N}}w_i^*)$, which is equivalent to $\sum_{i\in\mathcal{N}}(r_i-R^*)w_i^*=R^*$. We define $\tilde{r}_i=r_i-R^*$, then we get $\sum_{i\in\mathcal{N}}\tilde{r}_iw_i^*=R^*$.

We further define
\begin{align*}
&A=\{i\in\mathcal{N}:v\leq v_i\leq v/\alpha,\ r_i\geq R^*\},\ B=\{i\in \mathcal{N}:v/\alpha<v_i\leq v/\alpha^2,\ r_i\geq R^*\},\\
&C=\{i\in\mathcal{N}:v_i>v/\alpha^2,\ r_i\geq R^*\}.
\end{align*}
Then we get
\begin{equation}
R^*=\sum_{i\in\mathcal{N}}\tilde{r}_iw_i^*=\sum_{i\in A}\tilde{r}_iv_i+\sum_{i\in B\cup C}\tilde{r}_i\dfrac{v}{\alpha}.\label{eq:obj_w_star}
\end{equation}
We further define $\bm{w}'\in\mathbb{R}^n$ as $w_i'=v_i$ for all $i\in B$, $w_i'=v/\alpha^2$ for all $i\in C$ and $w_i'=0$ for all $i\notin B\cup C$. By definition of $B$ and $C$ we have that $\bm{w}'$ is feasible to~Problem~\eqref{prob:randomized_linearized}. Since by Lemma~\ref{lemma:randomized_linearized}, the optimal value of Problem~\eqref{prob:randomized_linearized}~is $R^*$, we get
\begin{equation}
\sum_{i\in B}\tilde{r}_iv_i+\sum_{i\in C}\tilde{r}_i\dfrac{v}{\alpha^2}=\sum_{i\in\mathcal{N}}\tilde{r}_iw_i'\leq R^*.\label{eq:obj_w_prime}
\end{equation}
We prove that either $\sum_{i\in A}\tilde{r}_iv_i\geq (1-\alpha)R^*/2$ or $\sum_{i\in B}\tilde{r}_iv_i\geq (1-\alpha)R^*/2$. Suppose $\sum_{i\in A}\tilde{r}_iv_i<(1-\alpha)R^*/2$, then by~\eqref{eq:obj_w_star}~we get
$\sum_{i\in B\cup C}\tilde{r}_iv/\alpha\geq (1+\alpha)R^*/2$, which further implies
\begin{equation}\label{eq:sum_b_c}
\sum_{i\in B\cup C}\tilde{r}_i\dfrac{v}{\alpha^2}\geq \Big(\dfrac{1}{2\alpha}+\dfrac{1}{2}\Big)R^*.
\end{equation}
Subtracting~\eqref{eq:sum_b_c}~with~\eqref{eq:obj_w_prime} we get
\begin{equation*}
\sum_{i\in B}\tilde{r}_i\Big(\dfrac{v}{\alpha^2}-v_i\Big)\geq \Big(\dfrac{1}{2\alpha}-\dfrac{1}{2}\Big)R^*.
\end{equation*}
Since by definition $v_i>v/\alpha$ for all $i\in B$, we have that for any $i\in B$,
\begin{equation*}
\dfrac{1-\alpha}{\alpha}v_i\geq \dfrac{(1-\alpha)v}{\alpha^2}=\dfrac{v}{\alpha^2}-\dfrac{v}{\alpha}\geq \dfrac{v}{\alpha^2}-v_i.
\end{equation*}
Therefore we have
\begin{equation*}
\sum_{i\in B}\tilde{r}_iv_i\geq \dfrac{\alpha}{1-\alpha}\sum_{i\in B}\tilde{r}_i\Big(\dfrac{v}{\alpha^2}-v_i\Big)\geq \dfrac{\alpha}{1-\alpha}\Big(\dfrac{1}{2\alpha}-\dfrac{1}{2}\Big)R^*=\dfrac{1}{2}R^*\geq \dfrac{1-\alpha}{2}R^*.
\end{equation*}
This shows that $\sum_{i\in A}\tilde{r}_iv_i\geq (1-\alpha)R^*/2$ or $\sum_{i\in B}\tilde{r}_iv_i\geq (1-\alpha)R^*/2$ (i.e., at least one of the two inequalities is true). Suppose the former inequality holds, and recall that $\tilde{r}_i=r_i-R^*$, then we get
\begin{equation*}
\sum_{i\in A}r_iv_i\geq \Big(\dfrac{1-\alpha}{2}+\sum_{i\in A}v_i\Big)R^*\geq \dfrac{1-\alpha}{2}\Big(1+\sum_{i\in A}v_i\Big)R^*,
\end{equation*}
which implies
\begin{equation*}
R(A)=\dfrac{\sum_{i\in A}r_iv_i}{1+\sum_{i\in A}v_i}\geq \dfrac{1-\alpha}{2}R^*.
\end{equation*}
Similarly, if $\sum_{i\in B}\tilde{r}_iv_i\geq (1-\alpha)R^*/2$, then we have $R(B)\geq (1-\alpha)R^*/2$. Therefore we always have $\max\{R(A),R(B)\}\geq (1-\alpha)R^*/2$. 

Finally we verify that both assortments $A$ and $B$ are feasible to~\texttt{BMS-Deterministic}. Consider any $i,j\in A$, we get $\phi(i,A)/\phi(j,A)=v_i/v_j{ \geq} v/(v/\alpha)=\alpha$, thus $A$ is feasible to~\texttt{BMS-Deterministic}. Similarly $B$ is also feasible to~\texttt{BMS-Deterministic}. Therefore we get
\begin{equation}
R^*_{det}\geq \max\{R(A),R(B)\}\geq \dfrac{1-\alpha}{2}R^*.\label{eq:gap_upper_bound_2}
\end{equation}
Combining~\eqref{eq:gap_upper_bound_1} and \eqref{eq:gap_upper_bound_2}~we get the desired Inequality~\eqref{eq:gap_upper_bound}.

\noindent\underline{\bf Proof of Inequality \eqref{eq:gap_lower_bound}:} We distinguish two different cases: $\alpha>1-1/n$ and $\alpha\leq1-1/n$.

Suppose $\alpha>1-1/n$. In this case, we have $1/(1-\alpha)>n$, thus it suffices to construct an example such that $R^*_{det}\leq 2R^*/n$. We define $\epsilon=(1-2/e)n^{-1}$. The revenue and preference weights are defined as
\begin{equation*}
v_i=\Big(1-\dfrac{1}{n}\Big)^i\epsilon,\ r_i=1/v_i,\ \forall i\in\{1,2,\dots,n\}.
\end{equation*}
In this setting, if $v_i<v_j$, then $v_i/v_j\leq 1-1/n<\alpha$. Therefore any feasible solution to~\texttt{BMS-Deterministic}~contains at most one product. Then we get
\begin{equation*}
R^*_{det}=\max_{i\in\mathcal{N}}\dfrac{r_iv_i}{1+v_i}\leq \max_{i\in\mathcal{N}}r_iv_i=1.
\end{equation*}
We further define $\tilde{\bm{x}}$ as
\begin{equation*}
\tilde{x}_i=\dfrac{v_n}{1+nv_n},\ \forall i\in\mathcal{N}\ \text{and}\ \tilde{x}_0=\dfrac{1}{1+nv_n}.
\end{equation*}
It is easy to verify that $\tilde{x}_0+\sum_{i\in\mathcal{N}}\tilde{x}_i=1$. Since $\tilde{x}_i=\tilde{x}_j$ for all $i,j\in\mathcal{N}$, $\tilde{\bm{x}}$ satisfies the third set of constraints in~\ref{prob:randomized}. Since $v_i\geq v_n$ for all $i\in\mathcal{N}$, we get $\tilde{x}_i=v_n\tilde{x}_0\leq v_i\tilde{x}_0$ for all $i\in\mathcal{N}$. Therefore $\bm{x}$ is feasible to~\ref{prob:randomized}. Therefore
\begin{equation*}
R^*\geq\sum_{i\in\mathcal{N}}r_i\tilde{x}_i=\dfrac{v_n}{1+nv_n}\sum_{i=1}^n \dfrac{1}{v_i}\geq \dfrac{1}{2-2/e}\sum_{i=1}^n\Big(1-\dfrac{1}{n}\Big)^{n-i}=\dfrac{1}{2-2/e}\dfrac{1-(1-1/n)^n}{1/n}\geq \dfrac{n}{2}.
\end{equation*}
Here, the second inequality is due to $nv_n\leq 1-2/e$, the third inequality is due to $(1-1/n)^n<1/e$ for all $n\in\mathbb{Z}^+$. Therefore in this setting
\begin{equation*}
\dfrac{2}{n}R^*\geq 1\geq R^*_{det}.
\end{equation*}

Suppose $\alpha\leq 1-1/n$, then we get $1/(1-\alpha)\leq n$, thus it suffices to construct an example such that $R^*_{det}\leq 2(1-\alpha)R^*$. We assume $\epsilon$ satisfies $(1+n\epsilon)^2=2-2/e$. The revenue and preference weights are defined as
\begin{equation*}
v_i=\alpha^{i}(1-\epsilon)^i\epsilon,\ r_i=1/v_i,\ \forall i\in\{1,2,\dots,n\}.
\end{equation*}
In this setting, if $v_i<v_j$ then $v_i/v_j\leq(1-\epsilon)\alpha<\alpha$. Therefore any feasible solution to~\texttt{BMS-Deterministic}~contains at most one product. Then we get
\begin{equation*}
R^*_{det}=\max_{i\in\mathcal{N}}\dfrac{r_iv_i}{1+v_i}\leq \max_{i\in\mathcal{N}}r_iv_i=1.
\end{equation*}
We further define $\tilde{\bm{x}}$ as
\begin{equation*}
\tilde{x}_i=\dfrac{v_n}{1+nv_n},\ \forall i\in\mathcal{N},\ \tilde{x}_0=\dfrac{1}{1+nv_n}.
\end{equation*}
Similar to the previous case, $\tilde{\bm{x}}$ is feasible to~\ref{prob:randomized}. We further have
\begin{align*}
R^*\geq&\sum_{i\in\mathcal{N}}r_i\tilde{x}_i=\dfrac{v_n}{1+nv_n}\sum_{i=1}^n \dfrac{1}{v_i}\geq \dfrac{1}{1+n\epsilon}\sum_{i=1}^n (1-\epsilon)^{n-i}\alpha^{n-i}=\dfrac{1}{1+n\epsilon}\dfrac{1-(1-\epsilon)^n\alpha^n}{1-(1-\epsilon)\alpha}\\
\geq&\dfrac{1}{1+n\epsilon}\dfrac{1-(1-1/n)^n}{1-\alpha+\epsilon}\geq\dfrac{1}{1+n\epsilon}\dfrac{1-1/e}{1-\alpha+\epsilon}=\dfrac{1}{1+n\epsilon}\dfrac{1-1/e}{1-\alpha}\dfrac{1-\alpha}{1-\alpha+\epsilon}\\
\geq&\dfrac{1}{1+n\epsilon}\dfrac{1-1/e}{1-\alpha}\dfrac{1/n}{1/n+\epsilon}=\dfrac{1}{(1+n\epsilon)^2}\dfrac{1-1/e}{1-\alpha}=\dfrac{1}{2-2/e}\dfrac{1-1/e}{1-\alpha}=\dfrac{1}{2(1-\alpha)}.
\end{align*}
Here, the second inequality is due to $v_n\leq \epsilon$. The third inequality is due to $\alpha\leq 1-1/n$. The fourth inequality is due to $(1-1/n)^n<1/e$ for all $n\in\mathbb{Z}^+$. The fifth inequality is due to $\alpha\leq 1-1/n$. Therefore we have that
\begin{equation}
2(1-\alpha)R^*\geq 1\geq R^*_{det}.
\end{equation}
Therefore we conclude that for any $0<\alpha\leq 1$ and any $n\in\mathbb{Z}^+$, there exists an instance with $n$ products such that~\eqref{eq:gap_lower_bound}~is satisfied.

\subsection{Example for the Number of Assortments Randomized Over}\label{sec:example:value_randomization}

In this section, we provide an example to show that for any $\alpha>0$, there exists an instance such that the optimal solution to~\ref{prob:randomized_0} randomizes over $n$ assortments.

\begin{example}\label{example:value_randomization}
Consider any $n\geq 2$ and $\alpha>0$. We set $\mathcal{N}=\{1,2,\dots,n\}$, $r_i=1$ for all $i\in\mathcal{N}$, and the preference weights are defined as
\begin{equation*}
v_1=\alpha,\ v_i=1+i\alpha/n^2,\ \forall i\in\{2,3,\dots,n\}.
\end{equation*}

Suppose product $1$ is offered with positive probability. By the second set of constraints in \ref{prob:randomized} we get $x_1\leq v_1x_0=\alpha x_0$. Furthermore, since we have assumed $x_1>0$, by the market share balancing constraint we get $x_i\leq x_1/\alpha\leq \alpha x_0/\alpha=x_0$ for all $i\in\{2,3,\dots,n\}$. Therefore we have \mbox{$(n+\alpha)x_0\geq x_0+\sum_{i\in\mathcal{N}}x_i=1$}, which implies $x_0\geq 1/(n+\alpha)$. Since $r_i=1$ for all $i\in\mathcal{N}$, the expected revenue is upper bounded by $\sum_{i\in\mathcal{N}}x_i=1-x_0\leq (n+\alpha-1)/(n+\alpha)$, and equality holds if and only if $x_1=\alpha/(n+\alpha)$ and $x_0=x_i=1/(n+\alpha)$ for all $i\neq 1$. Furthermore, if we set $\tilde{x}_1=\alpha/(n+\alpha)$ and $\tilde{x}_0=\tilde{x}_i=1/(n+\alpha)$ for all $i\in\{2,3,\dots,n\}$, then $\tilde{\bm{x}}$ is feasible to \ref{prob:randomized} and $\sum_{i\in\mathcal{N}}\tilde{x}_i=(n+\alpha-1)/(n+\alpha)$. Therefore the maximum expected revenue when product $1$ is offered is $(n+\alpha-1)/(n+\alpha)$.
 
If product $1$ is not offered, we can consider $\{2,3,\dots,n\}$ as the universe of products, and the corresponding optimal unconstrained assortment is $\{2,3,\dots,n\}$, which generates an expected revenue of
\begin{equation*}
\dfrac{\sum_{i=2}^n v_i}{1+\sum_{i=2}^n v_i}=\dfrac{n-1+(n+2)(n-1)\alpha/2n^2}{n+(n+2)(n-1)\alpha/2n^2}<\dfrac{n+\alpha-1}{n+\alpha},
\end{equation*}
where the inequality is due to $(n+2)(n-1)/2<n^2$ for all $n\geq 2$. In the presence of market share balancing constraint, the maximal expected revenue when product $1$ is not offered can be even smaller. Therefore the optimal value of \ref{prob:randomized} is $(n+1-\alpha)/(n+\alpha)$, and the unique optimal solution to \ref{prob:randomized} is $\tilde{x}_1=\alpha/(n+\alpha)$ and $\tilde{x}_i=1/(n+\alpha)$ for all $i\in\{2,3,\dots,n\}$. 

Given an optimal solution $\tilde{\bm{x}}$ of \ref{prob:randomized}, if we construct a distribution over assortments following Equation~\eqref{eq:sales_to_distribution}, then the number of assortments the distribution randomizes over is equal to the number of distinct values of $\tilde{x}_i/v_i$ for all $i\in\mathcal{N}$. Since in this setting there are $n$ distinct values of $\tilde{x}_i/v_i$ for the optimal solution $\tilde{\bm{x}}$, the optimal solution to~\ref{prob:randomized_0}~randomizes over $n$ assortments.
\end{example}

\subsection{Proof of Lemma~\ref{lemma:randomized_linearized}}\label{sec:lemma:randomized_linearized}

We first prove that the optimal value of~Problem~\eqref{prob:randomized_linearized}~is greater than or equal to $R^*$. Consider any optimal solution $\hat{\bm{x}}$ to~\ref{prob:randomized}. If $\hat{x}_0=0$, then by the second set of constraints in~\ref{prob:randomized}~we have $\hat{x}_i=0$ for all $i\in\mathcal{N}$, which contradicts with the first constraint of \ref{prob:randomized}, i.e., \mbox{$\hat{x}_0+\sum_{i\in\mathcal{N}}\hat{x}_i=1$.} Therefore we must have $\hat{x}_0>0$. We define $\hat{\bm{w}}$ as $\hat{w}_i$ as $\hat{w}_i=\hat{x}_i/\hat{x}_0$ for all $i\in\mathcal{N}$. Since $\hat{\bm{x}}$ is a feasible solution of \ref{prob:randomized}, by the second set of constraints in~\ref{prob:randomized}~we have $\hat{w}_i\leq v_i$ for all $i\in\mathcal{N}$, and by the third set of constraints in~\ref{prob:randomized}~we have that for all $i\in\mathcal{N}$,
\begin{equation*}
\hat{w}_i\in\{0\}\cup\Big[\alpha\cdot\max_{j\in\mathcal{N}}\hat{w}_j,\infty\Big).
\end{equation*}
Therefore $\hat{\bm{w}}$ is a feasible solution to~Problem~\eqref{prob:randomized_linearized}. By the first constraint of~\ref{prob:randomized}~we get
\begin{equation*}
1=\hat{x}_0+\sum_{i\in\mathcal{N}}\hat{x}_i=\hat{x}_0+\sum_{i\in\mathcal{N}}\hat{w}_i\hat{x}_0=\Big(1+\sum_{i\in\mathcal{N}}\hat{w}_i\Big)\hat{x}_0.
\end{equation*}
Thus $\hat{x}_0=1/(1+\sum_{i\in\mathcal{N}}\hat{w}_i)$, and for any $i\in\mathcal{N}$, $\hat{x}_i=\hat{w}_i/(1+\sum_{i\in\mathcal{N}}\hat{w}_i)$. Since $\hat{\bm{x}}$ is an optimal solution to~\ref{prob:randomized}, we get
\begin{equation*}
R^*=\sum_{i\in\mathcal{N}}r_i\hat{x}_i=\dfrac{\sum_{i\in\mathcal{N}}r_i\hat{w}_i}{1+\sum_{i\in\mathcal{N}}\hat{w}_i},
\end{equation*}
which implies
\begin{equation*}
\sum_{i\in\mathcal{N}}(r_i-R^*)\hat{w}_i=R^*.
\end{equation*}
Therefore we conclude that $\hat{\bm{w}}$ is feasible to~Problem~\eqref{prob:randomized_linearized}~and the objective value of $\hat{\bm{w}}$ is equal to $R^*$, thus the objective value of~Problem~\eqref{prob:randomized_linearized}~is greater than or equal to that of~\ref{prob:randomized}.

Next, we prove that the optimal value of~Problem~\eqref{prob:randomized_linearized}~is equal to $R^*$. Consider any optimal solution $\bm{w}^*$ to~Problem~\eqref{prob:randomized_linearized}. Since we have proven that the optimal value of~Problem~\eqref{prob:randomized_linearized}~is greater than or equal to that of~\ref{prob:randomized}, we have
$\sum_{i\in\mathcal{N}}(r_i-R^*)w_i^*\geq R^*$, which implies
\begin{equation}
\dfrac{\sum_{i\in\mathcal{N}}r_iw_i^*}{1+\sum_{i\in\mathcal{N}}w_i^*}\geq R^*.\label{eq:value_x}
\end{equation}
We define $\bm{x}^*$ using~\eqref{eq:reconstruct_x}. It is easy to verify that $x_0^*+\sum_{i\in\mathcal{N}}x_i^*=1$. Since $\bm{w}^*$ is feasible to~Problem~\eqref{prob:randomized_linearized}, by the first set of constraints in~Problem~\eqref{prob:randomized_linearized}~we get $x_i^*\leq v_ix_0^*$ for any $i\in\mathcal{N}$, and by the second set of constraints in~Problem~\eqref{prob:randomized_linearized}~we have that for any $i\in\mathcal{N}$,
\begin{equation*}
x_i^*\in\{0\}\cup\Big[\alpha\cdot\max_{j\in\mathcal{N}}x_j^*,\infty\Big).
\end{equation*}
Therefore $\bm{x}^*$ is feasible to~\ref{prob:randomized}. Furthermore, by~\eqref{eq:value_x}~we get $\sum_{i\in\mathcal{N}}r_ix_i^*\geq R^*$, thus $\bm{x}^*$ is an optimal solution to~\ref{prob:randomized}. This further implies that $\sum_{i\in\mathcal{N}}r_ix_i^*=R^*$, thus
\begin{equation*}
\dfrac{\sum_{i\in\mathcal{N}}r_iw_i^*}{1+\sum_{i\in\mathcal{N}}w_i^*}=R^*.
\end{equation*}
Then we get $\sum_{i\in\mathcal{N}}(r_i-R^*)w_i^*=R^*$. Therefore the optimal value of~Problem~\eqref{prob:randomized_linearized}~is equal to $R^*$, and given any optimal solution $\bm{w}^*$ to~Problem~\eqref{prob:randomized_linearized}, we obtain an optimal solution $\bm{x}^*$ to~\ref{prob:randomized}~using~\eqref{eq:reconstruct_x}.

\subsection{Proof of Lemma~\ref{lemma:randomized_constrained_linearized}}\label{sec:lemma:randomized_constrained_linearized}

We first prove that the optimal value of~Problem~\eqref{prob:randomized_constrained_linearized}~is greater than or equal to $R_{\mathcal{X}}^*$. Consider any optimal solution $\hat{\bm{x}}$ to~\ref{prob:randomized_constrained}. We define $\hat{\bm{w}}$ as $\hat{w}_i=\hat{x}_i/\hat{x}_0$ for each $i\in\mathcal{N}$. Following similar arguments in the proof of Lemma~\ref{lemma:randomized_linearized}~we have that $\hat{\bm{w}}$ satisfies the first and second sets of constraints in Problem~\eqref{prob:randomized_constrained_linearized}. We further have $\{i\in\mathcal{N}:\hat{w}_i>0\}=\{i\in\mathcal{N}:\hat{x}_i>0\}\in\mathcal{X}$. Therefore $\hat{\bm{w}}$ is a feasible solution to~Problem~\eqref{prob:randomized_constrained_linearized}. Using similar arguments in the proof of Lemma~\ref{lemma:randomized_linearized} we get
\begin{equation*}
\sum_{i\in\mathcal{N}}(r_i-R_{\mathcal{X}}^*)\hat{w}_i=R_{\mathcal{X}}^*.
\end{equation*}
Therefore we conclude that $\hat{\bm{w}}$ is feasible to~Problem~\eqref{prob:randomized_constrained_linearized}~and the objective value of $\hat{\bm{w}}$ is equal to $R_{\mathcal{X}}^*$, thus the optimal value of~Problem~\eqref{prob:randomized_constrained_linearized}~is greater than or equal to that of~\ref{prob:randomized_constrained}.

Next, we prove that the optimal value of~Problem~\eqref{prob:randomized_constrained_linearized}~is equal to $R^*_{\mathcal{X}}$. Consider any optimal solution $\bm{w}^*$ to~Problem~\eqref{prob:randomized_constrained_linearized}. Since we have proven that the optimal value of~Problem~\eqref{prob:randomized_constrained_linearized}~is greater than or equal to that of~\ref{prob:randomized_constrained}, we have
$\sum_{i\in\mathcal{N}}(r_i-R_{\mathcal{X}}^*)w_i^*\geq R_{\mathcal{X}}^*$, which implies
\begin{equation}
\dfrac{\sum_{i\in\mathcal{N}}r_iw_i^*}{1+\sum_{i\in\mathcal{N}}w_i^*}\geq R_{\mathcal{X}}^*.\label{eq:value_x_2}
\end{equation}
We define $\bm{x}^*$ using~\eqref{eq:reconstruct_x}. Using the same steps in the proof of Lemma~\ref{lemma:randomized_linearized}~we have that $\bm{x}^*$ satisfies the first three sets of constraints in~\ref{prob:randomized_constrained}. We further have $\{i\in\mathcal{N}:x_i^*>0\}=\{i\in\mathcal{N}:w_i^*>0\}\in\mathcal{X}$. Therefore $\bm{x}^*$ is feasible to~\ref{prob:randomized_constrained}. Furthermore, by~\eqref{eq:value_x_2}~we get $\sum_{i\in\mathcal{N}}r_ix_i^*\geq R_{\mathcal{X}}^*$, thus $\bm{x}^*$ is an optimal solution to~\ref{prob:randomized_constrained}. This further implies that $\sum_{i\in\mathcal{N}}r_ix_i^*=R^*_{\mathcal{X}}$, thus
\begin{equation*}
\dfrac{\sum_{i\in\mathcal{N}}r_iw_i^*}{1+\sum_{i\in\mathcal{N}}w_i^*}=R_{\mathcal{X}}^*.
\end{equation*}
Then we get $\sum_{i\in\mathcal{N}}(r_i-R_{\mathcal{X}}^*)w_i^*=R_{\mathcal{X}}^*$. Therefore the optimal value of~Problem~\eqref{prob:randomized_constrained_linearized}~is $R^*_{\mathcal{X}}$, and given an optimal solution $\bm{w}^*$ to~Problem~\eqref{prob:randomized_constrained_linearized}, we obtain an optimal solution $\bm{x}^*$ to~\ref{prob:randomized_constrained}~using~\eqref{eq:reconstruct_x}.

\subsection{Proof of Lemma~\ref{lemma:w_structure}}\label{sec:lemma:w_structure}

If $\mathcal{X}=\{\varnothing\}$, then $\bm{w}^*=0$ is optimal to Problem~\eqref{prob:randomized_constrained_linearized}, and the conditions in the lemma is satisfied by setting $S^*=\varnothing$ and $\underline{v}=v_1$. Thus for the rest of the proof, we assume $\mathcal{X}$ contains at least one nonempty subset of $\mathcal{N}$.

We use the variable $\bar{S}$ to capture $\{i\in\mathcal{N}:w_i>0\}$ and  the variable $v$ to capture $\alpha$ times the maximum element of $\bm{w}$. In this way we rewrite~Problem~\eqref{prob:randomized_constrained_linearized}~in the following equivalent form
\begin{equation}\label{prob:randomized_constrained_linearized_2}
\begin{aligned}
&\max_{\bar{S}\subseteq\mathcal{N},\bm{w},v}&&\sum_{i\in\bar{S}}(r_i-R_{\mathcal{X}}^*)w_i\\
&\text{s.t.}&&0\leq w_i\leq v_i,\ \forall i\in\bar{S},\\
&&&v\leq w_i\leq v/\alpha,\ \forall i\in\bar{S},\\
&&&\bar{S}\in\mathcal{X}.
\end{aligned}
\end{equation}
The first set of constraints in~Problem~\eqref{prob:randomized_constrained_linearized_2}~follows directly from the first set of constraints in~Problem~\eqref{prob:randomized_constrained_linearized}. We further show that the second set of constraints in~Problem~\eqref{prob:randomized_constrained_linearized_2}~is equivalent to the second set of constraints in~Problem~\eqref{prob:randomized_constrained_linearized}. Suppose $\bm{w}$ is feasible to~Problem~\eqref{prob:randomized_constrained_linearized}, then by setting $\bar{S}=\{i\in\mathcal{N}:w_i>0\}$ and $v=\alpha\cdot\max_{j\in\mathcal{N}}w_j$, by the second set of constraints in~Problem~\eqref{prob:randomized_constrained_linearized}~we have that for all $i\in\bar{S}$,
\begin{equation*}
v=\alpha\cdot\max_{j\in\mathcal{N}}w_j\leq w_i\leq \max_{j\in\mathcal{N}}w_j=v/\alpha.
\end{equation*}
On the other hand, if $(\bar{S},\bm{w},v)$ is feasible to~Problem~\eqref{prob:randomized_constrained_linearized_2}, we set $w_i=0$ for all $i\notin\bar{S}$, then by the last set of constraints in~Problem~\eqref{prob:randomized_constrained_linearized_2}~we get $v/\alpha\geq \max_{j\in\mathcal{N}}w_j$, thus for all $i\in\bar{S}$, $w_i\geq v\geq \alpha\cdot\max_{j\in\mathcal{N}}w_j$. This implies that $\bm{w}$ satisfies the second set of constraints in~Problem~\eqref{prob:randomized_constrained_linearized}. Finally, if $w_i=0$ for some $i\in\bar{S}$, then by the second set of constraints in~Problem~\eqref{prob:randomized_constrained_linearized_2}~we get $v=0$, thus $w_i=0$ for all $i\in\bar{S}$, which is obviously suboptimal. Therefore we can focus on the case where $w_i>0$ for all $i\in\bar{S}$, and in this case by setting $w_i=0$ for all $i\notin\bar{S}$, we get $\bar{S}=\{i\in\mathcal{N}:w_i>0\}$, thus the last constraint in~Problem~\eqref{prob:randomized_constrained_linearized_2}~is equivalent to the last constraint in~Problem~\eqref{prob:randomized_constrained_linearized}. Therefore we conclude that~Problem~\eqref{prob:randomized_constrained_linearized_2}~is equivalent to~Problem~\eqref{prob:randomized_constrained_linearized}.

Consider any optimal solution $(S^*,\tilde{\bm{w}},\tilde{v})$ to~Problem~\eqref{prob:randomized_constrained_linearized_2}. We fix $\bar{S}=S^*$ and only optimize over $\bm{w}$ and $v$, then the following problem is equivalent to~Problem~\eqref{prob:randomized_constrained_linearized_2},
\begin{equation}\label{prob:randomized_constrained_linearized_3}
\begin{aligned}
&\max_{\bm{w},v}&&\sum_{i\in S^*}(r_i-R^*_{\mathcal{X}})w_i\\
&\text{s.t.}&&0\leq w_i\leq v_i,\ \forall i\in S^*,\\
&&& v\leq w_i\leq v/\alpha,\ \forall i\in S^*.
\end{aligned}
\end{equation}
Suppose $(\hat{\bm{w}},\underline{v})$ is a basic feasible solution that is optimal to~Problem~\eqref{prob:randomized_constrained_linearized_3}. If $\hat{w}_i=0$ for some $i\in S^*$, then $\underline{v}=0$, thus $\hat{w}_j=0$ for all $j\in S^*$, and the corresponding objective value is zero. To prove that $\hat{\bm{w}}=0$ is not optimal, we claim that $R^*_{\mathcal{X}}>0$. Since we have assumed that $\mathcal{X}$ contains at least one nonempty subset of $\mathcal{N}$, there exists $S_0\in\mathcal{X}$ such that $S_0\neq \varnothing$. We define $K_0=|S_0|$, and define $\tilde{\bm{x}}$ as $\tilde{x}_i=v_{\min}/(1+K_0v_{\min})$ for all $i\in S_0$, $\tilde{x}_i=0$ for all $i\notin S_0$ and $\tilde{x}_0=1/(1+K_0v_{\min})$. Then we have $\tilde{x}_0+\sum_{i\in\mathcal{N}}\tilde{x}_i=1$, and $\tilde{x}_i\leq v_{\min}\tilde{x}_0\leq v_i\tilde{x}_0$ for all $i\in\mathcal{N}$. Since $\tilde{x}_i\in\{0,v_{\min}/(1+K_0v_{\min})\}$ for all $i\in\mathcal{N}$, $\tilde{\bm{x}}$ satisfies the third set of constraints of \ref{prob:randomized_constrained}. Finally $S_0=\{i\in\mathcal{N}:\tilde{x}_i>0\}\in\mathcal{X}$, thus $\tilde{\bm{x}}$ is feasible to \ref{prob:randomized_constrained}. Then we have $R_{\mathcal{X}}^*\geq \sum_{i\in\mathcal{N}}r_i\tilde{x}_i>0$. Therefore the optimal value of Problem~\eqref{prob:randomized_constrained_linearized} is not zero, and we conclude that the optimal solution $(\hat{\bm{w}},\underline{v})$ satisfies $\hat{w}_i>0$ for all $i\in S^*$.

Letting $K=|S^*|$, since $(\hat{\bm{w}},\underline{v})$ is a basic feasible solution of Problem~\eqref{prob:randomized_constrained_linearized_3}, and the dimension of the variables is $K+1$, equality should hold in at least $K+1$ inequality constraints of Problem~\eqref{prob:randomized_constrained_linearized_3}. We define \mbox{$A_1=\{i\in S^*:\hat{w}_i=v_i\}$,} \mbox{$A_2=\{i\in S^*:\hat{w}_i=\underline{v}/\alpha\}\cup\{i\in S^*:\hat{w}_i=\underline{v}\}$,} then we get \mbox{$|A_1|+|A_2|\geq K+1$.} Suppose $A_1\cap A_2=\varnothing$, then $|A_1|+|A_2|=|A_1\cup A_2|\leq |S^*|=K$, which contradicts with $|A_1|+|A_2|\geq K+1$. Therefore we get $A_1\cap A_2\neq\varnothing$, thus there exists $k\in A_1\cap A_2$. Since $k\in A_1$, we have $v_k=\hat{w}_k$. Since $k\in A_2$, we have that either $\hat{w}_k=\underline{v}$ or $\hat{w}_k=\underline{v}/\alpha$. Therefore either $v_k=\hat{w}_k=\underline{v}$, or $v_k=\hat{w}_k=\underline{v}/\alpha$. In both cases we have $\underline{v}\in\{v_i:i\in\mathcal{N}\}\cup\{\alpha v_i:i\in\mathcal{N}\}$. Furthermore, for all $i\in S^*$, we get $v_i\geq \hat{w}_i\geq \underline{v}$, thus $v_i\geq \underline{v}$ for all $i\in S^*$.

We further fix $\bar{S}=S^*$ and $v=\underline{v}$ and optimize only over $\bm{w}$, then we obtain the following problem equivalent to~Problem~\eqref{prob:randomized_constrained_linearized},
\begin{equation}\label{prob:randomized_constrained_linearized_4}
\begin{aligned}
&\max_{\bm{w}}&&\sum_{i\in S^*}(r_i-R_{\mathcal{X}}^*)w_i\\
&\text{s.t.}&&\underline{v}\leq w_i\leq \min\{v_i,\underline{v}/\alpha\},\ \forall i\in S^*.
\end{aligned}
\end{equation}
In~Problem~\eqref{prob:randomized_constrained_linearized_4}, the variable $\bm{w}$ is separable for each $i\in S^*$. It is easy to see that for each $i\in S^*$, setting $w_i^*=\min\{v_i,\underline{v}/\alpha\}$ if $r_i\geq R^*_{\mathcal{X}}$ and $w_i^*=\underline{v}$ if $r_i<R^*_{\mathcal{X}}$ is optimal to~Problem~\eqref{prob:randomized_constrained_linearized_4}. Thus the $\bm{w}^*$ defined in~\eqref{eq:w_structure}~is optimal to~Problem~\eqref{prob:randomized_constrained_linearized_4}. Then we conclude that there exists an optimal solution $\bm{w}^*$ to~Problem~\eqref{prob:randomized_constrained_linearized}~and $\underline{v}>0$, $S^*\in\mathcal{X}$ that satisfies the conditions stated in Lemma~\ref{lemma:w_structure}.

\subsection{Proof of Lemma~\ref{lemma:transformed_mnl}}\label{sec:lemma:transformed_mnl}

Suppose $\underline{v}$ and $S^*$ satisfies the conditions stated in Lemma~\ref{lemma:w_structure}, and we define $\bm{w}^*$ using~\eqref{eq:w_structure}. Since by the conditions in Lemma~\ref{lemma:w_structure}~we have $v_i\geq \underline{v}$ for all $i\in S^*$, we get $S^*\subseteq S^+(\underline{v})$. The conditions in Lemma~\ref{lemma:w_structure} also state that $S^*\in\mathcal{X}$. Therefore $S^*$ is feasible to~\ref{prob:transformed_mnl}. In what follows, we prove that the objective value of $S^*$ in~\ref{prob:transformed_mnl}~is $R^*_{\mathcal{X}}$. By definition of $\bm{v}'$ we have $w_i^*=v'_i$ for all $i\in S^*$ and $w_i^*=0$ for all $i\notin S^*$. Since $\bm{w}^*$ is optimal to~Problem~\eqref{prob:randomized_constrained_linearized}, and by Lemma~\ref{lemma:randomized_constrained_linearized}~the optimal value of~Problem~\eqref{prob:randomized_constrained_linearized}~is $R^*_{\mathcal{X}}$, we get
\begin{equation*}
\sum_{i\in S^*}(r_i-R_{\mathcal{X}}^*)v_i'=\sum_{i\in\mathcal{N}}(r_i-R_{\mathcal{X}}^*)w_i^*=R_{\mathcal{X}}^*,
\end{equation*}
which is equivalent to
\begin{equation*}
\dfrac{\sum_{i\in S^*}r_iv_i'}{1+\sum_{i\in S^*}v_i'}=R_{\mathcal{X}}^*.
\end{equation*}

Then we prove that $R_{\mathcal{X}}^*$ is the optimal value of~\ref{prob:transformed_mnl}. Consider any feasible solution $S$ of~\ref{prob:transformed_mnl}. We define $\tilde{\bm{w}}$ as $\tilde{w}_i=v_i'$ for all $i\in S$ and $\tilde{w}_i=0$ for all $i\notin S$. We prove that $\tilde{\bm{w}}$ is feasible to~Problem~\eqref{prob:randomized_constrained_linearized}. We first have $\{i\in\mathcal{N}:\tilde{w}_i>0\}=S\in\mathcal{X}$. Consider any $i\in S$, if $r_i<R_{\mathcal{X}}^*$, since $S\subseteq S^+(\underline{v})$, we have $v_i\geq \underline{v}=\tilde{w}_i$; if $r_i\geq R_{\mathcal{X}}^*$ then we have $v_i\geq \min\{v_i,\underline{v}/\alpha\}=\tilde{w}_i$. Therefore $\tilde{\bm{w}}$ satisfies $\tilde{w}_i\leq v_i$ for all $i\in\mathcal{N}$. Since $S\subseteq S^+(\underline{v})$, we have $v_i\geq \underline{v}$ for all $i\in S$, thus $\underline{v}\leq \tilde{w}_i\leq \underline{v}/\alpha$. This further implies $\tilde{w}_i\geq \underline{v}\geq \alpha\cdot\max_{j\in\mathcal{N}}\tilde{w}_j$ for all $i\in S$, thus $\tilde{\bm{w}}$ satisfies the second set of constraints in~Problem~\eqref{prob:randomized_constrained_linearized}. Therefore we conclude that $\tilde{\bm{w}}$ is feasible to~Problem~\eqref{prob:randomized_constrained_linearized}. Since by Lemma~\ref{lemma:randomized_constrained_linearized} the optimal value of~Problem~\eqref{prob:randomized_constrained_linearized}~is $R^*_{\mathcal{X}}$, we get
\begin{equation*}
\sum_{i\in S}(r_i-R_{\mathcal{X}}^*)v'_i=\sum_{i\in \mathcal{N}}(r_i-R_{\mathcal{X}}^*)\tilde{w}_i\leq R_{\mathcal{X}}^*,
\end{equation*}
which is equivalent to
\begin{equation*}
\dfrac{\sum_{i\in S}r_iv_i'}{1+\sum_{i\in S}v_i'}\leq R_{\mathcal{X}}^*.
\end{equation*}
Therefore, the optimal value of~\ref{prob:transformed_mnl}~is $R_{\mathcal{X}}^*$, and $S^*$ is an optimal solution to~\ref{prob:transformed_mnl}.

\section{Proofs for the Dynamic Problem}

\subsection{Proof of Lemma~\ref{lemma:inventory_upper_bound}}\label{sec:lemma:inventory_upper_bound}

Consider any dynamic control policy $\pi$. We use $\pi_t(\bm{H}_{t-1},\cdot)$ to denote the distribution over assortments returned by policy $\pi$ at time period $t$ given history of sales $\bm{H}_{t-1}$. In other words, given history of sales $\bm{H}_{t-1}$, the probability of offering assortment $S$ at time period $t$ is $\pi_t(\bm{H}_{t-1},S)$. For any $S\subseteq\mathcal{N}$, we define $h^\pi(S)$ as the total expected number of times in which assortment $S$ is offered, given by
\begin{equation*}
h^\pi(S)=\sum_{t=1}^T\mathbb{E}[\pi_t(\bm{H}_{t-1},S)].
\end{equation*}
Here the expectation is taken over the randomness in $\bm{H}_{t-1}$. We claim that for all $i\in\mathcal{N}$,
\begin{equation}\label{eq:expected_sales}
\mathbb{E}[X_{iT}^\pi]=\sum_{S\subseteq\mathcal{N}}\phi(i,S)h^\pi(S).
\end{equation}
By definition of $\pi_t(\bm{H}_{t-1},\cdot)$, we first have that for all $i\in\mathcal{N}$ and $t\in\{1,2,\dots,T\}$,
\begin{equation*}
\mathbb{E}[X^\pi_{it}-X^\pi_{i,t-1}|\bm{H}_{t-1}]=\sum_{S\subseteq\mathcal{N}}\phi(i,S)\pi_t(\bm{H}_{t-1},S).
\end{equation*}
Taking the expectation over $\bm{H}_{t-1}$ we get
\begin{equation*}
\mathbb{E}[X^\pi_{it}-X_{i,t-1}^\pi]=\sum_{S\subseteq\mathcal{N}}\phi(i,S)\mathbb{E}[\pi_t(\bm{H}_{t-1},S)].
\end{equation*}
Taking the sum over $t\in\{1,2,\dots,T\}$ we get
\begin{equation*}
\mathbb{E}[X_{iT}^\pi]=\sum_{t=1}^T\mathbb{E}[X_{it}^\pi-X_{i,t-1}^\pi]=\sum_{S\subseteq\mathcal{N}}\sum_{t=1}^T\phi(i,S)\mathbb{E}[\pi_t(\bm{H}_{t-1},S)]=\sum_{S\subseteq\mathcal{N}}\phi(i,S)h^\pi(S).
\end{equation*}

We define $\tilde{\bm{x}}$ as $\tilde{x}_i=\mathbb{E}[X_{iT}^\pi]/T$ for all $i\in\mathcal{N}$ and $\tilde{x}_0=\sum_{S\subseteq\mathcal{N}}\phi(0,S)h^\pi(S)/T$. Then by~\eqref{eq:expected_sales}~we have
\begin{equation*}
\tilde{x}_0+\sum_{i\in\mathcal{N}}\tilde{x}_i=\sum_{S\subseteq\mathcal{N}}\Big(\phi(0,S)+\sum_{i\in\mathcal{N}}\phi(i,S)\Big)h^\pi(S)/T=\sum_{S\subseteq\mathcal{N}}h^\pi(S)/T=\sum_{t=1}^T\sum_{S\subseteq\mathcal{N}}\mathbb{E}[\pi_t(\bm{H}_{t-1},S)]/T=1.
\end{equation*}
Therefore $\tilde{\bm{x}}$ satisfies the first constraint of~the \ref{prob:inventory_upper_bound} problem. Furthermore, we have that for all $i\in\mathcal{N}$,
\begin{equation*}
\tilde{x}_i=\sum_{S\subseteq\mathcal{N}}\phi(i,S)h^\pi(S)/T\leq \sum_{S\subseteq\mathcal{N}}v_i\phi(0,S)h^\pi(S)/T=v_i\tilde{x}_0.
\end{equation*}
Here the inequality holds because $\phi(i,S)\leq v_i\phi(0,S)$ for all $S\subseteq\mathcal{N}$. This is because $\phi(i,S)=0$ if $i\notin S$, and $\phi(i,S)=v_i\phi(0,S)$ if $i\in S$. Therefore $\bm{x}$ satisfies the second set of constraints of~the \ref{prob:inventory_upper_bound} problem. Since $\mathbb{P}(X_{iT}^\pi\leq c_i)=1$ for all $i\in\mathcal{N}$, we have $\mathbb{E}[X_{iT}^\pi]\leq c_i$, thus $\tilde{x}_i=\mathbb{E}[X_{iT}^\pi]/T\leq c_i/T$ for all $i\in\mathcal{N}$. Therefore $\tilde{\bm{x}}$ satisfies the third set of constraints in~the \ref{prob:inventory_upper_bound} problem. By definition of $\tilde{\bm{x}}$, the second set of constraints in~\ref{prob:inventory}~immediately implies the last set of constraints in~the \ref{prob:inventory_upper_bound} problem. Therefore $\tilde{\bm{x}}$ is feasible to~the \ref{prob:inventory_upper_bound} problem. We further have $\sum_{i\in\mathcal{N}}r_i\mathbb{E}[X_{iT}^\pi]=T\sum_{i\in\mathcal{N}}r_i\tilde{x}_i$. Therefore the optimal value of~the \ref{prob:inventory_upper_bound} problem~is an upper bound of that of~\ref{prob:inventory}.

\subsection{Proof of Proposition~\ref{thm:inventory_upper_bound}}\label{sec:thm:inventory_upper_bound}

We prove the NP-hardness of~the \ref{prob:inventory_upper_bound} problem~by using a reduction from the cardinality constrained subset sum problem. We start by presenting the subset sum problem.

\noindent\textbf{Cardinality Constrained Subset Sum Problem:} Given elements $\{1,2,\dots,2m\}$, each element $i$ is associated with a weight $a_i\in\mathbb{Z}^+$ such that $\sum_{i=1}^{2m}a_i=2Q$ for some $Q\in\mathbb{Z}^+$. The goal of the problem is to determine whether there exists $A\subseteq\{1,2,\dots,2m\}$ such that $|A|=m$ and $\sum_{i\in A}a_i=Q$. The cardinality constrained subset sum problem is NP-hard (see Appendix 3.2 of \citealt{garey1979computers}).

Consider any instance of the cardinality constrained subset sum problem. We assume that $m\geq 2$. Based on the cardinality constrained subset sum problem, we construct an instance of the \ref{prob:inventory_upper_bound} problem as follows. We set $\mathcal{N}=\{1,2,\dots,2m,2m+1\}$, and define $\epsilon=1/4mQ$, $M=10m^2/\epsilon^2$ and $B=2M-m-\epsilon Q$. The revenue and preference weights of products are defined as
\begin{equation*}
r_i=B-\dfrac{1+\epsilon a_i}{1+\epsilon^2 a_i},\ v_i=1+\epsilon^2 a_i,\ \forall i\in\{1,2,\dots,2m\},\ r_{2m+1}=M+B,\ v_{2m+1}=2.
\end{equation*}
We set $\alpha=1/2$. We define $c_{2m+1}'$ as
\begin{equation*}
c_{2m+1}'=\dfrac{2}{3+m+\epsilon^2Q}.
\end{equation*}
We further set $T$ as an positive integer such that $Tc_{2m+1}'$ is an integer. Such a $T$ exists because $c_{2m+1}'$ is rational. We set $c_{2m+1}=Tc_{2m+1}'$ and $c_i=T$ for all $i\in\{1,2,\dots,2m\}$.

Next, we prove that the optimal value of~the \ref{prob:inventory_upper_bound} problem~is greater than or equal to $TB$ if and only if there exists $A\subseteq\{1,2,\dots,2m\}$ such that $|A|=m$ and $\sum_{i\in A}a_i=Q$.

Suppose there exists $A\subseteq\{1,2,\dots,2m\}$ such that $|A|=m$ and $\sum_{i\in A}a_i=Q$. We define $\bm{x}$ such that
\begin{equation*}
x_i=\phi(i,A\cup\{2m+1\}),\ \forall i\in\mathcal{N}\cup\{0\}.
\end{equation*}
It is easy to verify that $\bm{x}$ satisfies the first and second set of constraints in~the \ref{prob:inventory_upper_bound} problem. Since $c_i=T$ for all $i\in\{1,2,\dots,2m\}$, we have $x_i\leq 1=c_i/T$ for all $i\in\{1,2,\dots,2m\}$. Since by definition $\sum_{i\in A}a_i=Q$ and $|A|=m$, we have
\begin{equation*}
1+v_{2m+1}+\sum_{j\in A}v_j=1+2+m+\epsilon^2\sum_{i\in A}a_i=3+m+\epsilon^2Q.
\end{equation*}
Then we get
\begin{equation*}
x_{2m+1}=\dfrac{v_{2m+1}}{1+v_{2m+1}+\sum_{j\in A}v_j}=\dfrac{2}{3+m+\epsilon^2Q}=c_{2m+1}'=c_{2m+1}/T.
\end{equation*}
Therefore $\bm{x}$ satisfies the third set of constraints in~the \ref{prob:inventory_upper_bound} problem. By definition we have $v_{2m+1}=2$, and for all $i\in\{1,2,\dots,2m\}$, we have
\begin{equation*}
1\leq v_{i}=1+\epsilon^2 a_i\leq 1+2\epsilon^2Q=1+\epsilon/2m\leq 2.
\end{equation*}
Here the second inequality is because $a_i\leq\sum_{j=1}^m a_j=2Q$, and the third inequality is because $\epsilon\leq 1\leq m$. Therefore we conclude that $1\leq v_i\leq 2$ for all $i\in\mathcal{N}$. Recall that $\alpha=1/2$, we have that for all $i\in S$,
\begin{equation*}
x_i=\phi(i,S)=\dfrac{v_i}{1+\sum_{\ell\in S}v_\ell}\geq \dfrac{1}{1+\sum_{\ell\in S}v_\ell}=\alpha\cdot \dfrac{2}{1+\sum_{\ell\in S}v_\ell}\geq \alpha\cdot \max_{j\in S}\dfrac{v_j}{1+\sum_{\ell\in S}v_\ell}=\alpha\cdot \max_{j\in S}x_j.
\end{equation*}
Furthermore, we have $x_i=0$ for all $i\notin S$. Therefore $\bm{x}$ satisfies the last set of constraints in~the \ref{prob:inventory_upper_bound} problem. Therefore $\bm{x}$ is feasible to~the \ref{prob:inventory_upper_bound} problem. We further have
\begin{align*}
\sum_{i\in\mathcal{N}}r_ix_i=&\dfrac{r_{2m+1}v_{2m+1}+\sum_{i\in A}r_iv_i}{1+v_{2m+1}+\sum_{i\in A}v_i}
=\dfrac{2M+2B+B\sum_{i\in A}(1+\epsilon^2 a_i)-\sum_{i\in A}(1+\epsilon a_i)}{3+m+\epsilon^2Q}\\
=&\dfrac{2M+(m+2)B+\epsilon^2QB-m-\epsilon Q}{3+m+\epsilon^2Q}=\dfrac{(3+m+\epsilon^2Q)B}{3+m+\epsilon^2Q}=B,
\end{align*}
Here, the third equality uses the assumption that $|A|=m$ and $\sum_{i\in A}a_i=Q$, and the fourth equation uses the definition of $B=2M-m-\epsilon Q$. Therefore $\bm{x}$ is feasible to~the \ref{prob:inventory_upper_bound} problem~and its objective value is equal to $T\sum_{i\in\mathcal{N}}r_ix_i=TB$, thus the optimal value of~the \ref{prob:inventory_upper_bound} problem~is greater than or equal to $TB$.

Conversely, suppose the optimal value of~the \ref{prob:inventory_upper_bound} problem~is greater than or equal to $TB$. In this case, there exists a feasible solution $\bm{x}$ of~the \ref{prob:inventory_upper_bound} problem~such that $\sum_{i\in\mathcal{N}}r_ix_i\geq B$. Define $w_i=x_i/x_0$ for all $i\in\mathcal{N}$. We have $x_i=w_i/(1+\sum_{j\in\mathcal{N}}w_j)$ for all $i\in\mathcal{N}$, where we use the fact that $x_0+\sum_{i\in\mathcal{N}}x_i=1$. Since $\sum_{i\in\mathcal{N}}r_ix_i\geq B$, we get
\begin{equation*}
    \sum_{i\in\mathcal{N}}r_iw_i\geq B+\sum_{i\in\mathcal{N}}Bw_i,
\end{equation*}
which is equivalent to $\sum_{i\in\mathcal{N}}(r_i-B)w_i\geq B$. We define $\tilde{r}_i=r_i-B$ for all $i\in\mathcal{N}$, then
\begin{equation*}
\tilde{r}_{i}=-\dfrac{1+\epsilon a_i}{1+\epsilon^2 a_i},\ \forall i\in\{1,2,\dots,2m\},\ \tilde{r}_{2m+1}=M.
\end{equation*}
We also have $\sum_{i\in\mathcal{N}}\tilde{r}_iw_i\geq B$. By the second set of constraints of~the \ref{prob:inventory_upper_bound} problem we get \mbox{$w_{2m+1}\leq v_{2m+1}=2$}, therefore
\begin{equation}\label{eq:upper_bound_rw}
-\sum_{i=1}^{2m}\tilde{r}_iw_i\leq \tilde{r}_{2m+1}w_{2m+1}-B\leq 2M-B=m+\epsilon Q.
\end{equation}

Since $\tilde{r}_i<0$ for all $i\in\{1,2,\dots,2m\}$, we have $\tilde{r}_{2m+1}w_{2m+1}\geq B$, thus
\begin{equation*}
w_{2m+1}\geq \dfrac{B}{\tilde{r}_{2m+1}}=2-\dfrac{m+\epsilon Q}{M}\geq 2-\dfrac{2m}{M}=2-\dfrac{\epsilon^2}{5m},
\end{equation*}
where the second inequality is due to $\epsilon Q\leq 1\leq m$. By the third set of constraints in~the \ref{prob:inventory_upper_bound} problem~we get $x_{2m+1}\leq c_{2m+1}/T=c_{2m+1}'$. Since $x_{2m+1}=w_{2m+1}/(1+\sum_{i\in\mathcal{N}}w_i)$, we get
\begin{equation*}
2-\dfrac{\epsilon^2}{5m}\leq w_{2m+1}\leq c_{2m+1}'\Big(1+\sum_{i\in\mathcal{N}}w_i\Big)=\dfrac{2}{3+m+\epsilon^2Q}\Big(1+\sum_{i\in\mathcal{N}}w_i\Big).
\end{equation*}
Then we get
\begin{equation*}
\sum_{i\in\mathcal{N}}w_i\geq \Big(2-\dfrac{\epsilon^2}{5m}\Big)\dfrac{3+m+\epsilon^2Q}{2}-1\geq m+\epsilon^2 Q+2-\dfrac{\epsilon^2}{2}=m+\Big(Q-\dfrac{1}{2}\Big)\epsilon^2+2,
\end{equation*}
where the last inequality is due to $\epsilon^2Q\leq 1\leq m$, thus $3+m+\epsilon^2Q\leq 5m$. By the second set of constraints in the \ref{prob:inventory_upper_bound} we get $w_{2m+1}\leq v_{2m+1}=2$, therefore
\begin{equation}\label{eq:lower_bound_sum_w}
\sum_{i=1}^{2m} w_i\geq m+\Big(Q-\dfrac{1}{2}\Big)\epsilon^2.
\end{equation}

We define $A=\{i\in\{1,\dots,2m\}:w_i>0\}$. We first prove that $|A|=m$. Since for all $i\in\{1,2,\dots,2m\}$ we have $\epsilon a_i\leq \epsilon\sum_{i=1}^{2m}a_i=2\epsilon Q=1/2m$, we get $w_i\leq v_i=1+\epsilon^2 a_i\leq 1+\epsilon/2m$ for all $i\in\{1,2,\dots,2m\}$. Therefore
\begin{equation*}
\Big(1+\dfrac{\epsilon}{2m}\Big)|A|\geq\sum_{i\in A}w_i=\sum_{i=1}^{2m}w_i\geq m+\Big(Q-\dfrac{1}{2}\Big)\epsilon^2\geq m,
\end{equation*}
and we get
\begin{equation*}
|A|\geq \dfrac{m}{1+\epsilon/2m}\geq m\Big(1-\dfrac{\epsilon}{2m}\Big)=m-\dfrac{\epsilon}{2}.
\end{equation*}
Since $\epsilon\leq 1$, the above inequality implies $|A|\geq m$. Furthermore, by the last set of constraints in~the \ref{prob:inventory_upper_bound} problem~we get $w_{i}\geq w_{2m+1}/2$ for all $i\in A$, and we have proven that $w_{2m+1}\geq 2-\epsilon^2/5m$. Therefore for any $i\in A$,
\begin{equation}\label{eq:lb_nonzero_w}
w_i\geq \dfrac{w_{2m+1}}{2}\geq 1-\dfrac{\epsilon^2}{10m}.
\end{equation}
Furthermore, since we have proven that $-\sum_{i=1}^{2m}\tilde{r}_iw_i\leq m+\epsilon Q$, we have
\begin{equation*}
\Big(1-\dfrac{\epsilon^2}{10m}\Big)|A|\leq -\Big(1-\dfrac{\epsilon^2}{10m}\Big)\sum_{i\in A}\tilde{r}_i\leq -\sum_{i\in A}\tilde{r}_iw_i\leq m+\epsilon Q.
\end{equation*}
Here, the first inequality is due to $\tilde{r}_i\leq -1$ for all $i\in\{1,2,\dots,2m\}$, the second inequality is due to $w_i\geq 1-\epsilon^2/10m$ for all $i\in A$ by~\eqref{eq:lb_nonzero_w}. Therefore
\begin{equation*}
|A|\leq\dfrac{m+\epsilon Q}{1-\epsilon^2/10m}\leq (m+\epsilon Q)\Big(1+\dfrac{\epsilon^2}{5m}\Big)=m+\epsilon Q+\dfrac{\epsilon^2}{5}+\dfrac{\epsilon^3Q}{5m}\leq m+3\epsilon Q=m+\dfrac{3}{4m}.
\end{equation*}
Here the second inequality is due to $0<\epsilon^2/10m\leq 1/10$, thus $(1-\epsilon^2/10m)(1+\epsilon^2/5m)\geq 1$. The third inequality is due to $\epsilon\leq 1\leq Q$, thus $\epsilon^3 Q\leq \epsilon Q$ and $\epsilon^2\leq \epsilon Q$. Therefore we get $|A|\leq m$. Combining this with our previous result that $|A|\geq m$ we conclude that $|A|=m$.

Next, we prove that $\sum_{i\in A}a_i=Q$. Since $w_i\leq v_i$ for all $i\in\mathcal{N}$, we get
\begin{equation*}
\sum_{i\in A}v_i\geq \sum_{i\in A}w_i\geq m+\Big(Q-\dfrac{1}{2}\Big)\epsilon^2,
\end{equation*}
Here the last inequality is due to~\eqref{eq:lower_bound_sum_w}. By definition of $v_i$ we have that $v_i/\epsilon^2\in\mathbb{Z}^+$ for all $i\in\mathcal{N}$, thus $\sum_{i\in A}v_i\geq m+Q\epsilon^2$. Since we have proven that $|A|=m$, we get $\sum_{i\in A}v_i=m+\epsilon^2\sum_{i\in A}a_i\geq m+Q\epsilon^2$, therefore $\sum_{i\in A}a_i\geq Q$. Furthermore, by~\eqref{eq:lb_nonzero_w}~we have $w_i\geq 1-\epsilon^2/10m$ for all $i\in A$, therefore for any $i\in A$,
\begin{equation*}
0\leq v_i-w_i\leq \epsilon^2a_i+\dfrac{\epsilon^2}{10m}.
\end{equation*}
Furthermore, we have $0\leq-\tilde{r}_i\leq 1+\epsilon a_i\leq 2$, thus
\begin{equation*}
-\sum_{i\in A}\tilde{r}_i(v_i-w_i)\leq 2\epsilon^2\sum_{i\in A}a_i+\dfrac{\epsilon^2}{5}\leq 4\epsilon^2Q+\dfrac{\epsilon^2}{5}=\dfrac{\epsilon}{m}+\dfrac{\epsilon^2}{5}\leq \dfrac{7\epsilon}{10}.
\end{equation*}
Here the first inequality is due to $|A|=m$ and $-\tilde{r}_i\leq 2$ for all $i\in\{1,2,\dots,2m\}$, the second inequality is due to $\sum_{i\in A}a_i\leq \sum_{i=1}^{2m} a_i=2Q$, and the third inequality is due to $m\geq 2$ and $\epsilon\leq 1$. Since by~\eqref{eq:upper_bound_rw}~we have $-\sum_{i\in A}\tilde{r}_iw_i\leq m+Q\epsilon$, we get
\begin{equation*}
-\sum_{i\in A}\tilde{r}_iv_i=-\sum_{i\in A}\tilde{r}_iw_i-\sum_{i\in A}\tilde{r}_i(v_i-w_i)\leq m+\Big(Q+\dfrac{7}{10}\Big)\epsilon.
\end{equation*}
Since by definition $-\tilde{r}_iv_i=1+\epsilon a_i$ for all \mbox{$i\in\{1,2,\dots,2m\}$}, we have $-\tilde{r}_iv_i/\epsilon\in\mathbb{Z}^+$ for all \mbox{$i\in\{1,2,\dots,2m\}$}. Therefore, we get
$-\sum_{i\in A}\tilde{r}_iv_i\leq m+Q\epsilon$.
Since we have proven that $|A|=m$, we have $-\sum_{i\in A}\tilde{r}_iv_i=m+\epsilon\sum_{i\in A}a_i\leq m+Q\epsilon$, which implies $\sum_{i\in A}a_i\leq Q$. Combining this with our previous result that $\sum_{i\in A}a_i\geq Q$ we conclude that $\sum_{i\in A}a_i=Q$. 

Combining all discussions above we conclude that the optimal value of~the \ref{prob:inventory_upper_bound} problem~is greater than or equal to $TB$ if and only if there exists $A\subseteq\{1,2,\dots,2m\}$ such that $|A|=m$ and $\sum_{i\in A}a_i=Q$. Since the cardinality constrained subset sum problem is NP-hard (see Appendix 3.2 of \citealt{garey1979computers}), the \ref{prob:inventory_upper_bound} problem~is also NP-hard.

\subsection{Proof of Proposition~\ref{prop:upper_bound_fptas}}\label{sec:prop:upper_bound_fptas}

We give an FPTAS for~the \ref{prob:inventory_upper_bound} problem. We first rewrite~the \ref{prob:inventory_upper_bound} problem as
\begin{equation}\label{prob:inventory_upper_bound_2}
\begin{aligned}
&\max_{\bm{x},y}&&{T\sum_{i\in\mathcal{N}}r_ix_i}\\
&\text{s.t.}&&{x_0+\sum_{i\in\mathcal{N}}x_i\leq1}\\
&&&{0\leq x_i\leq v_ix_0,\ \forall i\in\mathcal{N}}\\
&&&{x_i\leq c_i/T,\ \forall i\in\mathcal{N}}\\
&&&{x_i\in\{0\}\cup[\alpha y,y],\ \forall i\in\mathcal{N}.}
\end{aligned}
\end{equation}
Compared to~the original version of the \ref{prob:inventory_upper_bound} problem provided in Section \ref{sec:upper_bound}, the first and the last set of constraints are rewritten in~Problem~\eqref{prob:inventory_upper_bound_2}. Changes in the first constraint in~the \ref{prob:inventory_upper_bound} problem~does not change the optimal value because if $x_0+\sum_{i\in\mathcal{N}}x_i\leq 1$, setting a new $x_0$ as $1-\sum_{i\in\mathcal{N}}x_i$ still verifies all constraints in~the \ref{prob:inventory_upper_bound} problem. In the last set of constraints in~Problem~\eqref{prob:inventory_upper_bound_2}, we use a decision variable $y$ to capture the maximum of $x_i$.

The key idea for the FPTAS is as follows: Suppose $(\bm{x}^*,y^*)$ is an optimal solution to~Problem~\eqref{prob:inventory_upper_bound_2}. We construct two collections of grid points $\texttt{Grid}_0$ and $\texttt{Grid}_y$ for $x_0^*$ and $y^*$ respectively. We define $y_{\min}=\min\{c_{\min}/T,v_{\min}/(1+nv_{\max})\}$. We will show that $y_{\min}$ is a lower bound for $y^*$. Letting $\delta>0$ be the precision of the grid, we further define $L_0=\lceil\log_{1+\delta}(1+nv_{\max})\rceil$ and \mbox{$L_y=\lceil\log_{1+\delta}(1/y_{\min})\rceil$}. The grids $\texttt{Grid}_0$ and $\texttt{Grid}_y$ are defined as
\begin{equation*}
\texttt{Grid}_0=\Big\{\dfrac{1}{1+nv_{\max}}\cdot(1+\delta)^{\ell}:\ell\in\{0,1,\dots,L_0\}\Big\},\ \texttt{Grid}_y=\Big\{y_{\min}\cdot(1+\delta)^{\ell-1}:\ell\in\{0,\dots,L_y\}\Big\}.
\end{equation*}
The two collections of grid points serve as the guesses for the no-purchase probability and the maximum purchase probability.

Consider any pair of $(\bar{x}_0,\bar{y})\in\texttt{Grid}_0\times\texttt{Grid}_y$. Suppose $x_i>0$, then by the last three sets of constraints in Problem~\eqref{prob:inventory_upper_bound_2} we have $\alpha\bar{y}\leq x_i\leq\bar{y}$, $x_i\leq v_i\bar{x}_0$ and $x_i\leq c_i/T$. Therefore $x_i>0$ implies $\alpha\bar{y}\leq \min\{c_i/T,v_i\bar{x}_0\}$ and $\alpha\bar{y}\leq x_i\leq \min\{c_i/T,\bar{y},v_i\bar{x}_0\}$. We define \mbox{$A(\bar{x}_0,\bar{y})=\{i\in\mathcal{N}:\alpha\bar{y}\leq \min\{v_i\bar{x}_0,c_i/T\}\}$}. Then~we have that $x_i=0$ for all \mbox{$i\notin A(\bar{x}_0,\bar{y})$,} and \mbox{$x_i\in\{0\}\cup[\alpha\bar{y},\min\{\bar{y},c_i/T,v_i\bar{x}_0\}]$} for all \mbox{$i\in A(\bar{x}_0,\bar{y})$}. For any $i\in A(\bar{x}_0,\bar{y})$, we use a set of grid points to discretize the range $[\alpha\bar{y},\min\{\bar{y},c_i/T,v_i\bar{x}_0\}]$. We define \mbox{$L_i=\lceil\log_{1+\delta}(\min\{c_i/T,\bar{y},v_i\bar{x}_0\}/\alpha\bar{y})\rceil$,} and the grid points are defined as
\begin{equation*}
\bar{x}_i^\ell(\bar{x}_0,\bar{y})=\alpha\bar{y}\cdot(1+\delta)^\ell,\ \forall\ell\in\{0,1,\dots,L_i\}.
\end{equation*}
Finally, for each pair of $(\bar{x}_0,\bar{y})$, we generate a candidate solution by approximately solving the following multiple-choice knapsack problem

\begin{equation}\label{prob:knapsack_fptas}
\begin{aligned}
&\max_{\bm{u}}&&\sum_{i\in A(\bar{x}_0,\bar{y})}\sum_{\ell=0}^{L_i}r_i\bar{x}_i^{\ell}(\bar{x}_0,\bar{y})u_i^\ell\\
&\text{s.t.}&&\sum_{i\in A(\bar{x}_0,\bar{y})}\sum_{\ell=0}^{L_i}\bar{x}_i^{\ell}(\bar{x}_0,\bar{y})u_i^\ell\leq 1-\bar{x}_0,\\
&&&\sum_{\ell=0}^{L_i} u_i^\ell\leq 1,\ \forall i\in A(\bar{x}_0,\bar{y}),\\
&&&u_i^\ell\in\{0,1\},\ \forall i\in A(\bar{x}_0,\bar{y}),\ \ell\in\{0,1,\dots,L_i\}.
\end{aligned}
\end{equation}
In Problem~\eqref{prob:knapsack_fptas}, we optimize $\sum_{i\in\mathcal{N}}r_ix_i$ when each $x_i$ is restricted to $\{0\}\cup\{\bar{x}_i^\ell(\bar{x}_0,\bar{y}):\ell\in\{0,\dots,L_i\}\}$. We use the variable $u_i^\ell$ to capture whether $x_i$ is equal to $\bar{x}_i^\ell(\bar{x}_0,\bar{y})$, where $u_i^\ell=1$ if $x_i=\bar{x}_i^\ell(\bar{x}_0,\bar{y})$, and $u_i^\ell=0$ if $x_i\neq \bar{x}_i^\ell(\bar{x}_0,\bar{y})$. If $x_i=0$ then $u_i^\ell=0$ for all $\ell\in\{0,\dots,L_i\}$. Then we have $x_i=\sum_{\ell=0}^{L_i}\bar{x}_i^{\ell}(\bar{x}_0,\bar{y})u_i^\ell$ for all $i\in A(\bar{x}_0,\bar{y})$. Thus the objective value of Problem~\eqref{prob:knapsack_fptas}~is equal to $\sum_{i\in\mathcal{N}}r_ix_i$, and the first constraint in Problem~\eqref{prob:knapsack_fptas}~is equivalent to $\bar{x}_0+\sum_{i\in\mathcal{N}}x_i\leq 1$, which is the first constraint in Problem~\eqref{prob:inventory_upper_bound_2}. Since for each $i\in A(\bar{x}_0,\bar{y})$, $x_i$ can be equal to at most one of $\bar{x}_i^\ell(\bar{x}_0,\bar{y})$, we enforce $\sum_{\ell=0}^{L_i}u_i^\ell\leq 1$ for all $i\in A(\bar{x}_0,\bar{y})$. The complete algorithm is provided in Algorithm~\ref{alg:fptas_inventory_upper_bound}.

\begin{algorithm}[t]
\SingleSpacedXI
\caption{FPTAS for~the \ref{prob:inventory_upper_bound} problem}
\label{alg:fptas_inventory_upper_bound}
\begin{algorithmic}
\For{$(\bar{x}_0,\bar{y})\in \texttt{Grid}_0\times \texttt{Grid}_y$}
\State Obtain an $1/(1+\delta)$-approximation $\bm{u}(\bar{x}_0,\bar{y})$ of~Problem~\eqref{prob:knapsack_fptas}.
\State Set $\bm{x}(\bar{x}_0,\bar{y})$ such that 
\begin{equation}\label{eq:candidate_fptas}
x_i(\bar{x}_0,\bar{y})=\begin{cases}\sum_{\ell=0}^{L_i}\bar{x}_i^{\ell}(\bar{x}_0,\bar{y})u_i^\ell(\bar{x}_0,\bar{y}),\ &\text{if}\ i\in A(\bar{x}_0,\bar{y}),\\
0,\ &\text{if}\ i\notin A(\bar{x}_0,\bar{y}).
\end{cases}
\end{equation}
\State Set $R(\bar{x}_0,\bar{y})=\sum_{i\in\mathcal{N}}r_ix_i(\bar{x}_0,\bar{y})$.
\EndFor
\State Set $(\bar{x}_0^*,\bar{y}^*)=\argmax_{\bar{x}_0,\bar{y}}R(\bar{x}_0,\bar{y})$, and return $\bm{x}(\bar{x}_0^*,\bar{y}^*)$ as defined in \eqref{eq:candidate_fptas}.
\end{algorithmic}
\end{algorithm}

Next, we prove that for any $\epsilon\in(0,1/2)$, by setting $\delta=(1-\epsilon)^{-1/4}-1$, Algorithm~\ref{alg:fptas_inventory_upper_bound}~returns a $(1-\epsilon)$-approximation to~the \ref{prob:inventory_upper_bound} problem.

Let $(\bm{x}^*,y^*)$ be an optimal solution to~Problem~\eqref{prob:inventory_upper_bound_2}. We first prove that $y^*\geq y_{\min}$. We claim that there exists $i\in\mathcal{N}$ such that either $x_i^*=c_i/T$ or $x_i^*=v_ix_0^*$. Suppose $x_i^*<c_i/T$ and $x_i^*<v_ix_0^*$ for all $i\in\mathcal{N}$. Since $x_0^*+\sum_{i\in\mathcal{N}}x_i^*\leq 1$, we have $x_i^*<v_ix_0^*\leq v_i(1-\sum_{j\in\mathcal{N}}x_j^*)$. Then there exists $\delta'>0$ such that $(1+\delta')x_i^*<c_i/T$ and \mbox{$(1+\delta')x_i^*<v_i(1-(1+\delta')\sum_{i\in\mathcal{N}}x_i^*)$} for all $i\in\mathcal{N}$. We define $\hat{\bm{x}}$ as $\hat{x}_i=(1+\delta')x_i^*$ for all $i\in\mathcal{N}$, $\hat{x}_0=1-(1+\delta')\sum_{i\in\mathcal{N}}x_i^*$, and $\hat{y}=(1+\delta')y^*$. Since $\hat{x}_i$ are scaled up by the same ratio from $x_i^*$ for all $i\in\mathcal{N}$, we have that $\hat{\bm{x}}$ satisfies the market share balancing constraint, thus $(\hat{\bm{x}},\hat{y})$ is feasible to~Problem~\eqref{prob:inventory_upper_bound_2}. It is easy to verify that $\sum_{i\in\mathcal{N}}r_ix_i^*>0$, thus $\sum_{i\in\mathcal{N}}r_i\hat{x}_i=(1+\delta')\sum_{i\in\mathcal{N}}r_ix_i^*>\sum_{i\in\mathcal{N}}r_ix_i^*$, which contradicts the assumption that $\bm{x}^*$ is optimal to~Problem~\eqref{prob:inventory_upper_bound_2}. Therefore there exists $i\in\mathcal{N}$ such that either $x^*_i=c_i/T$ or $x^*_i=v_ix_0^*$. If $x_{k}^*=c_{k}/T$ for some $k\in\mathcal{N}$, then $y^*\geq\max_{j\in\mathcal{N}}x_j^*\geq x_{k}^*\geq c_{\min}/T\geq y_{\min}$. Furthermore, it is easy to verify that $x_0^*\geq 1/(1+nv_{\max})$. If $x_{k}^*=v_{k}x_0^*$ for some $k\in\mathcal{N}$, then \mbox{$y^*\geq \max_{j\in\mathcal{N}}x_j^*\geq x_{i_0}^*\geq v_{\min}/(1+nv_{\max})\geq y_{\min}$.} Thus in both cases we get $y^*\geq y_{\min}$.

We set $\hat{x}_0=\max\{\bar{x}_0\in\texttt{Grid}_0:\bar{x}_0\leq x_0^*\}$ and $\hat{y}=\max\{\bar{y}\in\texttt{Grid}_y:(1+\delta)\bar{y}\leq y^*\}$. Then we have $x_0^*/(1+\delta)\leq\hat{x}_0\leq x_0^*$ and $y^*/(1+\delta)^2\leq \hat{y}\leq y^*/(1+\delta)$. We further set $\tilde{x}_i=x_i^*\hat{y}/y^*$ for all $i\in\mathcal{N}$, then we get $x_i^*/(1+\delta)^2\leq\tilde{x}_i\leq x_i^*/(1+\delta)$ for all $i\in\mathcal{N}$. Since $x_i^*\in\{0\}\cup[\alpha y^*,y^*]$ for all $i\in\mathcal{N}$, we get $\tilde{x}_i\in\{0\}\cup[\alpha\hat{y},\hat{y}]$. Since $\tilde{x}_i\leq x_i^*/(1+\delta)$ for all $i\in\mathcal{N}$ and $\hat{x}_0\geq x_0^*/(1+\delta)$, we get $$\tilde{x}_i\leq \dfrac{x_i^*}{1+\delta}\leq \dfrac{v_ix_0^*}{1+\delta}\leq v_i\hat{x}_0.$$ 
Furthermore, we also have $\tilde{x}_i\leq x_i^*\leq c_i/T$ for all $i\in\mathcal{N}$.

Based on $\bm{x}^*$, we construct a feasible solution of Problem~\eqref{prob:knapsack_fptas} with $\bar{x}_0=\hat{x}$ and $\bar{y}=\hat{y}$. For any $i\in\mathcal{N}$, if $\tilde{x}_i>0$ then we set $\hat{x}_i=\max\{\bar{x}_i^\ell(\hat{x}_0,\hat{y}):\bar{x}_i^\ell(\hat{x}_0,\hat{y})\leq \tilde{x}_i,\ell\in\{0,\dots,L_i\}\}$, if $\tilde{x}_i=0$ then we set $\hat{x}_i=0$. For all $i\in\mathcal{N}$ and $\ell\in\{0,1,\dots,L_i\}$, we further define $\hat{\bm{u}}$ as $\hat{u}_i^\ell=1$ if \mbox{$\hat{x}_i=\bar{x}_i^\ell(\bar{x}_0,\bar{y})$,} otherwise $\hat{u}_i^\ell=0$. Since we have proven that $\tilde{x}_i\leq v_i\hat{x}_0$ and $\tilde{x}_i\leq c_i/T$, we have $\tilde{x}_i\in\{0\}\cup[\alpha\hat{y},\min\{\hat{y},v_i\hat{x}_0,c_i/T\}]$ for all $i\in\mathcal{N}$. We further have \mbox{$\hat{x}_i\leq x_i^*$} for all $i\in\mathcal{N}$ and $\hat{x}_0\leq x_0^*$, thus $\hat{x}_0+\sum_{i\in\mathcal{N}}\hat{x}_i\leq x_0^*+\sum_{i\in\mathcal{N}}x_i^*\leq 1$. Therefore $\hat{\bm{u}}$ is feasible to~Problem~\eqref{prob:knapsack_fptas} with $\bar{x}_0=\hat{x}_0$ and $\bar{y}=\hat{y}$. Since for every pair of $(\bar{x}_0,\bar{y})$ we obtain a $1/(1+\delta)$-approximation to~Problem~\eqref{prob:knapsack_fptas}, we have that the objective of the output $\bm{x}$ of Algorithm~\ref{alg:fptas_inventory_upper_bound}~is at least \mbox{$\sum_{i\in\mathcal{N}}r_i\hat{x}_i/(1+\delta)$.} By definition of $\hat{\bm{x}}$ we have \mbox{$\hat{x}_i\geq \tilde{x}_i/(1+\delta)\geq x_i^*/(1+\delta)^3$} for all $i\in\mathcal{N}$. Therefore we have
\begin{equation*}
\sum_{i\in\mathcal{N}}r_ix_i\geq \dfrac{1}{1+\delta}\sum_{i\in\mathcal{N}}r_i\hat{x}_i\geq \dfrac{1}{(1+\delta)^4}\sum_{i\in\mathcal{N}}r_ix_i^*.
\end{equation*}
Since by definition $\delta=(1-\epsilon)^{-1/4}-1$, we have that $1/(1+\delta)^4=1-\epsilon$, therefore we have that
\begin{equation*}
\sum_{i\in\mathcal{N}}r_ix_i\geq (1-\epsilon) \sum_{i\in\mathcal{N}}r_ix_i^*.
\end{equation*}
Thus Algorithm~\ref{alg:fptas_inventory_upper_bound}~returns a $(1-\epsilon)$-approximation algorithm for~the \ref{prob:inventory_upper_bound} problem.

Finally we analyze the runtime of Algorithm~\ref{alg:fptas_inventory_upper_bound}. \cite{bansal2004improved} design an FPTAS for the multiple-choice knapsack problem. Given a multiple-choice knapsack problem with $n$ items and $m$ multiple-choice classes, their algorithm returns a $(1-\epsilon)$-approximation to the knapsack problem in $O(mn/\epsilon)$ time. For any $\bar{x}_0\in\texttt{Grid}_0$ and $\bar{y}\in\texttt{Grid}_y$, we use the FPTAS in~\cite{bansal2004improved}~to obtain a $1/(1+\delta)$-approximation of~Problem~\eqref{prob:knapsack_fptas}, which is a multiple-choice knapsack problem. In Problem~\eqref{prob:knapsack_fptas}, the number of items is $\sum_{i\in A(\bar{x}_0,\bar{y})} L_i$, and the number of multiple-choice class is $|A(\bar{x}_0,\bar{y})|\leq n$. Since $L_i\leq \log_{1+\delta}(\bar{y}/\alpha\bar{y})=\log_{1+\delta}(1/\alpha)=O(\log(1/\alpha)/\epsilon)$ for all $i\in A(\bar{x}_0,\bar{y})$, the runtime of the FPTAS for~Problem~\eqref{prob:knapsack_fptas}~is
\begin{equation*}
O\Big(\Big(1-\dfrac{1}{1+\delta}\Big)^{  -1}n\sum_{i\in A(\bar{x}_0,\bar{y})}L_i\Big)=O\Big(\dfrac{n^2}{\epsilon^2}\log\Big(\dfrac{1}{\alpha}\Big)\Big).
\end{equation*}
We further have
\begin{equation*}
|\texttt{Grid}_0|=L_0+1=O\Big(\dfrac{\log(nv_{\max}+1)}{\epsilon}\Big),\ |\texttt{Grid}_y|=L_y+1=O\Big(\dfrac{\log(1/y_{\min})}{\epsilon}\Big).
\end{equation*}
By definition of $y_{\min}$ we further have
\begin{equation*}
\dfrac{1}{y_{\min}}=\max\Big\{\dfrac{T}{c_{\min}},\dfrac{nv_{\max}+1}{v_{\min}}\Big\}\leq T+\dfrac{nv_{\max}+1}{v_{\min}}.
\end{equation*}
Here the inequality is due to $c_{\min}\in\mathbb{Z}^+$. Therefore we have
\begin{equation*}
|\texttt{Grid}_0\times \texttt{Grid}_y|=O(L_0L_y)=O\Big(\dfrac{1}{\epsilon^2}\log(nv_{\max}+1)\log\Big(T+\dfrac{nv_{\max}+1}{v_{\min}}\Big)\Big).
\end{equation*}
The runtime of identifying $(\bar{x}_0^*,\bar{y}^*)$ is bounded by $O(L_0L_y)$, which is dominated by the runtime of the for-loop in Algorithm \ref{alg:fptas_inventory_upper_bound}. Therefore the runtime of Algorithm~\ref{alg:fptas_inventory_upper_bound} is bounded by
\begin{equation*}
O\Big(\dfrac{n^2}{\epsilon^4}\log\Big(\dfrac{1}{\alpha}\Big)\log(nv_{\max}+1)\log\Big(T+\dfrac{nv_{\max}+1}{v_{\min}}\Big)\Big).
\end{equation*}

\subsection{Proof of Lemma~\ref{lemma:binary_search}}\label{sec:lemma:binary_search}

Let $\tilde{\bm{x}}$ be the approximate solution of the \ref{prob:inventory_upper_bound} returned by Algorithm~\ref{alg:fptas_inventory_upper_bound}. Then there exists $\bar{y}\in\texttt{Grid}_y$ such that $\tilde{x}_i\in\{0\}\cup[\alpha\bar{y},\bar{y}]$ for all $i\in\mathcal{N}$. By definition of $\texttt{Grid}_y$ we have \mbox{$\bar{y}\geq \min \texttt{Grid}_y=y_{\min}/(1+\delta)$.} By definition $\epsilon_1=\epsilon/2\leq 1/2$, thus we have \mbox{$\delta=(1-\epsilon_1)^{-1/4}-1\leq (1-1/2)^{-1/4}-1\leq 1/2$.} Therefore for any $\tilde{x}_i>0$, we get $$\tilde{x}_i\geq \dfrac{\alpha y_{\min}}{1+\delta}\geq \dfrac{\alpha y_{\min}}{2}=\dfrac{\alpha}{2}\min\Big\{\dfrac{c_{\min}}{T},\dfrac{v_{\min}}{1+nv_{\max}}\Big\}\geq \dfrac{\alpha v_{\min}}{2T(1+nv_{\max})}.$$
Here the last inequality is due to $c_{\min}\in\mathbb{Z}^+$. Lemma E.1 in \cite{bai2022coordinated} states that if $Tp\leq c$ for some $c\in\mathbb{Z}^+$, then $\mathbb{E}[(Z(p,T)-c)^+]\leq Tp/2$, where $Z(p,T)$ is a binomial random variable with $T$ trials and success probability being $p$. This result implies that if $Tp\geq c$, then $\mathbb{E}[\max\{Z(p,T),c\}]=Tp-\mathbb{E}[(Z(p,T)-c)^+]\geq Tp/2$. Therefore if $Tp\leq c$ then $G(p,c)\geq Tp/2$. Letting $i_{\min}=\arg\min_{i\in\mathcal{N}}\{G(\tilde{x}_i,c_i):\tilde{x}_i>0\}$, since $\tilde{x}_i\leq c_i/T$ for all $i\in\mathcal{N}$, by Lemma E.1 in \cite{bai2022coordinated}~we get
\begin{equation*}
G_{\min}=G(\tilde{x}_{i_{\min}},c_{i_{\min}})\geq \dfrac{T\tilde{x}_{i_{\min}}}{2}\geq \dfrac{\alpha v_{\min}}{4(1+nv_{\max})}.
\end{equation*}

We claim that for any $c\in\mathbb{Z}^+$, $G(p,c)$ is monotonically increasing and $T$-Lipschitz in $p$. Since for all $c>T$, we have $G(p,c)=Tp=G(p,T)$, it suffices to consider the cases of $c\leq T$. Consider any $0<p_1,p_2<1$, assume without loss of generality that $p_1>p_2$. Suppose $Y_1$ and $Y_2$ are binomial random variables with $T$ trials and success probability being $p_1$ and $p_2$ respectively. For any $i\in\{1,2\}$, we have
\begin{equation*}
G(p_i,c)=\mathbb{E}[\min\{Y_i,c\}]=\sum_{k=1}^c \mathbb{P}(\min\{Y_i,c\}\geq k)=\sum_{k=1}^c \mathbb{P}(Y_i\geq k).
\end{equation*}
Similarly we also have $\mathbb{E}[Y_i]=\sum_{k=1}^T\mathbb{P}(Y_i\geq k)$ for any $i\in\{1,2\}$. Furthermore, by a coupling argument it is easy to verify that for any $k\in\mathbb{Z}^+$, $\mathbb{P}(Y_1\geq k)\geq \mathbb{P}(Y_2\geq k)$. Therefore we have
\begin{align*}
0\leq&G(p_1,c)-G(p_2,c)=\sum_{k=1}^c (\mathbb{P}(Y_1\geq k)-\mathbb{P}(Y_2\geq k))\\
\leq&\sum_{k=1}^T (\mathbb{P}(Y_1\geq k)-\mathbb{P}(Y_2\geq k))=\sum_{k=1}^T\mathbb{P}(Y_1\geq k)-\sum_{k=1}^T\mathbb{P}(Y_2\geq k)=\mathbb{E}[Y_1]-\mathbb{E}[Y_2]=T(p_1-p_2).
\end{align*}
Therefore $G(p,c)$ is monotonically increasing and $T$-Lipschitz in $p$.

If $G(x_i,c_i)>G_{\min}/\alpha$, since for any $c\in\mathbb{Z}^+$ $G(x,c)$ is monotonically increasing and continuous in $x$, there exists $0<y_i<x_i$ such that $G(y_i,c_i)=G_{\min}/\alpha$. We define $M(\epsilon_2)$ as
\begin{equation*}
M(\epsilon_2)=\Big\lceil\log_2 T+\log_2\Big(\dfrac{1}{\epsilon_2}\Big)+\log_2\Big(\dfrac{1+nv_{\max}}{\alpha v_{\min}}\Big)+2\Big\rceil.
\end{equation*}
After $M(\epsilon_2)$ iterations of bisection search, we obtain a interval containing $y_i$ whose length is upper bounded by
\begin{equation*}
2^{-M(\epsilon_2)}\leq \dfrac{\epsilon_2}{T}\cdot\dfrac{\alpha v_{\min}}{4(1+nv_{\max})}\leq \dfrac{\epsilon_2}{T}G_{\min}.
\end{equation*}
If we set $\hat{x}_i$ as left endpoint of the aforementioned interval, we have $0\leq y_i-\hat{x}_i\leq \epsilon_2G_{\min}/T$. Thus we get
\begin{equation*}
0\leq \dfrac{G_{\min}}{\alpha}-G(\hat{x}_i,c_i)=G(y_i,c_i)-G(\hat{x}_i,c_i)\leq T(y_i-\hat{x}_i)\leq \epsilon_2 G_{\min},
\end{equation*}
which implies $G_{\min}/\alpha\geq G(\hat{x}_i,c_i)\geq (1/\alpha-\epsilon_2)G_{\min}\geq (1-\epsilon_2)G_{\min}/\alpha$. Therefore the bisection search is able to return a $\hat{x}_i$ that satisfies $(1-\epsilon_2)G_{\min}/\alpha\leq G(\hat{x}_i,c_i)\leq G_{\min}/\alpha$ with at most $M(\epsilon_2)$ iterations.

\section{Algorithm for $\alpha=1$}\label{sec:policy_alpha=1}

In Section~\ref{sec:dynamic_inventory}, we have established Algorithm \ref{alg:policy_1}, which returns an asymptotically optimal policy for~\ref{prob:inventory}~in the case of $\alpha<1$. However, the algorithm does not work in the case of $\alpha=1$. This is because when $\alpha=1$, we would need the purchase probabilities $\hat{\bm{x}}$ to satisfy $G(\hat{x}_i,c_i)=G_{\min}$ exactly for all $i\in\mathcal{N}$ such that $\tilde{x}_i>0$, where $\tilde{\bm{x}}$ is an approximate solution of the \ref{prob:inventory_upper_bound} problem. Since there is no closed-form expression for the inverse function of $G(x,c)$ in terms of $x$, it is computationally challenging to obtain the precise value of $\hat{x}_i$. In this section, we first provide an algorithm that solves the \ref{prob:inventory_upper_bound} problem exactly in polynomial time when $\alpha=1$. Based on an optimal solution of the \ref{prob:inventory_upper_bound} problem, we introduce an algorithm that returns an asymptotically optimal policy when $\alpha=1$.

\subsection{Algorithm for the \ref{prob:inventory_upper_bound} Problem when $\alpha=1$}

Although we have proven in Proposition~\ref{thm:inventory_upper_bound}~that~the \ref{prob:inventory_upper_bound} problem~is NP-hard, in the special case of $\alpha=1$ we are able to establish an algorithm that solves \ref{prob:inventory_upper_bound} in polynomial time.

We use the following theorem to reveal the structure of the optimal solution to the~\ref{prob:inventory_upper_bound}~problem when $\alpha=1$.

\begin{theorem}\label{thm:inventory_alpha=1}
If $\alpha=1$, then for some $\bar{x}\in\{c_i/T:i\in\mathcal{N}\}\cup\{v_i/(1+\ell v_i):i\in\mathcal{N},\ell\in\{1,2,\dots,n\}\}$ such that $k\bar{x}<1$, there exists an optimal solution $\bm{x}^*$ of the \ref{prob:inventory_upper_bound} problem such that $x^*_i\in\{0,\bar{x}\}$ for all $i\in\mathcal{N}$. Furthermore, there exists $k\in\{1,2,\dots,n\}$ such that $x_i^*=\bar{x}$ if product $i$ are among the top $k$ products in $S(\bar{x},k)$ with the highest revenue, where \mbox{$S(\bar{x},k)=\{i\in\mathcal{N}:c_i\geq T\bar{x},v_i\geq \bar{x}/(1-k\bar{x})\}$}.
\end{theorem}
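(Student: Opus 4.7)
My plan is to exploit the fact that $\alpha=1$ collapses the market share balancing constraint $x_i\in\{0\}\cup[\alpha\cdot\max_j x_j,\infty)$ into $x_i\in\{0,\bar{x}\}$ for some common value $\bar{x}=\max_j x_j$. Consequently every feasible solution of the \ref{prob:inventory_upper_bound} problem is completely parameterized by a subset $S^*\subseteq \mathcal{N}$ of offered products and the common probability $\bar{x}$; setting $k=|S^*|$, the first constraint forces $x_0=1-k\bar{x}$, so $k\bar{x}<1$. For each $i\in S^*$, the second and third sets of constraints become $\bar{x}\leq c_i/T$ and $\bar{x}\leq v_i(1-k\bar{x})$. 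Rearranging the latter as $v_i\geq \bar{x}/(1-k\bar{x})$ shows that feasibility is equivalent to $S^*\subseteq S(\bar{x},k)$.

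With this parameterization, the objective reduces to $T\bar{x}\sum_{i\in S^*}r_i$, which for fixed $(\bar{x},k)$ is a separable linear function of $S^*$ with no coupling between products. Hence the optimal $S^*$ is simply the set of $k$ largest-revenue products inside $S(\bar{x},k)$, by a standard exchange argument: swapping any included product for a feasible one of larger revenue can only strictly improve the objective. This already proves the ``furthermore'' portion of the theorem once an optimal pair $(\bar{x},k)$ is identified.

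It then remains to show that the optimal $\bar{x}$ lies in the claimed discrete set. I would fix $(k,S^*)$ at their optimal values and treat $\bar{x}$ as the only decision variable; the problem becomes to maximize $T\bar{x}\sum_{i\in S^*}r_i$ subject to $\bar{x}\leq c_i/T$ and, after rearranging $\bar{x}\leq v_i(1-k\bar{x})$, also $\bar{x}\leq v_i/(1+k v_i)$ for every $i\in S^*$. Because the objective is strictly increasing in $\bar{x}$ (as $r_i>0$ and $k\geq 1$), the optimal $\bar{x}$ must equal $\min_{i\in S^*}\{c_i/T,\,v_i/(1+kv_i)\}$, which automatically lies in $\{c_i/T:i\in\mathcal{N}\}\cup\{v_i/(1+\ell v_i):i\in\mathcal{N},\,\ell\in\{1,\dots,n\}\}$.

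The main subtlety, and essentially the only nontrivial point in the argument, is that enlarging $\bar{x}$ can shrink the set $S(\bar{x},k)$ (the membership condition $v_i\geq \bar{x}/(1-k\bar{x})$ tightens as $\bar{x}$ grows), so a naive ``push $\bar{x}$ up'' argument risks rendering the prescribed $S^*$ infeasible. I would sidestep this by first freezing the optimal $(k,S^*)$ and then arguing entirely within the $\bar{x}$-only subproblem, where the constraints $\bar{x}\leq c_i/T$ and $\bar{x}\leq v_i/(1+kv_i)$ for $i\in S^*$ are exactly the conditions that keep the frozen $S^*$ inside $S(\bar{x},k)$. The discrete candidate set for $\bar{x}$ is then of size $O(n^2)$, and combined with the $O(n)$ choices of $k$ and the greedy selection of $S^*$, this yields both the structural characterization and a polynomial-time algorithm for the $\alpha=1$ case.
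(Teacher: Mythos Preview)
Your proposal is correct and follows essentially the same approach as the paper: use $\alpha=1$ to force $x_i\in\{0,\bar x\}$, derive $S^*\subseteq S(\bar x,k)$ from the second and third constraint sets, select $S^*$ greedily as the top-$k$ revenues in $S(\bar x,k)$, and pin $\bar x$ to the discrete set by pushing it to a tight constraint. The only cosmetic difference is ordering: the paper first runs a scaling contradiction to show some constraint $x_i^*=c_i/T$ or $x_i^*=v_ix_0^*$ must bind (your ``push $\bar x$ up'' step) and then does the greedy selection, whereas you do the greedy step first and the tightness argument second; the content is the same.
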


\begin{proof}
We first prove that any optimal solution $\bm{x}^*$ to~the \ref{prob:inventory_upper_bound} problem~satisfies either $x_i^*=c_i/T$ for some $i\in\mathcal{N}$, or $x_i^*=v_ix_0^*$ for some $i\in\mathcal{N}$. Suppose $x_i^*<c_i/T$ and $x_i^*<v_ix_0^*$ for all $i\in\mathcal{N}$. By the first constraint of the \ref{prob:inventory_upper_bound} problem we have $x_0^*=1-\sum_{i\in\mathcal{N}}x_i^*$, thus we get $x_i^*<v_i(1-\sum_{j\in\mathcal{N}}x_j^*)$ for all $i\in\mathcal{N}$. Then there exists $\epsilon>0$ such that $(1+\epsilon)x_i^*\leq c_i/T$ and $(1+\epsilon)x_i^*\leq v_i(1-(1+\epsilon)\sum_{i\in\mathcal{N}}x_i^*)$ for all $i\in\mathcal{N}$. We define $\hat{x}_i=(1+\epsilon)x_i^*$ and $\hat{x}_0=1-(1+\epsilon)\sum_{i\in\mathcal{N}}x_i^*$, then $\hat{\bm{x}}$ is feasible to~the \ref{prob:inventory_upper_bound} problem. Furthermore, it is easy to verify that $\sum_{i\in\mathcal{N}}r_ix_i^*>0$, thus
\begin{equation*}
\sum_{i\in\mathcal{N}}r_i\hat{x}_i=(1+\epsilon)\sum_{i\in\mathcal{N}}r_ix_i^*>\sum_{i\in\mathcal{N}}r_ix_i^*,
\end{equation*}
which contradicts the assumption that $\bm{x}^*$ is an optimal solution to the \ref{prob:inventory_upper_bound} problem. Therefore either $x_i^*=c_i/T$ for some $i\in\mathcal{N}$, or $x_i^*=v_ix_0^*$ for some $i\in\mathcal{N}$.

Since $\alpha=1$, by the last set of constraints in~the \ref{prob:inventory_upper_bound} problem~there exists $\bar{x}>0$ such that $x_i^*\in\{0,\bar{x}\}$ for all $i\in\mathcal{N}$. If $x_i^*=c_i/T$ for some $i\in\mathcal{N}$, then $\bar{x}\in\{c_i/T:i\in\mathcal{N}\}$. If $x_i^*=v_ix_0^*$ for some $i\in\mathcal{N}$, then $\bar{x}=v_ix_0^*$. We set $k=|\{i\in\mathcal{N}:x_i^*>0\}|$, then by the first constraint we get $k\bar{x}+x_0^*=1$, which implies $\bar{x}=v_i/(1+kv_i)$ for some $i\in\mathcal{N}$. Therefore $\bar{x}\in \{c_i/T:i\in\mathcal{N}\}\cup\{v_i/(1+kv_i):i\in\mathcal{N},\,k\in\{1,2,\dots,n\}\}$.

We set $A=\{i\in\mathcal{N}:x_i^*>0\}$, then by the second and third sets of constraints of~the \ref{prob:inventory_upper_bound} problem~we get $\bar{x}\leq c_i/T$ and $\bar{x}/x_0^*\leq v_i$ for all $i\in A$. Recall that \mbox{$S(\bar{x},k)=\{i\in\mathcal{N}:c_i\geq T\bar{x},v_i\geq \bar{x}/(1-k\bar{x})\}$,} thus $A\subseteq S(\bar{x},k)$. Since by definition $|A|=k$, we get $|S(\bar{x},k)|\geq k$. Furthermore, since the optimal value is $\bar{x}\sum_{i\in A}r_i$, $A$ should consist of the top $k$ products in $S(\bar{x},k)$ with the highest revenue. Therefore $\bm{x}^*$ satisfies the structure identified by the algorithm, thus the algorithm returns an optimal solution to~the \ref{prob:inventory_upper_bound} problem~in polynomial time.
\end{proof}

Based on Theorem~\ref{thm:inventory_alpha=1}, we establish a polynomial time algorithm for the \ref{prob:inventory_upper_bound}~problem when $\alpha=1$. We enumerate all possible values of \mbox{$\bar{x}\in\{c_i/T:i\in\mathcal{N}\}\cup\{v_i/(1+\ell v_i):i\in\mathcal{N},\ell\in\{1,2,\dots,n\}\}$} and $k\in\{1,2,\dots,n\}$. For each combination of $(\bar{x},k)$, we first check whether $|S(\bar{x},k)|$ is greater than or equal to $k$. If $|S(\bar{x},k)|\geq k$, we construct a candidate solution $\bm{x}(\bar{x},k)$ by setting $x_i(\bar{x},k)=\bar{x}$ if product $i$ are among the top $k$ products in $S(\bar{x},k)$ with the highest revenue, and otherwise setting $x_i(\bar{x},k)=0$. Finally we return the candidate solution with the maximum expected revenue. The complete algorithm is provided in Algorithm~\ref{alg:inventory_alpha=1}.

\begin{algorithm}[t]
\SingleSpacedXI
\caption{Algorithm for~the \ref{prob:inventory_upper_bound} problem~when $\alpha=1$}
\label{alg:inventory_alpha=1}
\begin{algorithmic}
\For{$\bar{x}\in\{c_i/T:i\in\mathcal{N}\}\cup\{v_i/(1+\ell v_i):i\in\mathcal{N},\,\ell\in\{1,2,\dots,n\}\}$}
\For{$k\in\{1,2,\dots,n\}$}
\If{$k\bar{x}\geq 1$}
\State Set $R(\bar{x},k)=0$.
\Else
\State Set $x_0(\bar{x},k)=1-k\bar{x}$.
\State Set $S(\bar{x},k)=\{i\in\mathcal{N}:c_i\geq T\bar{x},\,v_i\geq \bar{x}/x_0(\bar{x},k)\}$.
\If{$|S(\bar{x},k)|<k$}
\State Set $R(\bar{x},k)=0$.
\Else
\State Set $(i_1,i_2,\dots,i_k)$ as the top $k$ products with the highest revenue in $S(\bar{x},k)$.
\State Define $\bm{x}(\bar{x},k)$ as $x_{i_\ell}(\bar{x},k)=\bar{x}$ for all $\ell\in\{1,2,\dots,k\}$ and $x_i=0$ otherwise.
\State Set $R(\bar{x},k)=\sum_{i\in\mathcal{N}}r_ix_i(\bar{x},k)$.
\EndIf
\EndIf
\EndFor
\EndFor
\State Set $(\bar{x}^*,k^*)=\argmax_{\bar{x},k}R(\bar{x},k)$, and return $\bm{x}(\bar{x}^*,k^*)$.
\end{algorithmic}
\end{algorithm}

\subsection{Algorithm for \ref{prob:inventory} when $\alpha=1$}

We construct an asymptotically optimal policy based on an optimal solution $\bm{x}^*$ of~the~\ref{prob:inventory_upper_bound}~problem. Similar to the case of $\alpha<1$, if we maintain a purchase probability $x_i^*$ for product $i$ as long as the product has remaining inventory, we may not get a policy that satisfies the market share balancing constraint. When $\alpha<1$, we bypass this challenge by searching for an appropriate $\hat{x}_i$ via a bisection search. However, the same approach cannot be applied in the case of $\alpha=1$. When $\alpha=1$, the market share balancing constraint requires all nonzero expected sales to be exactly equal to each other, which is hard to satisfy using the same approach as the bisection search is unable to return $y_i$ that satisfies $G(y_i,c_i)=G_{\min}$ exactly in finitely many iterations.

For the case of $\alpha=1$, we follow an alternative approach. Instead of reducing the purchase probabilities, we construct a feasible policy by proactively capping the realized sales. Specifically, given an optimal solution $\bm{x}^*$ of~the \ref{prob:inventory_upper_bound} problem, we define $\bar{c}=\min\{c_i:x^*_i>0\}$. Then in each time period, the seller randomly draws an assortment such that the purchase probability for any product $i\in\mathcal{N}$ is $x^*_i$ as long as the realized sales of the product is strictly smaller than $\bar{c}$. Otherwise the purchase probability for product $i$ is set to be zero. In other words, regardless of the actual initial inventory for each product, we always cap the realized sales of each product to be at most $\bar{c}$. Since there exists $\bar{x}>0$ such that $x_i^*\in\{0,\bar{x}\}$ for all $i\in\mathcal{N}$, by applying a uniform cap of $\bar{c}$ on the realized sales, the expected sales of all products are either zero or $G(\bar{x},\bar{c})$, which satisfies the market share balancing constraint. The complete algorithm is provided in Algorithm~\ref{alg:policy_2}. In the last step of Algorithm~\ref{alg:policy_2}, we can construct a distribution over assortments from purchase probabilities following Equation~\eqref{eq:sales_to_distribution}.

\begin{algorithm}[t]
\SingleSpacedXI
\caption{Asymptotically Optimal Policy for \ref{prob:inventory} when $\alpha=1$}
\label{alg:policy_2}
\begin{algorithmic}
\State Obtain the optimal solution $\bm{x}^*$ to~the \ref{prob:inventory_upper_bound} problem using Algorithm \ref{alg:inventory_alpha=1}.
\State Define $\bar{c}=\min\{c_i:x^*_i>0,\ i\in\mathcal{N}\}$.
\State \textbf{Policy:} For each time period $t$, given realized sales $\bm{X}_{t-1}$ by the end of time period $t-1$, set the purchase probabilities $\bm{p}$ as $p_{it}=x^*_i$ if $X_{i,t-1}<\bar{c}$, and $p_{it}=0$ if $X_{i,t-1}=\bar{c}$. Compute a distribution over assortments $\pi_t$ such that the purchase probabilities are $\bm{p}_t$, and offer an assortment randomly drawn from $\pi_t$ at time period $t$.
\end{algorithmic}
\end{algorithm}

We use the following theorem to prove that the policy returned by Algorithm~\ref{alg:policy_2}~is asymptotically optimal.

\begin{theorem}\label{thm:inventory_constant_2}
If $\alpha=1$, Algorithm~\ref{alg:policy_2}~returns a $\max\{{1}/{2},1-{1}/{\sqrt{c_{\min}}}\}$-approximation to~\ref{prob:inventory}.
\end{theorem}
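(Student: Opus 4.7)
The plan is to mirror the structure of the proof of Theorem~\ref{thm:inventory_constant_1}, handling feasibility first and then the performance guarantee. The decisive simplification over the $\alpha<1$ case is that, by Theorem~\ref{thm:inventory_alpha=1}, the optimal solution $\bm{x}^*$ to the \ref{prob:inventory_upper_bound} problem takes only two values: there exists $\bar{x}>0$ such that $x_i^*\in\{0,\bar{x}\}$ for every $i\in\mathcal{N}$. This common value $\bar{x}$, together with the uniform cap $\bar{c}=\min\{c_i:x_i^*>0\}$, drives the entire analysis.

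For feasibility, I would verify the three requirements in turn. First, I need to show that each per-period purchase-probability vector $\bm{p}_t$ can be attained by some distribution over assortments; by Theorem~1 of \cite{topaloglu2013joint}, this reduces to checking $p_{it}\le v_i(1-\sum_j p_{jt})$ for all $i$, which follows because $p_{it}\le x_i^*$ componentwise and $\bm{x}^*$ already satisfies the first two constraints of~the \ref{prob:inventory_upper_bound} problem. Second, the inventory constraint $X_{iT}^\pi\le c_i$ holds almost surely because the policy caps each product at $\bar{c}\le c_i$ for every $i$ with $x_i^*>0$ (and at zero otherwise). Third, for the market share balancing constraint, the realized sales of each active product $i\in\bar{S}:=\{i:x_i^*>0\}$ are distributed as $\min\{Z(T,\bar{x}),\bar{c}\}$, so $\mathbb{E}[X_{iT}^\pi]=G(\bar{x},\bar{c})$ for every $i\in\bar{S}$ and $\mathbb{E}[X_{iT}^\pi]=0$ otherwise; since all nonzero expected sales are equal, the $\alpha=1$ balancing constraint is met.

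For the performance guarantee, the key inequality is $T\bar{x}\le\bar{c}$, which holds because the inventory constraint in~the \ref{prob:inventory_upper_bound} problem gives $Tx_i^*\le c_i$ for every $i\in\bar{S}$ and hence $T\bar{x}\le\min_{i\in\bar{S}}c_i=\bar{c}$. Applying Lemma~E.1 of \cite{bai2022coordinated} (already invoked in the proof of Theorem~\ref{thm:inventory_constant_1}) yields
\[
G(\bar{x},\bar{c})\ge\max\Big\{\tfrac{1}{2},\,1-\tfrac{1}{\sqrt{\bar{c}}}\Big\}T\bar{x}
\ge \max\Big\{\tfrac{1}{2},\,1-\tfrac{1}{\sqrt{c_{\min}}}\Big\}T\bar{x},
\]
since $\bar{c}\ge c_{\min}$. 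Summing over $i\in\bar{S}$ and using $x_i^*=\bar{x}$ on $\bar{S}$,
\[
\sum_{i\in\bar{S}}r_i\,G(\bar{x},\bar{c})\ge \max\Big\{\tfrac{1}{2},\,1-\tfrac{1}{\sqrt{c_{\min}}}\Big\}\,T\sum_{i\in\mathcal{N}}r_i x_i^*.
\]
By Lemma~\ref{lemma:inventory_upper_bound}, the right-hand factor $T\sum_i r_i x_i^*$ is an upper bound on the optimal value of~\ref{prob:inventory}, which gives the claimed approximation ratio.

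The only step that requires more than bookkeeping is the feasibility check for per-period purchase probabilities once some products have hit the cap $\bar{c}$: one must ensure that zeroing out $p_{it}$ for those products does not invalidate the inequality $p_{it}\le v_i(1-\sum_{j}p_{jt})$ for the remaining products. This is straightforward because reducing some $p_{jt}$ only enlarges the right-hand side, but it deserves an explicit sentence. Everything else (the two-value structure of $\bm{x}^*$, the bound $T\bar{x}\le\bar{c}$, the application of Lemma~E.1, and the monotonicity $\bar{c}\ge c_{\min}$) is a direct consequence of Theorem~\ref{thm:inventory_alpha=1} and Lemma~\ref{lemma:inventory_upper_bound}, so I do not foresee any substantive obstacle.
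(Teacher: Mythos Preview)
Your proposal is correct and follows essentially the same approach as the paper: establish the two-value structure $x_i^*\in\{0,\bar{x}\}$ (which, as the paper notes, is immediate from the $\alpha=1$ balancing constraint itself, so invoking Theorem~\ref{thm:inventory_alpha=1} is slightly more than needed), verify feasibility via the uniform cap $\bar{c}$, and derive the guarantee from $T\bar{x}\le\bar{c}$ together with Lemma~E.1 of \cite{bai2022coordinated} and Lemma~\ref{lemma:inventory_upper_bound}. Your explicit check that zeroing out capped products only relaxes the attainability inequality is a nice addition that the paper leaves implicit.
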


\begin{proof}
We define $\bar{S}=\{i\in\mathcal{N}:x^*_i>0\}$. Since $\alpha=1$, there exists $\bar{x}>0$ such that $x^*_i=\bar{x}$ for all $i\in\bar{S}$. By definition of the policy, we have that for any $i\in\bar{S}$, the purchase probability of product $i$ is $\bar{x}$ until the realized sales of the product reaches $\bar{c}$. Therefore for any $i\in\bar{S}$, we have $\mathbb{E}[X_{iT}^\pi]=G(\bar{x},\bar{c})$ for all $i\in\bar{S}$ and $\mathbb{E}[X_{iT}^\pi]=0$ for all $i\notin\bar{S}$, thus $\pi$ is feasible to~\ref{prob:inventory}.

Since $\bm{x}^*$ is feasible to~the \ref{prob:inventory_upper_bound} problem, we get $T\bar{x}\leq c_i$ for all $i\in\bar{S}$, thus $T\bar{x}\leq \bar{c}$. Therefore by Lemma E.1 in \cite{bai2022coordinated} the expected revenue of the proposed policy is at least 
\begin{equation*}
\sum_{i\in \mathcal{N}}\mathbb{E}[X_{iT}^\pi]=\sum_{i\in\bar{S}}r_iG(\bar{x},\bar{c})\geq \max\Big\{\dfrac{1}{2},1-\dfrac{1}{\sqrt{\bar{c}}}\Big\}\sum_{i\in\bar{S}}r_ix_i^*\geq \max\Big\{\dfrac{1}{2},1-\dfrac{1}{\sqrt{c_{\min}}}\Big\}\sum_{i\in\bar{S}}r_ix_i^*.
\end{equation*}
Since $\bm{x}^*$ is an optimal solution to~the \ref{prob:inventory_upper_bound} problem, the optimal value of~\ref{prob:inventory} is upper bounded by $\sum_{i\in\mathcal{N}}r_ix_i^*$. Therefore the proposed policy is a $\max\{{1}/{2},1-{1}/{\sqrt{c_{\min}}}\}$-approximation to~\ref{prob:inventory}.
\end{proof}

Theorem~\ref{thm:inventory_constant_2}~states that the policy obtained by Algorithm~\ref{alg:policy_2}~is always a $1/2$-approximation to~\ref{prob:inventory}. Furthermore, as $c_{\min}$ increases, the approximation ratio converges to one, which implies that the policy is also asymptotically optimal.

{ 

\section{Joint Inventory Stocking and Dynamic Assortment Optimization with Balanced Market Share}\label{sec:joint_inventory_assortment}

In this section, we consider the joint inventory stocking and dynamic assortment problem with balanced market share. In this problem, the seller first chooses the initial inventories $c_i\in\mathbb{Z}_+$ for all $i\in\mathcal{N}$ such that the total initial inventory does not exceed a maximum total capacity $K$. Then the seller sequentially offers assortments over a time horizon of $T$ time periods, where in each time period $t$, the seller offers an assortment $S_t$, and a customer makes a purchase based on MNL. Similar to the dynamic problem introduced in Section~\ref{sec:dynamic}, the seller's dynamic policy should satisfy two constraints: first, the realized sales of any product $i$ should not exceed its initial inventory $c_i$; second, the expected sales of any two product offered with positive probability should differ by a factor of at most $\alpha$. In this section, we focus on the case of $\alpha<1$.

We use $X_{it}^\pi$ to denote the realized sales of product $i$ by the end of time period $t$ under policy $\pi$. The problem is formally as
\begin{equation}\label{prob:joint_inventory_assortment}
\begin{aligned}
&\max_{\bm{c},\pi}&&\sum_{i\in\mathcal{N}}r_i\mathbb{E}[X_{iT}^\pi]\\
&\text{s.t.}&&\mathbb{P}(X_{iT}^\pi\leq c_i)=1,\ \forall i\in\mathcal{N},\\
&&&\mathbb{E}[X_{iT}^\pi]\in\{0\}\cup\Big[\alpha\cdot \max_{j\in\mathcal{N}}\mathbb{E}[X_{jT}^\pi],\infty\Big),\\
&&&\sum_{i\in\mathcal{N}}c_i\leq K,\\
&&&c_i\in\mathbb{Z}_+,\ \forall i\in\mathcal{N}.
\end{aligned}
\end{equation}

Since the dynamic control policy depends on the history of sales which is high-dimensional, it is computationally intractable to solve for the exact optimal policy to Problem~\eqref{prob:joint_inventory_assortment}. Therefore to solve Problem~\eqref{prob:joint_inventory_assortment}, we establish an optimization problem that provides an upper bound for the optimal value of Problem~\eqref{prob:joint_inventory_assortment}. The upper bound problem is defined as
\begin{equation}\label{prob:joint_inventory_assortment_ub}
\begin{aligned}
&\max_{\bm{c},\bm{x}}&&T\sum_{i\in\mathcal{N}}r_ix_i\\
&\text{s.t.}&&0\leq x_i\leq v_ix_0,\ \forall i\in\mathcal{N},\\
&&&x_0+\sum_{i\in\mathcal{N}}x_i=1,\\
&&&x_i\in\{0\}\cup\Big[\alpha\cdot\max_{j\in\mathcal{N}}x_j,\infty\Big),\ \forall i\in\mathcal{N}\\
&&&x_i\leq c_i/T,\ \forall i\in\mathcal{N},\\
&&&\sum_{i\in\mathcal{N}}c_i\leq K,\\
&&&c_i\in\mathbb{Z}_+,\ \forall i\in\mathcal{N}.
\end{aligned}
\end{equation}
By Proposition~\ref{thm:inventory_upper_bound} we have that the optimal value of Problem~\eqref{prob:joint_inventory_assortment_ub} is greater than or equal to that of Problem~\eqref{prob:joint_inventory_assortment}.

We use the following proposition to show that Problem~\eqref{prob:joint_inventory_assortment_ub} admits an FPTAS.

\begin{proposition}\label{prop:joint_ub_fptas}
Problem~\eqref{prob:joint_inventory_assortment_ub}~admits an FPTAS, and its runtime of obtaining a $(1-\epsilon)$-approximation of Problem~\eqref{prob:joint_inventory_assortment} is
\begin{equation*}
O\Big(\dfrac{n^3}{\epsilon^6}\cdot\log\Big(\dfrac{1}{\alpha}\Big)\log(1+nv_{\max})\Big(\log T+\log\Big(\dfrac{1+nv_{\max}}{v_{\min}}\Big)\Big)\Big(\log T+\log\Big(\dfrac{r_{\max}}{r_{\min}}\Big)+\log\Big(\dfrac{1+nv_{\max}}{v_{\min}}\Big)\Big)\Big).
\end{equation*}
\end{proposition}

Proof of Proposition~\ref{prop:joint_ub_fptas}~is provided in Appendix~\ref{sec:prop:joint_ub_fptas}.

Based on an approximate solution $(\tilde{\bm{c}},\tilde{\bm{x}})$, we can construct dynamic policies for Problem~\eqref{prob:joint_inventory_assortment}~with performance guarantee. To obtain a dynamic policy with constant factor approximation guarantee, we first obtain a $(1-\epsilon/2)$-approximate solution $(\tilde{\bm{c}},\tilde{\bm{x}})$ of Problem~\eqref{prob:joint_inventory_assortment_ub}. Then, we set the initial inventories as $\tilde{\bm{c}}$, and construct a dynamic policy following Algorithm~\ref{alg:policy_1}, where the precision of the bisection search $\epsilon_2$ in the algorithm is set to be $\min\{\epsilon/2,1-\alpha\}$. By Theorem~\ref{thm:inventory_constant_1}, the policy above is a $(1/2-\epsilon)$-approximation to the joint inventory stock and assortment optimization problem.

In what follows, we establish an asymptotically optimal policy for Problem~\eqref{prob:joint_inventory_assortment}, whose approximation ratio converges to $1$ as the maximum total capacity $K$ gets large. In this case, we use a rounding based algorithm similar to the one in \cite{bai2022coordinated}. First, we obtain a $(1-\epsilon_1)$-approximate solution $(\tilde{\bm{c}},\tilde{\bm{x}})$ of Problem~\eqref{prob:joint_inventory_assortment_ub}. Then, we define $\gamma=\lfloor(K/n)^{2/3}\rfloor$, and set $\hat{\bm{c}}$ as
\begin{equation}\label{eq:bar_c}
\bar{c}_i=\lfloor(1-\gamma n/K)\tilde{c}_i\rfloor+\gamma,\ \forall i\in\mathcal{N}.
\end{equation}
Finally, we set the initial inventory as $\bar{\bm{c}}$, and construct a dynamic policy following Algorithm~\ref{alg:policy_1}. The complete algorithm is provided in Algorithm~\ref{alg:joint_inventory_assortment}.

\begin{algorithm}[t]
\SingleSpacedXI
\caption{Asymptotically Optimal Algorithm for Joint Inventory Stocking and Dynamic Assortment Optimization}
\label{alg:joint_inventory_assortment}
\begin{algorithmic}
\State Obtain a $(1-\epsilon_1)$-approximation $(\tilde{\bm{c}},\tilde{\bm{x}})$ of Problem~\eqref{prob:joint_inventory_assortment_ub}.
\State Set the initial inventories as $\bar{\bm{c}}$ defined \eqref{eq:bar_c}.
\State Set $\bar{\bm{x}}$ as $\bar{x}_i=(1-\gamma n/K)\tilde{x}_i$ for all $i\in\mathcal{N}$.
\State Set $\bar{S}=\{i\in\mathcal{N}:\bar{x}_i>0\}$, $G_{\min}=\min_{i\in\bar{S}} G(\bar{x}_i,\bar{c}_i)$.
\For{$i\in\bar{S}$}
\If{$G(\bar{x}_i,\bar{c}_i)\leq G_{\min}/\alpha$}
\State Set $\hat{x}_i=\bar{x}_i$.
\Else
\State Search for $\hat{x}_i$ that satisfies $(1-\epsilon_2)G_{\min}/\alpha\leq G(\hat{x}_i,\bar{c}_i)\leq G_{\min}/\alpha$ using bisection search.
\EndIf
\EndFor
\State \textbf{Policy:} For each time period $t$, given realized sales $\bm{X}_{t-1}$ by the end of time period $t-1$, set the purchase probabilities $\bm{p}_t$ for time period $t$ as $p_{it}=\hat{x}_i$ if $X_{i,t-1}<c_i$, and $p_{it}=0$ if $X_{i,t-1}=c_i$. Compute a distribution over assortments $\pi_t$ such that the purchase probabilities are $\bm{p}_t$, and offer an assortment randomly drawn from $\pi_t$ at time period $t$.
\end{algorithmic}
\end{algorithm}

We show that the policy returned by the Algorithm~\ref{alg:joint_inventory_assortment} is asymptotically optimal, i.e., its approximation ratio converges to one as $K$ gets large.

\begin{theorem}
By setting $\epsilon_1=\epsilon/2$ and $\epsilon_2=\min\{\epsilon/2,1-\alpha\}$, the policy returned by Algorithm~\ref{alg:joint_inventory_assortment} attains an expected revenue at least $(1-(\sqrt{2}+1)(n/K)^{1/3}-\epsilon)$ fraction of the optimal expected revenue.
\end{theorem}

\begin{proof}
Since $1-(\sqrt{2}+1)(n/K)^{1/3}<0$ when $K<n$, it suffices to only consider the case where $K\geq n$. We define $\bar{x}_0=1-\sum_{i\in\mathcal{N}}\bar{x}_i$. We first prove that $(\bar{\bm{c}},\bar{\bm{x}})$ is feasible to Problem~\eqref{prob:joint_inventory_assortment_ub}. Since $\bar{\bm{x}}$ is scaled by a same factor from $\tilde{\bm{x}}$, $\bar{\bm{x}}$ satisfies the the third set of constraints of Problem~\eqref{prob:joint_inventory_assortment_ub}. By definition of $\bar{\bm{x}}$ we have that $\bar{\bm{x}}$ satisfies the second constraint of Problem~\eqref{prob:joint_inventory_assortment_ub}. Since $\bar{x}_i\leq \tilde{x}_i$ for all $i\in\mathcal{N}$ and $\bar{x}_0\geq \tilde{x}_0$, we have that $\bar{\bm{x}}$ satisfies the first set of constraints in Problem~\eqref{prob:joint_inventory_assortment_ub}. Since $\tilde{x}_i\leq \tilde{c}_i/T$ for all $i\in\mathcal{N}$, we have
\begin{equation*}
T\bar{x}_i=(1-\gamma n/K)T\tilde{x}_i\leq (1-\gamma n/K)\tilde{c}_i\leq \lfloor(1-\gamma n/K)\tilde{c}_i\rfloor+1\leq \lfloor(1-\gamma n/K)\tilde{c}_i\rfloor+\gamma=\bar{c}_i.
\end{equation*}
Therefore $(\bar{\bm{c}},\bar{\bm{x}})$ satisfies the fourth set of constraints in Problem~\eqref{prob:joint_inventory_assortment_ub}. We also have
\begin{equation*}
\sum_{i\in\mathcal{N}}\bar{c}_i\leq (1-\gamma n/K)\sum_{i\in\mathcal{N}}\tilde{c}_i+n\gamma\leq K(1-\gamma n/K)+n\gamma=K-n\gamma+n\gamma=K.
\end{equation*}
Therefore $\bar{\bm{c}}$ satisfies the fifth constraint in Problem~\eqref{prob:joint_inventory_assortment_ub}.

Let $(\bm{c}^*,\bm{x}^*)$ be an optimal solution to Problem~\eqref{prob:joint_inventory_assortment_ub}. Following similar arguments in the proof of Theorem 5.1 we have that the expected revenue of the policy is at least
\begin{align*}
&T\Big(1-\dfrac{1}{\sqrt{\min_{j\in\mathcal{N}}\bar{c}_j}}-\epsilon_2\Big)\cdot \sum_{i\in\mathcal{N}}r_i\bar{x}_i\\
\geq&T\left(1-\dfrac{1}{\sqrt{\gamma}}-\epsilon_2\right)\Big(1-\dfrac{\gamma n}{K}\Big)\cdot\sum_{i\in\mathcal{N}}r_i\tilde{x}_i\\
\geq&T\Big(1-\dfrac{1}{\sqrt{\gamma}}-\dfrac{\gamma n}{K}-\epsilon_2\Big)\cdot \sum_{i\in\mathcal{N}}r_i\tilde{x}_i\\
\geq & T\Big(1-\sqrt{2}\Big(\dfrac{n}{K}\Big)^{1/3}-\dfrac{\gamma n}{K}-\epsilon_2\Big)(1-\epsilon_1)\cdot\sum_{i\in\mathcal{N}}r_ix_i^*\\
\geq&T\Big(1-(\sqrt{2}+1)\Big(\dfrac{n}{K}\Big)^{1/3}-\epsilon_2\Big)(1-\epsilon_1)\cdot \sum_{i\in\mathcal{N}}r_i\tilde{x}_i\\
\geq &T\Big(1-(\sqrt{2}+1)\Big(\dfrac{n}{K}\Big)^{1/3}-\epsilon_1-\epsilon_2\Big)\cdot\sum_{i\in\mathcal{N}}r_ix_i^*\\
\geq &T\Big(1-(\sqrt{2}+1)\Big(\dfrac{n}{K}\Big)^{1/3}-\epsilon\Big)\cdot\sum_{i\in\mathcal{N}}r_ix_i^*
\end{align*}
Here the third inequality is because we have assumed $K\geq n$, thus $\gamma=\lfloor(K/n)^{2/3}\rfloor\geq (K/n)^{2/3}/2$, and we get $1/\sqrt{\gamma}\leq \sqrt{2}(K/n)^{1/3}$. The fourth inequality is because
\begin{equation*}
\dfrac{\gamma n}{K}=\Big\lfloor\Big(\dfrac{K}{n}\Big)^{2/3}\Big\rfloor \dfrac{n}{K}\leq \Big(\dfrac{K}{n}\Big)^{2/3}\dfrac{n}{K}=\Big(\dfrac{n}{K}\Big)^{1/3}.
\end{equation*}
Therefore we have that the policy returned by Algorithm~\ref{alg:joint_inventory_assortment} attains an expected revenue at least $(1-(\sqrt{2}+1)(n/K)^{1/3}-\epsilon)$ fraction of the optimal expected revenue.
\end{proof}

\subsection{Proof of Proposition~\ref{prop:joint_ub_fptas}}\label{sec:prop:joint_ub_fptas}

We establish an FPTAS for Problem~\eqref{prob:joint_inventory_assortment_ub}. We first rewrite Problem~\eqref{prob:joint_inventory_assortment_ub}~as follows,

\begin{equation}
\label{prob:joint_ub_2}
\begin{aligned}
&\max_{\bm{x},y}&&T\sum_{i\in\mathcal{N}}r_ix_i\\
&\text{s.t.}&&0\leq x_i\leq v_ix_0,\ \forall i\in\mathcal{N},\\
&&&\sum_{i\in\mathcal{N}}x_i\leq 1-x_0,\\
&&&x_i\in \{0\}\cup [\alpha y,y],\ \forall i\in\mathcal{N},\\
&&&\sum_{i\in\mathcal{N}}\lceil Tx_i\rceil\leq K.
\end{aligned}
\end{equation}

Specifically, in Problem~\eqref{prob:joint_ub_2}, we rewrote the second, the third and the last three constraints in the original version of Problem~\eqref{prob:joint_inventory_assortment_ub}. Changes in the second constraint does not change the optimal value because if $\sum_{i\in\mathcal{N}}x_i\leq 1-x_0$, then setting a new $x_0$ as $1-\sum_{i\in\mathcal{N}}x_i$ still verifies all constraints in Problem~\eqref{prob:joint_inventory_assortment_ub}. In the third constraint of Problem~\eqref{prob:joint_ub_2}, we use a variable $y$ to capture the maximum of $x_i$. Finally, for the last three constraints, if suffices to define $c_i=\lceil Tx_i\rceil$ for all $i\in\mathcal{N}$.

The key idea of the FPTAS is as follows: Suppose $(\bm{x}^*,y^*)$ is an optimal solution of Problem~\eqref{prob:joint_ub_2}. We construct two collections of grid points $\texttt{Grid}_0$ and $\texttt{Grid}_y$ for $x_0^*$ and $y^*$ respectively. We define $y_{\min}=\min\{1/T,v_{\min}/(1+nv_{\max})\}$, and we will show that $y_{\min}$ is a lower bound for $y^*$. Letting $\delta$ be the precision of the grid, we further define $L_0=\lfloor\log_{1+\delta}(1+nv_{\max})\rfloor$, $L_y=\lfloor\log_{1+\delta}(1/y_{\min})\rfloor$. The sets of grid points $\texttt{Grid}_0$ and $\texttt{Grid}_y$ are defined as

\begin{equation*}
\texttt{Grid}_0=\Big\{\dfrac{1}{1+nv_{\max}}\cdot (1+\delta)^\ell:\ell\in\{0,1,\dots,L_0\}\Big\},\ \texttt{Grid}_y=\Big\{y_{\min}\cdot (1+\delta)^{\ell-1}:\ell\in\{0,1,\dots,L_y\}\Big\}.
\end{equation*}

The two collections of grid points serve as guesses for the no-purchase probability and the maximum purchase probability.

Consider any pair of $(\bar{x}_0,\bar{y})\in \texttt{Grid}_0\times \texttt{Grid}_y$. Suppose $x_i>0$, then by the first three constraints in Problem~\eqref{prob:joint_ub_2}~we have that $\alpha\bar{y}\leq x_i\leq \bar{y}$, $x_i\leq 1-\bar{x}_0$ and $x_i\leq v_i\bar{x}_0$, thus $x_i>0$ implies $\alpha\bar{y}\leq x_i\leq \min\{v_i\bar{x}_0,\bar{y},1-\bar{x}_0\}$ and $\alpha\bar{y}\leq v_i\bar{x}_0$. We define $A(\bar{x}_0,\bar{y})=\{i\in\mathcal{N}:v_i\bar{x}_0\geq \alpha\bar{y},\,1-\bar{x}_0\geq \alpha\bar{y}\}$. Then we have that $x_i=0$ for all $i\notin A(\bar{x}_0,\bar{y})$. Then for fixed $\bar{x}_0$ and $\bar{y}$, the problem can be rewritten as
\begin{equation}\label{prob:joint_ub_4}
\begin{aligned}
&\max_{\bm{x}}&&T\sum_{i\in A(\bar{x}_0,\bar{y})}r_ix_i\\
&\text{s.t.}&&\sum_{i\in A(\bar{x}_0,\bar{y})} x_i\leq 1-\bar{x}_0,\\
&&&\sum_{i\in A(\bar{x}_0,\bar{y})}\lceil Tx_i\rceil\leq K,\\
&&&x_i\in\{0\}\cup[\alpha \bar{y},\min\{\bar{y},v_i\bar{x}_0,1-\bar{x}_0\}],\ \forall i\in A(\bar{x}_0,\bar{y}).
\end{aligned}
\end{equation}

To solve Problem~\eqref{prob:joint_ub_4}, we use a collection of grid points $\texttt{Grid}_r$ to serve as guesses of $\sum_{i\in A(\bar{x}_0,\bar{y})}r_ix_i$ in Problem~\eqref{prob:joint_ub_4}. We define $L_r=\lceil\log_{1+\delta}(r_{\max}/y_{\min}r_{\min})\rceil$, and define $\texttt{Grid}_r$ as
\begin{equation*}
\texttt{Grid}_r=\{r_{\min}y_{\min}\cdot(1+\delta)^{\ell-5}:\ell\in\{0,1,\dots,L_r+5\}\}.
\end{equation*}
For every $\bar{r}\in \texttt{Grid}_r$, we study whether there exists $\bm{x}$ that satisfies the following conditions
\begin{equation}\label{prob:joint_ub_5}
\begin{aligned}
&\sum_{i\in A(\bar{x}_0,\bar{y})}r_ix_i\geq \bar{r},\\
&\sum_{i\in A(\bar{x}_0,\bar{y})} x_i\leq 1-\bar{x}_0,\\
&\sum_{i\in A(\bar{x}_0,\bar{y})}\lceil Tx_i\rceil\leq K,\\
&x_i\in\{0\}\cup[\alpha \bar{y},\min\{\bar{y},v_i\bar{x}_0,1-\bar{x}_0\}],\ \forall i\in A(\bar{x}_0,\bar{y}).
\end{aligned}
\end{equation}
We further discretize the range $[\alpha\bar{y},\min\{\bar{y},v_i\bar{x}_0,1-\bar{x}_0\}]$ for each $i\in A(\bar{x}_0,\bar{y})$ using a set of grid points. Specifically, we define $L_i=\lfloor\log_{1+\delta}(\min\{v_i\bar{x}_0,\bar{y},1-\bar{x}_0)\}/\alpha\bar{y}\rfloor+1$, and for each $\ell\in\{1,2,\dots,L_i\}$, we define $\bar{x}_i^\ell$ as
\begin{equation*}
\bar{x}_i^\ell=\alpha\bar{y}\cdot (1+\delta)^{\ell_i-1}.
\end{equation*}
We define $\bar{x}_i^0=0$ for all $i\in A(\bar{x}_0,\bar{y})$. We further define $M=\lceil1/\delta\rceil$, and define $X_i^\ell$, $R_i^\ell$ and $C_i^\ell$ as
\begin{equation*}
X_i^\ell=\Big\lceil\dfrac{Mn\bar{x}_i^\ell}{1-\bar{x}_0}\Big\rceil,\ R_i^\ell=\Big\lfloor \dfrac{Mnr_i\bar{x}_i^\ell}{\bar{r}}\Big\rfloor,\ C_i^\ell=\lceil T\bar{x}_i^\ell\rceil
\end{equation*}
Then we rewrite Problem~\eqref{prob:joint_ub_5}~into the following problem,
\begin{equation}
\begin{aligned}
&\sum_{i\in A(\bar{x}_0,\bar{y})} R_i^{\ell_i}\geq Mn,\\
&\sum_{i\in A(\bar{x}_0,\bar{y})} X_i^{\ell_i}\leq (M+1)n,\\
&\sum_{i\in A(\bar{x}_0,\bar{y})} C_i^{\ell_i}\leq K,\\
&\ell_i\in \{0,1,\dots,L_i\},\ \forall i\in A(\bar{x}_0,\bar{y}).
\end{aligned}
\end{equation}
To solve Problem~\eqref{prob:joint_ub_5}, it suffices to etermine whether the optimal value of the following problem is smaller than or equal to $K$,
\begin{equation}\label{prob:joint_ub_6}
\begin{aligned}
&\min_{\bm{\ell}}&&\sum_{i\in A(\bar{x}_0,\bar{y})} C_i^{\ell_i}\\
&\text{s.t.}&&\sum_{i\in A(\bar{x}_0,\bar{y})}X_i^{\ell_i}\leq (M+1)n,\\
&&&\sum_{i\in A(\bar{x}_0,\bar{y})} R_i^{\ell_i}\geq Mn,\\
&&&\ell_i\in \{0,1,\dots,L_i\},\ \forall i\in A(\bar{x}_0,\bar{y}).
\end{aligned}
\end{equation}
Assume without loss of generality that products in $A(\bar{x}_0,\bar{y})$ are indexed by $1,2,\dots,m$. We define $F_{j}(\tilde{L},\tilde{R})$ as
\begin{equation*}
\begin{aligned}
F_j(\tilde{L},\tilde{R})=&\min_{\bm{\ell}}&&\sum_{i=j}^m C_i^{\ell_i}\\
&\text{s.t.}&&\sum_{i=j}^m X_{i}^{\ell_i}\leq \tilde{L},\\
&&&\sum_{i=j}^m R_i^{\ell_i}\geq \tilde{R},\\
&&&\ell_i\in\{0,1,2,\dots,L_i\},\ \forall i\in\{j,\dots,m\}.
\end{aligned}
\end{equation*}
Then $F_j(\tilde{L},\tilde{R})$ satisfies the following equations
\begin{equation}\label{eq:dp_formulation_new}
\begin{aligned}
&F_j(\tilde{L},\tilde{R})=\min_{\ell}\Big\{C_j^{\ell}+F_{j+1}(\tilde{L}-X_i^\ell,\max\{\tilde{R}-R_i^\ell,0\})\Big\},\\
&F_{j}(\tilde{L},\tilde{R})=\infty,\ \text{if}\ \tilde{L}<0,\\
&F_{m+1}(\tilde{L},0)=0,\ \text{if}\ \tilde{L}\geq 0,\\
&F_{m+1}(\tilde{L},\tilde{R})=\infty,\ \text{if}\ \tilde{R}>0.
\end{aligned}
\end{equation}
Then Problem~\eqref{prob:joint_ub_6} can be solved exactly by solving a dynamic program defined in \eqref{eq:dp_formulation_new}. The complete algorithm is provided in Algorithm~\ref{alg:fptas_joint_ub_new}.

\begin{algorithm}
\SingleSpacedXI
\caption{FPTAS for Problem~\eqref{prob:joint_inventory_assortment_ub}}
\label{alg:fptas_joint_ub_new}
\begin{algorithmic}
\For{$(\bar{x}_0,\bar{y},\bar{r})\in\texttt{Grid}_0\times \texttt{Grid}_y\times \texttt{Grid}_r$}
\State Solve Problem~\eqref{prob:joint_ub_6} using the dynamic program in \eqref{eq:dp_formulation_new}, let $\text{OPT}(\bar{x}_0,\bar{y},\bar{r})$ be the optimal value and $\hat{\ell}$ be an optimal solution.
\If{$\text{OPT}(\bar{x}_0,\bar{y},\bar{r})\leq K$}
\State Set $\bm{x}(\bar{x}_0,\bar{y},\bar{r})$ as $x_i(\bar{x}_0,\bar{y},\bar{r})=\bar{x}_i^{\hat{\ell}_i}/(1+\delta)$ for all $i\in A(\bar{x}_0,\bar{y})$, $x_i(\bar{x}_0,\bar{y},\bar{r})=0$ for all $i\notin A(\bar{x}_0,\bar{y})$ and $x_0(\bar{x}_0,\bar{y},\bar{r})=\bar{x}_0$.
\Else 
\State Set $\bm{x}(\bar{x}_0,\bar{y},\bar{r})$ as $x_i(\bar{x}_0,\bar{y},\bar{r})=0$ and $x_0(\bar{x}_0,\bar{y},\bar{r})=\bar{x}_0$.
\EndIf
\State Set $\bm{c}(\bar{x}_0,\bar{y},\bar{r})$ as $c_i(\bar{x}_0,\bar{y},\bar{r})=\lceil Tx_i(\bar{x}_0,\bar{y},\bar{r})\rceil$ for all $i\in\mathcal{N}$.
\EndFor
\State Return $(\bm{c}(\bar{x}_0,\bar{y},\bar{r}),\bm{x}(\bar{x}_0,\bar{y},\bar{r}))$ that maximizes the objective over all $(\bar{x}_0,\bar{y},\bar{r})\in\texttt{Grid}_0\times \texttt{Grid}_y\times \texttt{Grid}_r$.
\end{algorithmic}
\end{algorithm}

We first prove that for all $(\bar{x}_0,\bar{y},\bar{r})\in \texttt{Grid}_0\times \texttt{Grid}_y\times \texttt{Grid}_r$, $(\bm{c}(\bar{x}_0,\bar{y},\bar{r}),\bm{x}(\bar{x}_0,\bar{y},\bar{r}))$ is feasible to Problem~\eqref{prob:joint_inventory_assortment_ub}. It suffices to consider the case where the optimal value of Problem~\eqref{prob:joint_ub_6} is smaller than or equal to $K$. By definition we have that $\bar{x}_i^\ell\leq v_i\bar{x}_0$ and $\bar{x}_i^\ell\in\{0\}\cup[\alpha\bar{y},\bar{y}]$ for all $i\in A(\bar{x}_0,\bar{y})$. Therefore by definition of $\bm{x}(\bar{x}_0,\bar{y},\bar{r})$ we have that $x_i(\bar{x}_0,\bar{y},\bar{r})\leq v_i\bar{x}_0/(1+\delta)\leq v_i\bar{x}_0$ for all $i\in\mathcal{N}$, and $x_i(\bar{x}_0,\bar{y},\bar{r})\in\{0\}\cup[\alpha\bar{y}/(1+\delta),\bar{y}/(1+\delta)]$ for all $i\in\mathcal{N}$. Therefore $\bm{x}(\bar{x}_0,\bar{y},\bar{r})$ satisfies the first and the third constraints in Problem~\eqref{prob:joint_inventory_assortment_ub}. To verify the second constraint in Problem~\eqref{prob:joint_inventory_assortment_ub}, we have
\begin{align*}
&\sum_{i\in\mathcal{N}}x_i(\bar{x}_0,\bar{y},\bar{r})=\dfrac{1}{1+\delta}\sum_{i\in A(\bar{x}_0,\bar{y})}\bar{x}_i^{\hat{\ell}_i}=\dfrac{1-\bar{x}_0}{Mn(1+\delta)}\sum_{i\in A(\bar{x}_0,\bar{y})}\dfrac{Mn\bar{x}_i^{\hat{\ell}_i}}{1-\bar{x}_0}\\
\leq&\dfrac{1-\bar{x}_0}{Mn(1+\delta)}\sum_{i\in A(\bar{x}_0,\bar{y})}X_i^{\hat{\ell}_i}\leq \dfrac{(1-\bar{x}_0)(M+1)n}{Mn(1+\delta)}\leq 1-\bar{x}_0.
\end{align*}
Here the first inequality is due to the definition of $X_i^\ell$, the second inequality is due to the first constraint in Problem~\eqref{prob:joint_ub_6}, and the third inequality is because $M=\lceil 1/\delta\rceil\geq 1/\delta$, thus $M+1\leq M(1+\delta)$. Therefore we have that $\bm{x}(\bar{x}_0,\bar{y},\bar{r})$ satisfies the second constraint in Problem~\eqref{prob:joint_inventory_assortment_ub}. By definition of $\bm{c}(\bar{x}_0,\bar{y},\bar{r})$ we get $c_i(\bar{x}_0,\bar{y},\bar{r})\geq Tx_i(\bar{x}_0,\bar{y},\bar{r})$ for all $i\in\mathcal{N}$, thus $(\bm{c}(\bar{x}_0,\bar{y},\bar{r}),\bm{x}(\bar{x}_0,\bar{y},\bar{r}))$ satisfies the fourth constraint in Problem~\eqref{prob:joint_inventory_assortment_ub}. To verify the fifth constraint in Problem~\eqref{prob:joint_inventory_assortment_ub}, we have
\begin{equation*}
\sum_{i\in\mathcal{N}}c_i(\bar{x}_0,\bar{y},\bar{r})=\sum_{i\in A(\bar{x}_0,\bar{y})} \Big\lceil\dfrac{T\bar{x}_i^{\hat{\ell}_i}}{1+\delta}\Big\rceil\leq \sum_{i\in A(\bar{x}_0,\bar{y})}\lceil T\bar{x}_i^{\hat{\ell}_i}\rceil=\sum_{i\in A(\bar{x}_0,\bar{y})} C_i^{\hat{\ell}_i}\leq K.
\end{equation*}
Here the second inequality is because we have assumed that the optimal value of Problem~\eqref{prob:joint_ub_6} is smaller than or equal to $K$. Therefore we conclude that $(\bm{c}(\bar{x}_0,\bar{y},\bar{r}),\bm{x}(\bar{x}_0,\bar{y},\bar{r}))$ is feasible to Problem~\eqref{prob:joint_inventory_assortment_ub} for all $(\bar{x}_0,\bar{y},\bar{r})\in\texttt{Grid}_0\times \texttt{Grid}_y\times \texttt{Grid}_r$.

Next, we prove an approximation ratio of Algorithm~\ref{alg:fptas_joint_ub_new}. Let $(\bm{x}^*,y^*)$ be an optimal solution to Problem~\eqref{prob:joint_ub_2}, and define $R^*=\sum_{i\in\mathcal{N}}r_ix_i^*$. Since the no-purchase probability under any assortment is greater than or equal to $1/(1+nv_{\max})$, we have $x_0^*\geq 1/(1+nv_{\max})$. Then we show that $y^*\geq y_{\min}$. Suppose $y^*<y_{\min}$, by the second constraint of Problem~\eqref{prob:joint_ub_2} we get $x_i^*<v_{\min}/(1+nv_{\max})$ and $x_i^*<1/T$ for all $i\in\mathcal{N}$. The former inequality implies that
\begin{equation*}
1-\sum_{i\in\mathcal{N}}x_i^*>1-\dfrac{nv_{\min}}{1+nv_{\max}}\geq 1-\dfrac{nv_{\max}}{1+nv_{\max}}\geq \dfrac{1}{1+nv_{\max}}.
\end{equation*}
Then for any $i\in\mathcal{N}$, we get
\begin{equation*}
x_i^*<\dfrac{v_{\min}}{1+nv_{\max}}\leq \dfrac{v_i}{1+nv_{\max}}\leq v_i\Big(1-\sum_{j\in\mathcal{N}}x_j^*\Big).
\end{equation*}
Therefore there exists $\epsilon'>0$ such that $(1+\epsilon')x_i^*\leq v_i(1-\sum_{j\in\mathcal{N}}(1+\epsilon')x_j^*)$ and $(1+\epsilon')x_i^*\leq 1/T$ for all $i\in\mathcal{N}$. We define $\underline{\bm{x}}$ as $\underline{x}_i=(1+\epsilon')x_i^*$ for all $i\in\mathcal{N}$ and $\underline{x}_0=1-(1+\epsilon')\sum_{i\in\mathcal{N}}x_i^*$. We further define $\underline{y}=(1+\epsilon')y^*$. By definition of $\underline{\bm{x}}$ we have that $\underline{\bm{x}}$ satisfies the first two constraints in Problem~\eqref{prob:joint_ub_2}. Since $(\bm{x}^*,y^*)$ is feasible to Problem~\eqref{prob:joint_ub_2}, we have $x_i^*\in \{0\}\cup[\alpha y^*,y^*]$ for all $i\in\mathcal{N}$. Since $\underline{x}_i=(1+\epsilon')x_i^*$ and $y^*=(1+\epsilon')y^*$, we also have $\underline{x}_i\in\{0\}\cup[\alpha\underline{y},\underline{y}]$. Therefore $\underline{\bm{x}}$ satisfies the third constraint in Problem~\eqref{prob:joint_ub_2}. Finally, we verify that $\underline{\bm{x}}$ satisfies the last constraint in Problem~\eqref{prob:joint_ub_2}. We claim that $\lceil Tx_i^*\rceil=\lceil T\underline{x}_i\rceil$ for all $i\in\mathcal{N}$. If $x_i^*=0$, then $\underline{x}_i=0$ and the equation holds. If $x_i^*>0$, then we have $0<Tx_i^*<T\underline{x}_i\leq 1$, thus in this case $\lceil Tx_i^*\rceil=\lceil T\underline{x}_i\rceil=1$. Therefore in both cases we get $\lceil Tx_i^*\rceil=\lceil T\underline{x}_i\rceil$. Therefore $\sum_{i\in\mathcal{N}}\lceil T\underline{x}_i\rceil=\sum_{i\in\mathcal{N}}\lceil Tx_i^*\rceil\leq K$. Therefore $\underline{\bm{x}}$ is a feasible solution to Problem~\eqref{prob:joint_ub_2}. Furthermore, it is easy to verify that $y=x_1=\min\{1/T,v_{\min}/(1+v_{\max})\}$, $x_i=0$ for all $i\neq 1$ and $x_0=1-x_1$ is a feasible solution, therefore the optimal value of Problem~\eqref{prob:joint_ub_2} is strictly postive. Therefore we have
\begin{equation*}
T\sum_{i\in\mathcal{N}}r_i\underline{x}_i=T(1+\epsilon')\sum_{i\in\mathcal{N}}r_ix_i^*>T\sum_{i\in\mathcal{N}}r_ix_i^*,
\end{equation*}
which contradicts with the assumption that $(\bm{x}^*,y^*)$ is optimal to Problem~\eqref{prob:joint_ub_2}. Therefore we have $y^*\geq y_{\min}$. Finally, since we have proven that there exists $i\in\mathcal{N}$ such that $x_i^*\geq y_{\min}$, we have $r_{\min}y_{\min}\leq R^*\leq r_{\max}$.

We define
\begin{equation*}
\hat{x}_0=\max\{x_0\in\texttt{Grid}_0:x_0\leq x_0^*\},\ \hat{y}=\max\{y\in\texttt{Grid}_y:(1+\delta)y\leq y^*\},\ \hat{r}=\max\{r\in\texttt{Grid}_r:(1+\delta)^4r\leq R^*\}.
\end{equation*}
Then we get $x_0^*/(1+\delta)\leq \hat{x}_0\leq x_0^*$, $y^*/(1+\delta)^2\leq y^*\leq y^*/(1+\delta)$, $R^*/(1+\delta)^5\leq \hat{r}\leq R^*/(1+\delta)^4$. We prove that under $(\hat{x}_0,\hat{y},\hat{r})$, the optimal value of Problem~\eqref{prob:joint_ub_6} is smaller than or equal to $K$. We define $\hat{x}_i=x_i^*\hat{y}/y^*$ for all $i\in\mathcal{N}$. Since $y^*/(1+\delta)^2\leq \hat{y}\leq y^*/(1+\delta)$, we have $x_i^*/(1+\delta)^2\leq\hat{x}_i\leq x_i^*/(1+\delta)$ for all $i\in\mathcal{N}$. Consider any $i\in\mathcal{N}$ such that $x_i^*>0$, we have $\alpha y^*\leq x_i^*\leq y^*$, thus $\alpha\hat{y}\leq \hat{x}_i\leq\hat{y}$. Since $\hat{x}_0/x_0^*\geq 1/(1+\delta)\geq \hat{y}/y^*$ and $x_i^*\leq v_ix_0^*$, we have $\hat{x}_i=x_i^*\hat{y}/y^*\leq v_ix_0^*\cdot \hat{x}_0/x_0^*=v_i\hat{x}_0$. Finally, we have $\hat{x}_i\leq x_i^*\leq 1-x_0^*\leq 1-\hat{x}_0$. Therefore we have $\alpha\hat{x}_0\leq\hat{x}_i\leq \min\{v_i\hat{x}_0,\hat{y},1-\hat{x}_0\}$. We define $\tilde{\bm{\ell}}$ as
\begin{equation*}
\tilde{\ell}_i=\max\{\ell\in\{0,1,\dots,L_i\}:\bar{x}_i^\ell\leq \hat{x}_i\},\ \forall i\in A(\hat{x}_0,\hat{y}).
\end{equation*}
Then we get $x_i^*/(1+\delta)^3\leq \bar{x}_i^{\tilde{\ell}_i}\leq x_i^*/(1+\delta)^2$. Therefore
\begin{equation*}
\sum_{i\in A(\hat{x}_0,\hat{y})}\dfrac{Mnr_i\bar{x}_i^{\tilde{\ell}_i}}{\hat{r}}\geq \dfrac{Mn}{(1+\delta)^3\hat{r}}\sum_{i\in\mathcal{N}}r_ix_i^*=\dfrac{MnR^*}{(1+\delta)^3\hat{r}}\geq \dfrac{MnR^*}{R^*/(1+\delta)}=(1+\delta)Mn\geq (M+1)n.
\end{equation*}
Here the first inequality is due to $x_i^*/(1+\delta)^3\leq \bar{x}_i^{\tilde{\ell}_i}\leq x_i^*/(1+\delta)^2$ for all $i\in A(\hat{x}_0,\hat{y})$, the second inequality is due to $R^*/(1+\delta)^5\leq \hat{r}\leq R^*/(1+\delta)^4$, and the third inequality is due to $M=\lceil1/\delta\rceil\geq 1/\delta$, thus $(1+\delta)M\geq M+1$. Therefore we have
\begin{equation*}
\sum_{i\in A(\hat{x}_0,\hat{y})}R_i^{\tilde{\ell}_i}=\sum_{i\in A(\hat{x}_0,\hat{y})}\Big\lfloor\dfrac{Mnr_i\bar{x}_i^{\tilde{\ell}_i}}{\hat{r}}\Big\rfloor\geq \sum_{i\in A(\hat{x}_0,\hat{y})}\Big(\dfrac{Mnr_i\bar{x}_i^{\tilde{\ell}_i}}{\hat{r}}-1\Big)\geq \sum_{i\in A(\hat{x}_0,\hat{y})}\dfrac{Mnr_i\bar{x}_i^{\tilde{\ell}_i}}{\hat{r}}-n\geq Mn.
\end{equation*}
Therefore $\tilde{\bm{\ell}}$ satisfies the second constraint in Problem~\eqref{prob:joint_ub_6}. To verify the first constraint in Problem~\eqref{prob:joint_ub_6}, we have
\begin{equation*}
\sum_{i\in A(\hat{x}_0,\hat{y})}\dfrac{Mn\bar{x}_i^{\tilde{\ell}_i}}{1-\hat{x}_0}=\dfrac{Mn}{1-\hat{x}_0}\sum_{i\in \mathcal{N}}x_i^{\tilde{\ell}_i}\leq \dfrac{Mn}{1-x_0^*}\sum_{i\in\mathcal{N}}x_i^*\leq \dfrac{Mn(1-x_0^*)}{1-x_0^*}=Mn.
\end{equation*}
Here the first constraint is because $\hat{x}_0\leq x_0^*$, and the second constraint is because by the second constraint in Problem~\eqref{prob:joint_ub_2}~we have $\sum_{i\in\mathcal{N}}x_i^*\leq 1-x_0^*$. Therefore
\begin{equation*}
\sum_{i\in A(\hat{x}_0,\hat{y})}\Big\lceil\dfrac{Mn\bar{x}_i^{\tilde{\ell}_i}}{1-\hat{x}_0}\Big\rceil\leq \sum_{i\in A(\hat{x}_0,\hat{y})}\Big(\dfrac{Mn\bar{x}_i^{\tilde{\ell}_i}}{1-\hat{x}_0}+1\Big)\leq n+\sum_{i\in A(\hat{x}_0,\hat{y})}\dfrac{Mn\bar{x}_i^{\tilde{\ell}_i}}{1-\hat{x}_0}\leq n+Mn=(M+1)n.
\end{equation*}
Therefore $\tilde{\bm{\ell}}$ satisfies the first constraint in Problem~\eqref{prob:joint_ub_6}. Finally, to verify that the objective value of $\tilde{\bm{\ell}}$ in Problem~\eqref{prob:joint_ub_6} is smaller than or equal to $K$, we have that
\begin{equation*}
\sum_{i\in A(\hat{x}_0,\hat{y})} C_i^{\tilde{\ell}_i}=\sum_{i\in A(\hat{x}_0,\hat{y})}\lceil T\bar{x}_i^{\tilde{\ell}_i}\rceil\leq \sum_{i\in \mathcal{N}} \lceil Tx_i^*\rceil\leq K.
\end{equation*}
Here the first inequality is because $\bar{x}_i^{\tilde{\ell}_i}\leq x_i^*$ for all $i\in\mathcal{N}$, and the second inequality is due to the last constraint in Problem~\eqref{prob:joint_ub_2}. Therefore we conclude that the optimal value of Problem~\eqref{prob:joint_ub_6} under $(\hat{x}_0,\hat{y},\hat{r})$ is smaller than or equal to $K$.

Next, we show that the solution returned by Algorithm~\ref{alg:fptas_joint_ub_new} is a $1/(1+\delta)^6$-approximation of Problem~\eqref{prob:joint_inventory_assortment_ub}. Since the objective value of the solution returned by Algorithm~\ref{alg:fptas_joint_ub_new} is greater than or equal to that of $\bm{x}(\hat{x}_0,\hat{y},\hat{r})$, it suffices to prove that $\sum_{i\in\mathcal{N}}r_ix_i(\hat{x}_0,\hat{y},\hat{r})\geq R^*/(1+\delta)^6$. Let $\hat{\bm{\ell}}$ be an optimal solution to Problem~\eqref{prob:joint_ub_6}. Since we have proven that the optimal value of Problem~\eqref{prob:joint_ub_6} under $(\hat{x}_0,\hat{y},\hat{r})$ is smaller than or equal to $K$, we have $x_i(\hat{x}_0,\hat{y},\hat{r})=\bar{x}_i^{\hat{\ell}_i}/(1+\delta)$ for all $i\in A(\hat{x}_0,\hat{y})$. Furthermore, by the second constraint in Problem~\eqref{prob:joint_ub_6}, we have
\begin{equation*}
\sum_{i\in A(\hat{x}_0,\hat{y})} \dfrac{Mnr_i\bar{x}_i^{\hat{\ell}_i}}{\hat{r}}\geq \sum_{i\in A(\hat{x}_0,\hat{y})}R_i^{\hat{\ell}_i}\geq Mn.
\end{equation*}
This implies $\sum_{i\in A(\hat{x}_0,\hat{y})}r_i\bar{x}_i^{\hat{\ell}_i}\geq \hat{r}\geq R^*/(1+\delta)^5$. Finally, by definition of $\bm{x}(\hat{x}_0,\hat{y},\hat{r})$ we have
\begin{equation*}
\sum_{i\in\mathcal{N}}r_ix_i(\hat{x}_0,\hat{y},\hat{r})=\dfrac{1}{1+\delta}\sum_{i\in A(\hat{x}_0,\hat{y})}r_i\bar{x}_i^{\hat{\ell}_i}\geq \dfrac{1}{(1+\delta)^6}\sum_{i\in A(\hat{x}_0,\hat{y})}r_ix_i^*=\dfrac{R^*}{(1+\delta)^6}.
\end{equation*}
Therefore Algorithm~\ref{alg:fptas_joint_ub_new}~returns a $1/(1+\delta)^6$-approximation to Problem~\eqref{prob:joint_inventory_assortment_ub}. By setting $\delta=(1/(1-\epsilon))^{1/6}-1$, the algorithm returns a $(1-\epsilon)$-approximation to Problem~\eqref{prob:joint_inventory_assortment_ub}.

Finally, we analyze the runtime of Algorithm~\ref{alg:fptas_joint_ub_new}. To solve Problem~\eqref{prob:joint_ub_6} via a dynamic program, we need to enumerate $O(n\cdot (M+1)n\cdot Mn)=O(n^3/\delta^2)=O(n^3/\epsilon^2)$ states, and for each state we need to enumerate $L_i$ actions. Furthermore, we have $L_i\leq O(\log_{1+\delta}(1/\alpha))=O(\log(1/\alpha)/\delta)=O(\log(1/\alpha)/\epsilon)$. Therefore the runtime of solving Problem~\eqref{prob:joint_ub_6} is upper bounded by $O(n^3\log(1/\alpha)/\epsilon^3)$. We also need to provide upper bounds for $|\texttt{Grid}_0|$, $|\texttt{Grid}_y|$, $|\texttt{Grid}_r|$. We have
\begin{equation*}
|\texttt{Grid}_0|=O(\log_{1+\delta}(1+nv_{\max}))=O\Big(\dfrac{\log(1+nv_{\max})}{\delta}\Big)=O\Big(\dfrac{\log(1+nv_{\max})}{\epsilon}\Big).
\end{equation*}
We also have
\begin{equation*}
y_{\min}=\min\Big\{\dfrac{1}{T},\dfrac{v_{\min}}{1+nv_{\max}}\Big\}\geq \dfrac{v_{\min}}{T(1+nv_{\max})}.
\end{equation*}
Therefore
\begin{align*}
&|\texttt{Grid}_y|=O(\log_{1+\delta}(1/y_{\min}))=O\Big(\dfrac{1}{\epsilon}\Big(\log T+\log\Big(\dfrac{1+nv_{\max}}{v_{\min}}\Big)\Big)\Big),\\
&|\texttt{Grid}_r|=O(\log_{1+\delta}(r_{\max}/r_{\min}y_{\min}))=O\Big(\dfrac{1}{\epsilon}\Big(\log T+\log\Big(\dfrac{r_{\max}}{r_{\min}}\Big)+\log\Big(\dfrac{1+nv_{\max}}{v_{\min}}\Big)\Big)\Big).
\end{align*}
Then the total runtime of the algorithm is upper bounded by
\begin{equation*}
O\Big(\dfrac{n^3}{\epsilon^6}\cdot\log\Big(\dfrac{1}{\alpha}\Big)\log(1+nv_{\max})\Big(\log T+\log\Big(\dfrac{1+nv_{\max}}{v_{\min}}\Big)\Big)\Big(\log T+\log\Big(\dfrac{r_{\max}}{r_{\min}}\Big)+\log\Big(\dfrac{1+nv_{\max}}{v_{\min}}\Big)\Big)\Big).
\end{equation*}

}

{ 
\section{Numerical Experiments on Synthetic Data}\label{sec:numerical}

In this section, we conduct numerical experiments to test the performance of our policy for the dynamic problem. Since we have established an algorithm that solves the static problem optimally in polynomial time, there is no need to test the performance of our algorithm for the static problem. Thus, we focus on the dynamic problem in the numerical experiments. We first describe the experimental setup of our instances. Then we present the settings of our policy as well as benchmark policies. Finally, we present the results of our policy and benchmark policies on the instances.

\noindent\underline{\bf Experimental Setup:} We use the following steps to generate our test problems. In all of our test problems, the number of products is set to be $n=40$. There are $T$ time periods in the selling horizon, where we vary $T$ in our computational experiments. The revenue $r_i$ for each product $i\in\mathcal{N}$ is sampled from a uniform distribution over $[0,10]$. To generate the preference weight of each product, we first sample $\tilde{v}_i$ from a uniform distribution over $[1,10]$ for each product $i\in\mathcal{N}$. Then we set $v_i=({(1-P_0)}/{P_0})\cdot\tilde{v}_i/\sum_{j\in\mathcal{N}}\tilde{v}_j$ for each $i\in\mathcal{N}$ for a given parameter $P_0$. In this case, we get $\sum_{i\in\mathcal{N}}v_i={(1-P_0)}/{P_0}$, and if we offer all $n$ products, the probability of no-purchase is equal to $1/(1+\sum_{i\in\mathcal{N}}v_i)=1/({(1-P_0)}/{P_0}+1)=P_0$. Thus the parameter $P_0$ controls the likelihood of customers leaving without a purchase, and we also vary the parameter $P_0$ in our experiments.

To generate the initial inventory for each product, we first compute the unconstrained optimal assortment $S^*=\arg\max_{S\subseteq\mathcal{N}}\sum_{i\in S}r_i\phi(i,S)$. If we offer the unconstrained optimal assortment $S^*$ in $3T/4$ time periods and offer all products in $T/4$ time periods, the expected demand for product $i$ is $\texttt{Demand}_i=3T\phi(i,S^*)/4+T\phi(i,\mathcal{N})/4$. We set the initial inventory of product $i$ as $c_i=\lceil\gamma\texttt{Demand}_i\rceil$. In the definition of $\texttt{Demand}_i$, we assume that the seller offers $S^*$ for $3T/4$ time periods because we would like $\texttt{Demand}_i$ to capture the expected demand of product $i$ when the unconstrained optimal assortment $S^*$ is offered for a majority of time periods. However, instead of assuming that the seller offers $S^*$ throughout the time horizon, we assume that the seller offers the entire universe of product in $T/4$ time periods because we would like to generate instances where every product has a nonzero initial inventory. The parameter $\gamma$ captures the ratio of the initial inventory of any product $i$ over its expected demand $\texttt{Demand}_i$, allowing us to vary the scarcity of supply, and we vary the parameter $\gamma$ in our experiments.

Varying $T\in\{2000,4000,8000,16000\}$, $P_0\in\{0.1,0.3\}$, $\gamma\in\{0.6,0.8\}$, we obtain $16$ parameter configurations. We generate a test problem for each parameter configuration, and for each test problem, we vary the balancing parameter $\alpha\in\{0.25,0.5,0.75\}$.

\noindent\underline{\textbf{Our Policy} \textsc{(Pol.)}\textbf{:}} In Algorithm \ref{alg:policy_1}, we solve the \ref{prob:inventory_upper_bound} problem using an FPTAS, but in practice we can also reformulate the \ref{prob:inventory_upper_bound} problem as an integer program and solve the problem efficiently using an integer programming solver. We use an integer programming solver to solve the \ref{prob:inventory_upper_bound} problem exactly and obtain its optimal solution $\bm{x}^*$. There are two main purposes for solving the \ref{prob:inventory_upper_bound} exactly: first, it enables us to obtain the exact optimal value of the \ref{prob:inventory_upper_bound} problem, which serves as a benchmark for the performance of our policy; second, it allows us to construct a policy from an optimal solution to the \ref{prob:inventory_upper_bound} instead of an approximate solution. After obtaining an optimal solution $\bm{x}^*$ to the \ref{prob:inventory_upper_bound} problem, we compute $\hat{\bm{x}}$ in the policy based on $\bm{x}^*$ using Algorithm \ref{alg:policy_1}, where the precision $\epsilon_2$ of the bisection search is set to be $10^{-3}$.

\noindent\underline{\bf Benchmark Policies:} We present two resolving heuristics with balanced market shares. The two heuristics first obtain an optimal solution $\bm{x}^*$ to~the \ref{prob:inventory_upper_bound} problem, and define \mbox{$\bar{S}=\{i\in\mathcal{N}:x^*_i>0\}$}. We denote $\tilde{\bm{x}}^{(t)}$ as the purchase probabilities for time period $t$. For $t=1$, we set $\tilde{\bm{x}}^{(1)}=\bm{x}^*$. For some preselected time period $t$ depending on the history of sales, the resolving heuristics resolve a variant of the \ref{prob:inventory_upper_bound} problem and obtain its optimal solution $\tilde{\bm{x}}^{(t)}$. The problem resolved by the resolving heuristics incorporates the history of sales and is referred to as \ref{prob:resolving} (the ``H" stands for history of sales). Its formal definition is given by

\begin{equation}\label{prob:resolving}
\tag{\texttt{BMS-H}}
\begin{aligned}
&\max_{\bm{x}}&&{\sum_{i\in\bar{S}}r_ix_i}\\
&\text{s.t.}&&{x_0+\sum_{i\in\bar{S}}x_i=1},\\
&&&{0\leq x_i\leq v_ix_0,\ \forall i\in\bar{S}},\\
&&&\sum_{s=1}^{t-1}\tilde{x}_i^{(s)}+(T-t+1)x_i\geq \alpha\cdot\max_{j\in\mathcal{N}}\sum_{s=1}^{t-1}\tilde{x}_j^{(s)}+(T-t+1)x_j,\ \forall i\in\bar{S},\\
&&&{(T-t+1)x_{i}\leq c_i-X_{i,t-1},\ \forall i\in\bar{S}.}
\end{aligned}
\end{equation}

Here we would like to emphasize that $\tilde{x}_i^{(s)}$ and $X_{i,t-1}$ are input parameters of \ref{prob:resolving} instead of decision variables. The first and second constraints guarantee that the purchase probabilities are valid under MNL. The third constraint guarantees that the market share balancing constraint is satisfied if the purchase probability is maintained throughout the remaining $T-t+1$ time periods. The last constraint guarantees that maintaining the same purchase probability throughout the remaining $T-t+1$ time periods does not violate the inventory constraints in expectation. If the policy does not resolve~\ref{prob:resolving}~at time period $t$, then the purchase probabilities for time period $t$ is set to be $\tilde{\bm{x}}^{(t)}=\tilde{\bm{x}}^{(t-1)}$. After defining $\tilde{x}^{(t)}$, we construct a distribution over assortments $\pi_t$ following Equation~\eqref{eq:sales_to_distribution} such that the purchase probability of product $i$ is $\tilde{x}_i$ for all $i\in\mathcal{N}$. With the distribution over assortments $\pi_t$, we sample an assortment $S$ with probability $\pi_t(S)$ and offer the assortment to customers.

We provide the details for the time periods in which the two resolving heuristics resolve~\ref{prob:resolving} respectively.

\noindent\underline{\textsc{Heuristic 1 (Hr.1):}} We resolve \ref{prob:resolving} if either $t=k\Big\lceil\sqrt{T}\Big\rceil$ for some $k\in\mathbb{Z}^+$, or some product runs out of inventory in the previous time period. We set the frequency of resolving to be every $\lceil\sqrt{T}\rceil$ time steps because we would like the time intervals between different resolving to increase sub-linearly in $T$ as $T$ increases. In this case, the resolving heuristic resolves more and the time intervals between different resolving also increase as $T$ gets large.

\noindent\underline{\textsc{Heuristic 2 (Hr.2):}} We resolve \ref{prob:resolving} only when some product runs out of inventory in the previous time period.

In Appendix~\ref{sec:prop:heuristic}, we prove that the definitions of \textsc{Heuristic 1} and \textsc{Heuristic 2} are valid, i.e., \ref{prob:resolving}~is always feasible when resolved, and both heuristics satisfy the market share balancing constraint as well as inventory constraints (see Proposition~\ref{prop:heuristic}~in Appendix \ref{sec:prop:heuristic}).

\begin{table}[t]\label{table:synthetic_data}
\centering
{\scriptsize
\begin{tabular}{|c|c|c|c c c|}
\hline
\multirow{2}{*}{\makecell{Params.\\$(T,P_0,\gamma,\alpha)$}}&\multirow{2}{*}{$\bar{c}$}&\multirow{2}{*}{$K$}&\multicolumn{3}{c|}{Total Exp. Rev.}\\
\cline{4-6}
~&~&~&\textsc{Pol.}&\textsc{Hr.1}&\textsc{Hr.2}\\
\hline
$(2000,0.1,0.6,0.25)$&16&20&0.935&0.763&0.701\\
$(2000,0.1,0.6,0.50)$&28&16&0.925&0.820&0.747\\
$(2000,0.1,0.6,0.75)$&46&11&0.944&0.855&0.820\\
$(2000,0.1,0.8,0.25)$&26&14&0.953&0.887&0.793\\
$(2000,0.1,0.8,0.50)$&45&12&0.947&0.894&0.837\\
$(2000,0.1,0.8,0.75)$&66&10&0.955&0.891&0.856\\
\hline
$(2000,0.3,0.6,0.25)$&15&22&0.929&0.771&0.658\\
$(2000,0.3,0.6,0.50)$&19&17&0.866&0.739&0.637\\
$(2000,0.3,0.6,0.75)$&25&16&0.923&0.768&0.721\\
$(2000,0.3,0.8,0.25)$&11&33&0.916&0.642&0.581\\
$(2000,0.3,0.8,0.50)$&22&24&0.931&0.747&0.690\\
$(2000,0.3,0.8,0.75)$&22&24&0.918&0.719&0.726\\
\hline
\hline
Avg.&&&0.929&0.791&0.731\\
\hline
\end{tabular}
~~~~~~~~~
\begin{tabular}{|c|c|c|c c c|}
\hline
\multirow{2}{*}{\makecell{Params.\\$(T,P_0,\gamma,\alpha)$}}&\multirow{2}{*}{$\bar{c}$}&\multirow{2}{*}{$K$}&\multicolumn{3}{c|}{Total Exp. Rev.}\\
\cline{4-6}
~&~&~&\textsc{Pol.}&\textsc{Hr.1}&\textsc{Hr.2}\\
\hline
$(4000,0.1,0.6,0.25)$&48&13&0.962&0.933&0.833\\
$(4000,0.1,0.6,0.50)$&75&11&0.959&0.916&0.858\\
$(4000,0.1,0.6,0.75)$&75&11&0.953&0.905&0.890\\
$(4000,0.1,0.8,0.25)$&72&11&0.972&0.961&0.877\\
$(4000,0.1,0.8,0.50)$&129&9&0.972&0.955&0.899\\
$(4000,0.1,0.8,0.75)$&165&8&0.971&0.945&0.916\\
\hline
$(4000,0.3,0.6,0.25)$&28&21&0.951&0.889&0.758\\
$(4000,0.3,0.6,0.50)$&37&19&0.931&0.821&0.772\\
$(4000,0.3,0.6,0.75)$&37&19&0.935&0.830&0.804\\
$(4000,0.3,0.8,0.25)$&30&21&0.945&0.868&0.754\\
$(4000,0.3,0.8,0.50)$&42&18&0.925&0.809&0.778\\
$(4000,0.3,0.8,0.75)$&42&18&0.938&0.829&0.828\\
\hline
\hline
Avg.&&&0.951&0.888&0.830\\
\hline
\end{tabular}

\vspace{3em}

\begin{tabular}{|c|c|c|c c c|}
\hline
\multirow{2}{*}{\makecell{Params.\\$(T,P_0,\gamma,\alpha)$}}&\multirow{2}{*}{$\bar{c}$}&\multirow{2}{*}{$K$}&\multicolumn{3}{c|}{Total Exp. Rev.}\\
\cline{4-6}
~&~&~&\textsc{Pol.}&\textsc{Hr.1}&\textsc{Hr.2}\\
\hline
$(8000,0.1,0.6,0.25)$&89&13&0.973&0.943&0.875\\
$(8000,0.1,0.6,0.50)$&157&11&0.964&0.940&0.898\\
$(8000,0.1,0.6,0.75)$&200&9&0.972&0.961&0.937\\
$(8000,0.1,0.8,0.25)$&96&15&0.970&0.969&0.892\\
$(8000,0.1,0.8,0.50)$&222&13&0.976&0.963&0.901\\
$(8000,0.1,0.8,0.75)$&222&13&0.974&0.941&0.912\\
\hline
$(8000,0.3,0.6,0.25)$&40&27&0.961&0.895&0.795\\
$(8000,0.3,0.6,0.50)$&78&21&0.952&0.881&0.819\\
$(8000,0.3,0.6,0.75)$&78&21&0.942&0.878&0.828\\
$(8000,0.3,0.8,0.25)$&53&23&0.951&0.902&0.824\\
$(8000,0.3,0.8,0.50)$&90&20&0.946&0.877&0.839\\
$(8000,0.3,0.8,0.75)$&152&13&0.969&0.940&0.891\\
\hline
\hline
Avg.&&&0.963&0.924&0.867\\
\hline
\end{tabular}
~~~~~~~~
\begin{tabular}{|c|c|c|c c c|}
\hline
\multirow{2}{*}{\makecell{Params.\\$(T,P_0,\gamma,\alpha)$}}&\multirow{2}{*}{$\bar{c}$}&\multirow{2}{*}{$K$}&\multicolumn{3}{c|}{Total Exp. Rev.}\\
\cline{4-6}
~&~&~&\textsc{Pol.}&\textsc{Hr.1}&\textsc{Hr.2}\\
\hline
$(16000,0.1,0.6,0.25)$&162&15&0.980&0.966&0.904\\
$(16000,0.1,0.6,0.50)$&265&13&0.978&0.961&0.926\\
$(16000,0.1,0.6,0.75)$&369&10&0.980&0.968&0.947\\
$(16000,0.1,0.8,0.25)$&329&12&0.985&0.989&0.947\\
$(16000,0.1,0.8,0.50)$&394&11&0.983&0.965&0.943\\
$(16000,0.1,0.8,0.75)$&594&9&0.985&0.981&0.953\\
\hline
$(16000,0.3,0.6,0.25)$&83&23&0.972&0.931&0.858\\
$(16000,0.3,0.6,0.50)$&160&17&0.973&0.947&0.890\\
$(16000,0.3,0.6,0.75)$&220&15&0.975&0.937&0.895\\
$(16000,0.3,0.8,0.25)$&151&19&0.979&0.976&0.898\\
$(16000,0.3,0.8,0.50)$&227&17&0.976&0.950&0.915\\
$(16000,0.3,0.8,0.75)$&378&12&0.981&0.962&0.925\\
\hline
\hline
Avg.&&&0.979&0.961&0.916\\
\hline
\end{tabular}
\vspace{1em}
}
\caption{Relative expected revenue of our policy and two benchmark policies under synthetic instances}\label{table:synthetic_data}
\end{table}

\begin{table}[t]\label{table:ratio_expected_sales}
\centering
{\scriptsize
\begin{tabular}{|c|c c c|}
\hline
\multirow{2}{*}{\makecell{Params.\\$(T,P_0,\gamma,\alpha)$}}&\multicolumn{3}{c|}{Min. Max. Ratio}\\
\cline{2-4}
~&\textsc{Pol.}&\textsc{Hr.1}&\textsc{Hr.2}\\
\hline
$(2000,0.1,0.6,0.25)$&0.246&0.290&0.254\\
$(2000,0.1,0.6,0.50)$&0.491&0.577&0.491\\
$(2000,0.1,0.6,0.75)$&0.739&0.820&0.735\\
$(2000,0.1,0.8,0.25)$&0.247&0.273&0.247\\
$(2000,0.1,0.8,0.50)$&0.500&0.532&0.501\\
$(2000,0.1,0.8,0.75)$&0.750&0.791&0.747\\
\hline
$(2000,0.3,0.6,0.25)$&0.276&0.343&0.293\\
$(2000,0.3,0.6,0.50)$&0.496&0.554&0.501\\
$(2000,0.3,0.6,0.75)$&0.743&0.842&0.720\\
$(2000,0.3,0.8,0.25)$&0.243&0.360&0.244\\
$(2000,0.3,0.8,0.50)$&0.490&0.627&0.486\\
$(2000,0.3,0.8,0.75)$&0.733&0.844&0.732\\
\hline
\end{tabular}
~~~~~~~~
\begin{tabular}{|c|c c c|}
\hline
\multirow{2}{*}{\makecell{Params.\\$(T,P_0,\gamma,\alpha)$}}&\multicolumn{3}{c|}{Min. Max. Ratio}\\
\cline{2-4}
~&\textsc{Pol.}&\textsc{Hr.1}&\textsc{Hr.2}\\
\hline
$(4000,0.1,0.6,0.25)$&0.250&0.280&0.255\\
$(4000,0.1,0.6,0.50)$&0.497&0.547&0.496\\
$(4000,0.1,0.6,0.75)$&0.746&0.807&0.735\\
$(4000,0.1,0.8,0.25)$&0.312&0.324&0.323\\
$(4000,0.1,0.8,0.50)$&0.529&0.562&0.538\\
$(4000,0.1,0.8,0.75)$&0.745&0.761&0.750\\
\hline
$(4000,0.3,0.6,0.25)$&0.270&0.317&0.278\\
$(4000,0.3,0.6,0.50)$&0.499&0.571&0.495\\
$(4000,0.3,0.6,0.75)$&0.735&0.805&0.741\\
$(4000,0.3,0.8,0.25)$&0.249&0.290&0.252\\
$(4000,0.3,0.8,0.50)$&0.495&0.566&0.493\\
$(4000,0.3,0.8,0.75)$&0.736&0.823&0.745\\
\hline
\end{tabular}

\vspace{3em}

\begin{tabular}{|c|c c c|}
\hline
\multirow{2}{*}{\makecell{Params.\\$(T,P_0,\gamma,\alpha)$}}&\multicolumn{3}{c|}{Min. Max. Ratio}\\
\cline{2-4}
~&\textsc{Pol.}&\textsc{Hr.1}&\textsc{Hr.2}\\
\hline
$(8000,0.1,0.6,0.25)$&0.251&0.273&0.250\\
$(8000,0.1,0.6,0.50)$&0.497&0.527&0.500\\
$(8000,0.1,0.6,0.75)$&0.747&0.767&0.744\\
$(8000,0.1,0.8,0.25)$&0.250&0.249&0.249\\
$(8000,0.1,0.8,0.50)$&0.500&0.531&0.502\\$(8000,0.1,0.8,0.75)$&0.745&0.776&0.749\\
\hline
$(8000,0.3,0.6,0.25)$&0.249&0.281&0.248\\
$(8000,0.3,0.6,0.50)$&0.498&0.541&0.499\\
$(8000,0.3,0.6,0.75)$&0.749&0.791&0.737\\
$(8000,0.3,0.8,0.25)$&0.250&0.265&0.251\\
$(8000,0.3,0.8,0.50)$&0.496&0.535&0.500\\
$(8000,0.3,0.8,0.75)$&0.747&0.783&0.750\\
\hline
\end{tabular}
~~~~~~~~
\begin{tabular}{|c|c c c|}
\hline
\multirow{2}{*}{\makecell{Params.\\$(T,P_0,\gamma,\alpha)$}}&\multicolumn{3}{c|}{Min. Max. Ratio}\\
\cline{2-4}
~&\textsc{Pol.}&\textsc{Hr.1}&\textsc{Hr.2}\\
\hline
$(16000,0.1,0.6,0.25)$&0.250&0.261&0.249\\
$(16000,0.1,0.6,0.50)$&0.498&0.519&0.499\\
$(16000,0.1,0.6,0.75)$&0.749&0.761&0.747\\
$(16000,0.1,0.8,0.25)$&0.362&0.368&0.366\\
$(16000,0.1,0.8,0.50)$&0.498&0.511&0.498\\
$(16000,0.1,0.8,0.75)$&0.749&0.757&0.749\\
\hline
$(16000,0.3,0.6,0.25)$&0.250&0.275&0.251\\
$(16000,0.3,0.6,0.50)$&0.501&0.514&0.497\\
$(16000,0.3,0.6,0.75)$&0.748&0.765&0.749\\
$(16000,0.3,0.8,0.25)$&0.307&0.315&0.309\\
$(16000,0.3,0.8,0.50)$&0.500&0.522&0.501\\
$(16000,0.3,0.8,0.75)$&0.780&0.811&0.778\\
\hline
\end{tabular}
\vspace{1em}
}
\caption{Ratio between the minimum over maximum nonzero mean sales of our policy and two benchmark policies under synthetic instances}
\label{table:ratio_expected_sales}
\end{table}

\noindent\underline{\bf Results:} We compute the total expected revenue of our policy as well as the two resolving heuristics using simulation, where for each instance, we use $400$ replicates to estimate the total expected revenue for each of the three policies. We normalize the total expected revenue obtained by the three policies using the optimal value of~the \ref{prob:inventory_upper_bound} problem. We define $\bar{c}=\min\{c_i:x_i^*>0\}$, and $K=|\{i\in\mathcal{N}:x_i^*>0\}|$, where $\bm{x}^*$ is an optimal solution to the \ref{prob:inventory_upper_bound} problem. The normalized expected revenue of all three policies as well as $\bar{c}$ and $K$ under each instance are provided in Table~\ref{table:synthetic_data}. We present $\bar{c}$ in Table~\ref{table:synthetic_data} to numerically validate our theoretical result that our policy is asymptotically optimal as initial inventories increase. We present $K$ to verify that our policy is practical and offers a large enough variety of products. By doing so we verify that our policy does not fall into an undesirable situation where a balanced market share is achieved by dropping most of the products and offering only a very small number of products.

When comparing the performance between the three policies, one can observe from Table~\ref{table:synthetic_data} that our policy outperforms both \textsc{Heuristic 1} and \textsc{Heuristic 2} by significant margins in most of the instances. One possible reason for this observation is that for these two heuristics, the purchase probabilities of a product can be affected not only by its own availability, but also by the availability of other products as well. Specifically, suppose under the resolving heuristics, product $i$ runs out of inventory at time period $t_0$, then after time period $t_0$ the seller cannot offer product $i$, thus the cumulative purchase probability of product $i$ will always stay at $\sum_{t=1}^{t_0}\tilde{x}_i^{(t)}$ afterwards. Due to the third set of constraints in~\ref{prob:resolving}, the cumulative purchase probability of any other product will be limited to at most $\sum_{t=1}^{t_0}\tilde{x}_i^{(t)}/\alpha$. If $\bar{S}$ is large, it is likely that one of the products runs out of inventory relatively early, resulting in a relatively small cumulative purchase probability of the product, and this could significantly limit the cumulative purchase probabilities of all other products. In contrast, with a set of appropriate purchase probabilities $\hat{\bm{x}}$, our policy is able to maintain a fixed purchase probability $\hat{x}_i$ for each product $i\in\mathcal{N}$ as long as the product itself has remaining inventory. In our policy, whether product $i$ can be offered only depends on whether product $i$ still has remaining inventory, and is independent from the availability of any other product. By removing the dependence of purchase probabilities on the availability of other products, our policy is able to achieve a better empirical performance than both \textsc{Heuristic 1} and \textsc{Heuristic 2}.

One can also observe from Table~\ref{table:synthetic_data}~that when either $T$ or $\gamma$ increases, $\bar{c}$ increases and the performance of all three policies, with respect to the upper bound on the total expected revenue, improves. In particular, the relative expected revenue of our policy increases as $T$ increases from $2000$ to $16000$. This is consistent with Theorem~\ref{thm:inventory_constant_1}, which states that our policy is asymptotically optimal as initial inventories increase.

To numerically verify that all three policies satisfy the market share balancing constraint, we also present the ratio of the minimum nonzero expected sales over the maximum nonzero expected sales in Table~\ref{table:ratio_expected_sales}. For all three policies, we compute the ratio using the expected sales estimated through $400$ independent simulations. From Table~\ref{table:ratio_expected_sales}~one can see that for \textsc{Heuristic 1}, the ratios of the minimum nonzero expected sales over the maximum expected sales are greater than $\alpha$ under all instances. For our policy \textsc{Heuristic 2}, the ratios under some instances are marginally smaller than $\alpha$, but this can be explained by the sampling error due to simulation. This suggests that all three policies satisfy the market share balancing constraint, which is consistent with Theorem~\ref{thm:inventory_constant_1}~and Proposition~\ref{prop:heuristic}.}

\section{Feasibility of Resolving Heuristics}\label{sec:prop:heuristic}

In this section, we prove that the definitions of \textsc{Heuristic 1} and \textsc{Heuristic 2} are valid, i.e., \ref{prob:resolving}~is always feasible when resolved, and both heuristics satisfy the market share balancing constraint as well as inventory constraints. 

\begin{proposition}\label{prop:heuristic}
For both \textsc{Heuristic 1} and \textsc{Heuristic 2}, \ref{prob:resolving}~is always feasible when resolved, and both heuristics are feasible to~\ref{prob:inventory}.
\end{proposition}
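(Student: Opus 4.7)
The plan is to prove Proposition~\ref{prop:heuristic} by establishing a pathwise invariant on the cumulative purchase probabilities, using it to certify feasibility of \ref{prob:resolving}, and then deducing the two constraints of \ref{prob:inventory}.

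First I would show by induction on resolving events that along every sample path, for every $t \in \{1,\dots,T+1\}$ and every $i \in \bar{S}$,
\begin{equation*}
\sum_{s=1}^{t-1} \tilde{x}_i^{(s)} \;\geq\; \alpha\,\max_{j \in \bar{S}} \sum_{s=1}^{t-1} \tilde{x}_j^{(s)}.
\end{equation*}
The base case is vacuous, and before the first resolving one has $\tilde{x}_i^{(s)} = x_i^*$, so the invariant reduces to the balancing constraint already enforced by the \ref{prob:inventory_upper_bound} problem on $\bm{x}^*$. For the inductive step, at a resolving time $t'$ write $a_i = \sum_{s<t'}\tilde{x}_i^{(s)}$ and $b_i = \tilde{x}_i^{(t')}$. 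The third constraint of \ref{prob:resolving} gives $a_i + (T-t'+1)b_i \geq \alpha[a_j + (T-t'+1)b_j]$ for all $i,j \in \bar{S}$, while the inductive hypothesis gives $a_i \geq \alpha a_j$. The function $\lambda \mapsto (a_i - \alpha a_j) + \lambda(b_i - \alpha b_j)$ is affine in $\lambda$ and non-negative at both $\lambda=0$ and $\lambda = T-t'+1$, hence non-negative on the full interval $[0,T-t'+1]$. Since the heuristic keeps $\tilde{x}_i^{(s)} = b_i$ between $t'$ and the next resolving, evaluating at $\lambda = t-t'$ yields the invariant at any such $t$.

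With this invariant in hand, feasibility of \ref{prob:resolving} at every resolving time is certified by the trivial candidate $x_i = 0$ for all $i \in \bar{S}$ and $x_0 = 1$: the first two constraints are immediate, the third collapses exactly to the invariant at time $t$, and the fourth reduces to $X_{i,t-1} \leq c_i$, which I would verify simultaneously with the inventory part of the main claim. Specifically, I would argue by induction along sample paths that $X_{i,t} \leq c_i$ for all $i,t$: if $X_{i,t-1} < c_i$ then $X_{i,t} \leq X_{i,t-1}+1 \leq c_i$ trivially, while if $X_{i,t-1} = c_i$ (occurring first at some $t_0$) then by the rules of both heuristics a resolving is triggered at $t_0+1$, and the fourth constraint of \ref{prob:resolving} forces $\tilde{x}_i^{(t_0+1)} = 0$; the distribution over assortments built via~\eqref{eq:sales_to_distribution} then places zero mass on assortments containing~$i$, so $X_{i,t} = c_i$ for all $t \geq t_0$ and $\tilde{x}_i^{(t)} = 0$ persists through subsequent resolvings.

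Finally, applying the invariant at $t = T+1$ gives $\sum_{s=1}^T \tilde{x}_i^{(s)} \geq \alpha \max_j \sum_{s=1}^T \tilde{x}_j^{(s)}$ almost surely; taking expectations, using $\mathbb{E}\bigl[\max_j(\cdot)\bigr] \geq \max_j \mathbb{E}[\cdot]$, and identifying $\mathbb{E}[X_{iT}^\pi] = \sum_s \mathbb{E}[\tilde{x}_i^{(s)}]$ via the tower property yields $\mathbb{E}[X_{iT}^\pi] \geq \alpha \max_j \mathbb{E}[X_{jT}^\pi]$ for $i \in \bar{S}$, while $\mathbb{E}[X_{iT}^\pi] = 0$ for $i \notin \bar{S}$. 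The main obstacle is the inductive step for the invariant: the third constraint of \ref{prob:resolving} couples the full residual horizon with the cumulative past in a single linear combination, so one must extract a statement that holds at every intermediate time rather than only at the resolving time. The affine interpolation trick handles this cleanly precisely because both heuristics freeze the purchase probabilities between resolvings, turning the expression into a linear function of the elapsed time with non-negative values at both endpoints.
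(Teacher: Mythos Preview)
Your proposal is correct and follows essentially the same approach as the paper: both establish the pathwise invariant $\sum_{s=1}^{t}\tilde{x}_i^{(s)}\geq\alpha\sum_{s=1}^{t}\tilde{x}_j^{(s)}$ by induction, use the identical affine-interpolation argument (non-negativity at $\lambda=0$ and $\lambda=T-t'+1$ implies non-negativity in between because the probabilities are frozen between resolvings), certify feasibility of \ref{prob:resolving} via the zero solution, and handle the inventory constraint by noting that a stock-out triggers a resolve that forces the corresponding coordinate to zero thereafter. The only cosmetic difference is that the paper inducts on time periods while you induct on resolving events, and the paper takes expectations of the pairwise inequality directly rather than passing through $\mathbb{E}[\max_j(\cdot)]\geq\max_j\mathbb{E}[\cdot]$.
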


\begin{proof}
We denote the set of all possible history of sales up till time period $t$ as $\mathcal{H}_t$. Recall that $\tilde{\bm{x}}^{(t)}$ gives the purchase probabilities at time period $t$. Since $\tilde{\bm{x}}^{(t)}$ is determined by the history of sales up till time period $t-1$, we consider $\tilde{\bm{x}}^{(t)}$ as a random variable in the history of sales space $\mathcal{H}_{t-1}$. As a generalization of both \textsc{Heuristic 1} and \textsc{Heuristic 2}, we consider any policy $\pi$ in the following form: the policy always resolves~\ref{prob:resolving}~whenever some product runs out of inventory exactly in the previous time period, but the policy may also resolve~\ref{prob:resolving}~at other time periods depending on the history of sales.

Recall that $\bar{S}=\{i\in\mathcal{N}:x_i^*>0\}$, where $\bm{x}^*$ is an optimal solution to the \ref{prob:inventory_upper_bound} problem. We first prove that in any of the aforementioned policies, for any $t\in\{1,2,\dots,T\}$, \ref{prob:resolving} is always feasible up till time period $t$, and for all $i,j\in\bar{S}$, the following inequality holds with probability one,
\begin{equation}\label{eq:balance_t}
\sum_{s=1}^t \tilde{x}_i^{(s)}\geq \alpha\sum_{s=1}^t\tilde{x}_j^{(s)}.
\end{equation}
We prove the result by induction. Letting $\bm{x}^*$ be an optimal solution to the~\ref{prob:inventory_upper_bound}~problem, since by definition $\tilde{\bm{x}}^{(1)}=\bm{x}^*$, it is easy to verify that when $t=1$, \ref{prob:resolving}~is always feasible, and by the feasibility of $\bm{x}^*$ we have $\tilde{\bm{x}}^{(1)}$ satisfies $\tilde{x}_i^{(1)}\geq \alpha\tilde{x}_j^{(1)}$ for all $i,j\in\bar{S}$, therefore the result holds for $t=1$. 

Suppose \ref{prob:resolving} is always feasible up till time step $t_0$, and \eqref{eq:balance_t} holds for all $t\leq t_0$ for some $t_0<T$. Consider any history of sales $\bm{H}_{t_0}\in\mathcal{H}_{t_0}$ up till time period $t_0$. Suppose under history of sales $\bm{H}_{t_0}$, the policy resolves~\ref{prob:resolving} at time period $t_0+1$. By the induction assumption we have that for all $i,j\in\bar{S}$, $\sum_{s=1}^{t_0}\tilde{x}_i^{(s)}\geq \alpha\sum_{s=1}^{t_0}\tilde{x}_j^{(s)}$. Therefore when resolving~\ref{prob:resolving}~at time period $t_0+1$, it is easy to verify that $x_i=0$ for all $i\in\bar{S}$, is feasible to~\ref{prob:resolving}, therefore~\ref{prob:resolving}~is always feasible up till time period $t_0+1$.

Then we prove that for all $i,j\in\bar{S}$, we have $\sum_{s=1}^{t_0+1}\tilde{x}_i^{(s)}\geq \alpha\sum_{s=1}^{t_0+1}\tilde{x}_j^{(s)}$. Consider any history of sales $\bm{H}_{t_0}\in\mathcal{H}_{t_0}$. We define $\tau$ as the last time period up till $t_0+1$ in which the policy resolves~\ref{prob:resolving} under history of sales $\bm{H}_{t_0}$. If the policy resolves~\ref{prob:resolving}~at time period $t_0+1$ then we define $\tau=t_0+1$. By the induction assumption we have that for all $i,j\in\bar{S}$
\begin{equation}\label{eq:resolving_1}
\sum_{s=1}^{\tau-1}\tilde{x}_i^{(s)}\geq \alpha\sum_{s=1}^{\tau-1}\tilde{x}_j^{(s)}.
\end{equation}
Since $\tilde{\bm{x}}^{(\tau)}$ is an optimal solution when resolving~\ref{prob:resolving}~at time period $\tau$, by the third set of constraints in~\ref{prob:resolving}~we get
\begin{equation}\label{eq:resolving_2}
\sum_{s=1}^{\tau-1}\tilde{x}_i^{(s)}+(T-\tau+1)\tilde{x}_i^{(\tau)}\geq \alpha\sum_{s=1}^{\tau-1}\tilde{x}_j^{(s)}+\alpha(T-\tau+1)\tilde{x}_j^{(\tau)}.
\end{equation}
By definition of $\tau$ we have that no resolving happens between time period $\tau+1$ and $t_0+1$, thus $\tilde{\bm{x}}_i^{(s)}=\tilde{\bm{x}}^{(\tau)}$ for all $\tau\leq s\leq t_0+1$. Therefore $\sum_{s=1}^{t_0+1}\tilde{x}_i^{(s)}=\sum_{s=1}^{\tau-1}\tilde{x}_i^{(s)}+(t_0-\tau+2)\tilde{x}_i^{\tau}$ for all $i\in\bar{S}$. Combining \eqref{eq:resolving_1} and \eqref{eq:resolving_2} we have that for all $i,j\in\bar{S}$,
\begin{align*}
&\sum_{s=1}^{t_0+1}\tilde{x}_i^{(s)}=\sum_{s=1}^{\tau-1}\tilde{x}_i^{(s)}+(t_0-\tau+2)\tilde{x}_i^{(\tau)}=\dfrac{T-t_0-1}{T-\tau+1}\sum_{s=1}^{\tau-1}\tilde{x}_i^{(s)}+\dfrac{t_0-\tau+2}{T-\tau+1}\Big(\sum_{s=1}^{\tau-1}\tilde{x}_i^{(s)}+(T-\tau+1)\tilde{x}_i^{(\tau)}\Big)\\
\geq&\dfrac{T-t_0-1}{T-\tau+1}\alpha\sum_{s=1}^{\tau-1}\tilde{x}_j^{(s)}+\dfrac{t_0-\tau+2}{T-\tau+1}\alpha\Big(\sum_{s=1}^{\tau-1}\tilde{x}_j^{(s)}+(T-\tau+1)\tilde{x}_j^{(\tau)}\Big)=\alpha\sum_{s=1}^{t_0+1}\tilde{x}_j^{(s)}.
\end{align*}
Here the inequality is due to $t_0<T$ and $\tau\leq t_0+1$, thus $T-t_0-1\geq 0$ and $t_0-\tau+2\geq 0$. Therefore we conclude that the result holds for $t=t_0+1$. By induction \ref{prob:resolving} is always feasible, and \eqref{eq:balance_t} holds for all $t\in\{1,2,\dots,T\}$. Therefore~\ref{prob:resolving}~is feasible whenever resolved. We further have $\sum_{t=1}^T\tilde{x}_i^{(t)}\geq \alpha\sum_{t=1}^T\tilde{x}_j^{(t)}$ holds with probability one for all $i,j\in\bar{S}$.

We claim that
\begin{equation*}
\mathbb{E}[X_{iT}^\pi]=\sum_{t=1}^T\mathbb{E}[\tilde{x}_i^{(t)}].
\end{equation*}
Consider any $t\in\{1,2,\dots,T\}$ and any $i\in\bar{S}$, we have that
\begin{equation*}
\mathbb{E}[X_{it}^\pi-X_{i,t-1}^\pi|\bm{H}_{t-1}]=\tilde{x}_i^{(t)}.
\end{equation*}
Taking the expectation over $\bm{H}_{t-1}$ we get
\begin{equation*}
\mathbb{E}[X_{it}^\pi-X_{i,t-1}^\pi]=\mathbb{E}[\tilde{x}_i^{(t)}].
\end{equation*}
Since $X_{i0}=0$ for all $i\in\bar{S}$, we get
\begin{equation*}
\mathbb{E}[X_{iT}^\pi]=\sum_{t=1}^T\mathbb{E}[X^\pi_{it}-X^\pi_{i,t-1}]=\sum_{t=1}^T\mathbb{E}[\tilde{x}_i^{(t)}].
\end{equation*}
We have proven that $\sum_{t=1}^T\tilde{x}_i^{(t)}\geq \alpha\sum_{t=1}^T\tilde{x}_j^{(t)}$ holds with probability one for all $i,j\in\bar{S}$. Thus for all $i,j\in\bar{S}$,
$$\mathbb{E}[X_{iT}^\pi]=\sum_{t=1}^T\mathbb{E}[\tilde{x}_i^{(t)}]\geq\alpha\sum_{t=1}^T\mathbb{E}[\tilde{x}_j^{(t)}]=\alpha\mathbb{E}[X_{jT}^\pi].$$
It is easy to verify that $\mathbb{E}[X_{iT}^\pi]=0$ if $i\notin\bar{S}$, therefore for any $i\in\mathcal{N}$, we get
\begin{equation*}
\mathbb{E}[X_{iT}^\pi]\in\{0\}\cup\Big[\alpha\cdot\max_{j\in\mathcal{N}}\mathbb{E}[X_{jT}^\pi],\infty\Big).
\end{equation*}
Furthermore, as soon as any product $i\in\bar{S}$ runs out of inventory, we resolve~\ref{prob:resolving}, and the purchase probability for product $i$ is always zero afterwards. Therefore $X_{it}^\pi\leq c_i$ with probability one. Then we conclude that the policy is feasible to~\ref{prob:inventory}. In particular, both \textsc{Heuristic 1} and \textsc{Heuristic 2} are well-defined policies and feasible to~\ref{prob:inventory}.
\end{proof}

\end{appendices}

\end{document}